\definecolor{DarkGreen}{rgb}{0.1,0.5,0.1}
\definecolor{DarkRed}{rgb}{0.5,0.1,0.1}
\definecolor{DarkBlue}{rgb}{0.1,0.1,0.5}
\tikzstyle{startstop} = [rectangle , text width=3cm , rounded corners, minimum width=3cm, minimum height=1cm, text centered, draw=black, fill=red!30]
\tikzstyle{arrow} = [thick, ->,>=stealth]
\newtheorem{theorem}{Theorem}[section]
\newtheorem*{theorem*}{Theorem}
\newtheorem{lemma}[theorem]{Lemma}
\newtheorem{proposition}[theorem]{Proposition}
\newtheorem{corollary}[theorem]{Corollary}
\newtheorem{conjecture}[theorem]{Conjecture}
\newtheorem{definition}[theorem]{Definition}
\newtheorem{remark}[theorem]{Remark}
\newtheorem{claim}[theorem]{Claim}
\newtheorem{example}[theorem]{Example}
\newtheorem{fact}[theorem]{Fact}
\newcommand\orange[1]{{\color{orange} #1}}
\newcommand\green[1]{{\color{green} #1}}
\newcommand\magenta[1]{{\color{magenta} #1}}
\newcommand\black[1]{{\color{black} #1}}
\newcommand{\ma}{\mathcal}
\newcommand{\s}{\subseteq}
\renewcommand{\epsilon}{\varepsilon}
\newcommand{\eps}{\epsilon}
\newcommand{\ex}{{\rm{E}}}
\newcommand{\wt}{{\rm wt}}
\newcommand{\N}{{\mathbb{N}}}
\newcommand{\F}{{\mathbb{F}}}
\DeclareMathOperator{\poly}{poly}
\newcommand\one{\textbf{1}}
\newcommand\ind[1]{^{(#1)}}
\newcommand{\cC}{\mathcal{C}}
\newcommand{\abs}[1]{\left\vert#1\right\vert}
\begin{document}
\title{Improved List-Decodability and List-Recoverability of Reed--Solomon Codes via Tree Packings
%Reed--Solomon Codes, Disjoint Tree Packings and the Nonsingularity of Intersection Matrices
\footnote{An extended abstract will appear in the Proceedings of the 62nd IEEE Annual Symposium on Foundations of Computer
Science (FOCS'21).}}

\author{Zeyu Guo\footnote{Department of Computer Science, UT Austin. Email: zguotcs@gmail.com}
\and
Ray Li\footnote{Department of Computer Science, Stanford University. Email: rayyli@cs.stanford.edu}
\and
Chong Shangguan\footnote{Research Center for Mathematics and Interdisciplinary Sciences, Shandong University, Qingdao 266237, China, and Frontiers Science Center for Nonlinear Expectations, Ministry of Education, Qingdao 266237, China. Email: theoreming@163.com}
\and
Itzhak Tamo\footnote{Department of Electrical Engineering - Systems, Tel Aviv University. Email: zactamo@gmail.com}
%Tel Aviv 6997801, Israel.
\and
Mary Wootters\footnote{Departments of Computer Science and Electrical Engineering, Stanford University.  Email: marykw@stanford.edu}
}
\date{}
\maketitle
\thispagestyle{empty}
\begin{abstract}
    This paper shows that there exist Reed--Solomon (RS) codes, over \black{exponentially} large finite fields \black{in the code length}, that are combinatorially list-decodable well beyond the Johnson radius, in fact almost achieving the list-decoding capacity.
    In particular, we show that for any $\epsilon\in (0,1]$ there exist RS codes with rate $\Omega(\frac{\epsilon}{\log(1/\epsilon)+1})$ that are list-decodable from radius of $1-\epsilon$. We generalize this result to list-recovery, showing that there exist $(1 - \epsilon, \ell, O(\ell/\epsilon))$-list-recoverable RS codes with rate $\Omega\left( \frac{\epsilon}{\sqrt{\ell} (\log(1/\epsilon)+1)} \right)$. Along the way we use our techniques to  give a new proof  of a  result of Blackburn on optimal linear perfect hash matrices, and strengthen it to obtain a construction of strongly perfect hash matrices.

    To derive the results in this paper we show a surprising connection of the above problems to graph theory, and in particular to the  tree packing theorem of Nash-Williams and Tutte. We also state a new conjecture that generalizes the tree-packing theorem to hypergraphs, and show that if this conjecture holds, then there would exist RS codes that are \em optimally \em (non-asymptotically) list-decodable.

\par\textbf{Keywords:} Reed--Solomon codes, Nash-Williams--Tutte Theorem, Johnson radius, list decoding, list recovery, perfect hash matrix

\par\textbf{MSC2010 classification:} 05C05, 05C65, 11T71, 94B05, 94B25

\end{abstract}

\newpage

\tableofcontents
\thispagestyle{empty}

\newpage
\setcounter{page}{1}
\section{Introduction}

Reed--Solomon (RS) codes are a classical family of error correcting codes, ubiquitous in both theory and practice.  To define an  RS code, let $\F_q$ be the finite field of size $q$, and let $1 \leq k < n \leq q$.
Fix $n$ distinct evaluation points $\alpha_1, \alpha_2, \ldots, \alpha_n \in \F_q$.
The \em $[n,k]$-Reed--Solomon code \em  over $\F_q$ with evaluation points $\alpha_1, \ldots, \alpha_n$ is defined as the set
\[ \left\{ \big(f(\alpha_1), \ldots, f(\alpha_n) \big)\,:\, f \in \F_q[x], \ \deg(f) < k \right\}. \]

RS codes attain the optimal trade-off between \em rate \em and \em distance. \em
The rate of a code $\mathcal{C} \subseteq \F_q^n$ is defined as $R = \log_q|\cC|/n$.  The rate is a number between $0$ and $1$, and the closer to $1$ the better.
The (relative) distance of a code $\cC \subseteq \F_q^n$ is defined to be $\delta(\cC) = \min_{c \neq c' \in \cC} d(c,c')$, where $d(c,c') = | \{ i \in [n] \, :\, c_i \neq c_i' \} |/n$ is relative Hamming distance.  Again, the relative distance is a number between $0$ and $1$, and the closer to $1$ the better. An $[n,k]$-RS code has rate $k/n$ and distance $(n-k+1)/n$, which is the best-possible trade-off, according to the Singleton bound \cite{singletong-bound}.

Because RS codes attain this optimal trade-off (and also because they admit efficient algorithms), they have been well-studied since their introduction in the 1960's~\cite{RS-codes}.  However, perhaps surprisingly, there is still much about them that we do not know.
One notable example is their (combinatorial)\footnote{Throughout this paper, we will study \em combinatorial \em (rather than \em algorithmic\em) list-decodability.}  \em list-decodability \em and more generally their \em list-recoverability. \em  We discuss list-decodability first, and discuss list-recoverability after that.

\paragraph{List-Decodability of RS Codes.}
List-decodability can be seen as a generalization of distance.  For $\rho \in (0,1)$ and $L \geq 1$, we say that a code $\cC \subseteq \F_q^n$ is $(\rho, L)$-list-decodable if for any $y \in \F_q^n$,
\[ |\{ c \in \cC \,:\, d(c,y) \leq \rho \}| \leq L. \]
In particular, $(\rho,1)$-list-decodability is the same as having distance greater than $2\rho$.
List-decodability was introduced by Elias and Wozencraft in the 1950's~\cite{elias, wozencraft}. By now it is an important primitive in both coding theory and theoretical computer science more broadly. In general,  larger \em list sizes \em (the parameter $L$) allow for a larger \em list-decoding radius \em (the parameter $\rho$). In this work, we will be interested in the case when $\rho = 1 - \eps$ is large.

The list-decodability of RS codes is of interest for several reasons.  First, both list-decodability and RS codes are central notions in coding theory, and the authors believe that question is interesting in its own right.  Moreover, the list-decodability of RS codes has found applications in complexity theory and pseudorandomness~\cite{cai1999hardness, STV01,LP20}.

Until recently, the best bounds available on the list-decodability of RS codes were bounds that hold generically for any code.  The \em Johnson bound \em states that any code with minimum relative distance $\delta$ is $(1 - \sqrt{1 - \delta}, qn^2 \delta)$-list-decodable over an alphabet of size $q$ (\cite{johnson1962new}, see also \cite[Theorem~7.3.3]{guruswami2019essential}).
This implies that, for any $\eps\in (0,1]$, there are RS codes that are list-decodable up to radius $1 - \eps$ (with polynomial list sizes)  that have rate $\Omega(\eps^2)$.
The celebrated Guruswami--Sudan algorithm~\cite{gurus} gives an efficient algorithm to list-decode RS codes up to the Johnson bound, but it breaks down at this point.
Meanwhile, the \em list-decoding capacity theorem \em implies that  no code (and in particular, no RS code) that is list-decodable up to radius $1 - \eps$ can have rate bounded above $\eps$, unless the list size is exponential.

There have been several works over the past decade aimed at closing the
gap between the Johnson bound (rate $\Theta(\eps^2)$) and the list-decoding capacity theorem (rate $\Theta(\eps)$).
On the negative side, it is known that \em some \em RS codes (that is, some way of choosing the evaluation points $\alpha_1, \ldots, \alpha_n$), are not list-decodable substantially beyond the Johnson bound~\cite{Ben-Sasson}.
On the positive side, Rudra and Wootters~\cite{Rudra-Wootters} showed that a random choice of evaluation points will, with high probability, yield a code that is list-decodable up to radius $1 - \eps$ with rate $O\left( \frac{ \eps }{\log^5(1/\eps) \log q} \right)$.
Unfortunately, while the dependence on $\eps$ in the rate is nearly optimal (the ``correct'' dependence should be linear in $\eps$, according to the list-decoding capacity theorem),
the $\log q$ term in the denominator means that the rate necessarily goes to zero as $n$ grows, as we must have $q \geq n$ for RS codes.
Working in a different parameter regime, Shangguan and Tamo showed that over a large alphabet, there exist RS codes of rate larger than $1/9$ that can also be list-decoded beyond the Johnson bound (and in fact, optimally)~\cite{shangguan2019combinatorial0}.  However, this result only holds for small list sizes ($L=2,3$), and in particular, for such small list sizes one cannot hope to list-decode up to a radius $1 - \eps$ that approaches $1$.
Thus, there was still a substantial gap between capacity and the best known trade-offs for list-decoding RS codes.

%\mkw{Front-loaded some list-recovery stuff here:}

\paragraph{List-Recoverability of RS Codes.}
The gap between capacity and the best known trade-offs for RS codes is even more pronounced for \em list recovery, \em a generalization of list decoding.
We say that a code $\cC \subseteq \F_q^n$ is $(\rho, \ell, L)$-list-recoverable if for any $S_1, S_2, \ldots, S_n \subseteq \F_q$ with $|S_i| = \ell$,
\[ |\left\{ c \in \cC \,:\, d(c, S_1 \times S_2 \times \cdots \times S_n) \leq \rho  \right\} | \leq L. \]
Here, we extend the definition of Hamming distance to sets by denoting
\[ d(c, S_1 \times\cdots \times S_n) = \frac{1}{n} |\left\{ i \in [n] \,:\, c_i \not\in S_i \right\}|.\]
The parameter $\ell$ is called the \em input list size. \em
List-decoding is the special case of list-recovery for $\ell=1$.  List-recovery first arose in the context of list-decoding (for example, the Guruswami--Sudan algorithm mentioned above is in fact a list-recovery algorithm), but has since found applications beyond that, for example in pseudorandomness \cite{GUV09} and algorithm design \cite{doron2020high}.

Both the Johnson bound and the list-decoding capacity theorem have analogs for list-recovery.  The list-recovery Johnson bound~\cite{GS01} implies that there are RS codes of rate $\Omega(\eps^2/\ell)$ that are list-recoverable up to radius $1 - \eps$ with input list size $\ell$ and polynomial output list size.
However, the list-recovery capacity theorem implies that there are codes of rate $\Omega(\eps)$ (with \em no \em dependence on $\ell$) that achieve the same guarantee, provided that the alphabet size $q$ is sufficiently large.

Thus the gap for list-recovery (between rate $\Theta(\eps^2/\ell)$ and $\Theta(\eps)$) is even larger than that for list-decoding, and in particular the dependence on $\ell$ becomes important.  To the best of our knowledge, before our work there were \em no \em results known for RS codes that established list-recovery up to arbitrarily large radius $1 - \eps$ with a better dependence on $\ell$ than $1/\ell$.

\paragraph{Motivating question.}
Given this state of affairs, our motivating question is whether or not RS codes can be list-decoded or list-recovered up to radius $1 - \eps$ with rates $\Omega(\eps)$ (in particular, with a linear dependence on $\eps$ and no dependence on the alphabet size $q$ or the input list size $\ell$).  As outlined below, we nearly resolve this question for list-decoding and make substantial progress for list-recovery.

%\mkw{Updated Zachi's paragraph as per our email conversation, and mentioned Chong and Zachi's new result---I will mention it again in more detail in Section~\ref{sec:related}.}

\paragraph{Subsequent work.}
After this paper first appeared, and
inspired by the techniques in this paper and in \cite{shangguan2019combinatorial0}, Ferber, Kwan, and Sauermann \cite{ferber2020} showed that there exist $(1-\epsilon,O(1/\epsilon))$-list-decodable RS codes with rate $\Omega(\epsilon)$ over a field size polynomial in the block length, improving our result for list-decoding. Goldberg, Shangguan, and Tamo further improved the rate of \cite{ferber2020} by showing the existence of $(1-\epsilon,O(1/\epsilon))$-list-decodable RS codes with rate approaching $\frac{\epsilon}{2-\epsilon}$ \cite{goldberg2021listdecoding}.
See Section \ref{sec:related} for more details.

\textcolor{black}{
In a recent breakthrough \cite{BGM22}, Brakensiek,  Gopi, and Makam proved that for any $R>0$ and list size $L$, there exist RS codes of rate $R$ that are $(\frac{L}{L+1}(1-R), L)$-list-decodable over exponentially large fields, confirming a conjecture of Shangguan and Tamo \cite{shangguan2019combinatorial0}.
By choosing $L=\Theta(1/\epsilon)$, this result implies the existence of $(1-R-\epsilon, O(1/\epsilon))$-list-decodable RS codes of rate $R$ over exponentially large fields and subsumes our result for list-decoding.
The techniques of \cite{BGM22} are very different from ours, and uses the resolution of the \emph{GM-MDS conjecture} \cite{dau2014existence}
proved independently by Lovett \cite{lovett2018mds} and Yildiz and Hassibi \cite{yildiz2019optimum}.
Nevertheless,  our proof establishes a novel connection between list decodability and (hyper)graph-theoretic techniques, including the Nash--William--Tutte theorem and its potential generalizations, which may have further applications.
}

Lastly, we would like to mention that based on a work of Guo and Zhang \cite{guo2023randomly}, Alrabiah, Guruswami, and Li \cite{alrabiah2023randomly} eventually showed that
there exist RS codes that achieve list-decoding capacity with essentially optimal field size, namely, there are $(1-R-\epsilon, O(1/\epsilon))$-list-decodable RS codes of rate $R$ over finite fields of order $O(n)$.

\subsection{Contributions}
Our main result establishes the list-recoverability (and in particular, the list-decodability), of RS codes up to radius $1 - \eps$, representing a significant improvement over previous work.
Our techniques build on the approach of~\cite{shangguan2019combinatorial0}; the main new technical contribution is a novel connection between list-decoding RS codes and the Nash-Williams--Tutte theorem in graph theory, which may be of independent interest.
We outline our contributions below.

\paragraph{Existence of RS codes that are near-optimally list-decodable.}
Our main theorem for list-decoding is as follows.
\begin{theorem}[RS codes with near-optimal list-decoding]\label{thm:main-LD}
There is a constant $c \geq 1$ so that the following statement holds.
For any $\eps \in (0,1]$ and any sufficiently large $n$, there exist RS codes of rate $R \geq \frac{ \eps}{c (\log(1/\eps) + 1)}$ over a large enough finite field (as an \black{exponential} function of $n$ and $\eps^{-1}$), that are $(1 - \eps, c/\eps)$-list-decodable.
\end{theorem}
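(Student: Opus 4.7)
The plan is to follow the probabilistic-method paradigm of Shangguan--Tamo: set up a combinatorial characterization of list-decoding failure, count the evaluation-point tuples that realize such failures, and show this count is strictly less than $q^n$, so that a good tuple exists over a large enough field. Suppose the RS code with evaluation points $(\alpha_1,\ldots,\alpha_n)$ is not $(1-\epsilon,L)$-list-decodable for $L = c/\epsilon$; then there exist distinct polynomials $f_0,\ldots,f_L \in \F_q[x]_{<k}$ and $y\in\F_q^n$ such that each $f_j$ agrees with $y$ on some $A_j\s[n]$ with $|A_j|\geq \epsilon n$. Subtracting $f_0$ and setting $g_j=f_j-f_0$, we may replace $y$ by the codeword of $f_0$, so we are looking at $L$ distinct nonzero polynomials of degree $<k$ with $g_j(\alpha_i)=0$ for all $i\in A_j$. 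For each position $i\in [n]$, let $T_i=\{j\in[L] : i\in A_j\}$; then $\sum_i |T_i|\geq L\epsilon n$, and since distinct $g_j,g_{j'}$ share fewer than $k$ roots, $|\{i : \{j,j'\}\s T_i\}|<k$ for each pair.

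Next I would package the sets $T_i$ as (hyper)edges of a multi-hypergraph $H$ on vertex set $[L]$, and invoke the hypergraph tree-packing theorem (the generalization of Nash-Williams--Tutte promised in the excerpt) to extract many edge-disjoint spanning structures inside $H$. Each spanning structure furnishes a system of $L-1$ algebraic incidences among the $\alpha_i$'s and the $g_j$'s; crucially, along a spanning tree the identity $g_j(\alpha_i)=0$ (with the small-intersection condition linking consecutive polynomials) forces the relevant $\alpha_i$'s to lie in a set of size $O(k)$ once the $g_j$'s are fixed. Summing over the number of choices for the $L$ polynomials (at most $q^{kL}$), for the hypergraph type of $\{T_i\}$, for the small locations outside the spanning structures, and using the disjointness of the packed spanning substructures to get a product bound of roughly $k^{n - O(L/\epsilon)}$, one obtains a bound on bad tuples of the form
\[
N_{\mathrm{bad}} \;\leq\; q^{kL}\cdot k^{n}\cdot \bigl(\text{small combinatorial factor}\bigr).
\]
Requiring $N_{\mathrm{bad}}<q^n$ translates into a lower bound on $k/n$, and a careful bookkeeping should yield $R=k/n=\Omega(\epsilon/\log(1/\epsilon))$.

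The step I expect to be hardest is tuning the counting so that the rate comes out as $\Omega(\epsilon/\log(1/\epsilon))$ rather than the weaker Johnson-type $\Omega(\epsilon^2)$. This is where the hypergraph tree packing must really earn its keep: a naive graph tree packing (using only pairs $\{j,j'\}\s T_i$) loses too much, because when $L\epsilon$ is large most sets $T_i$ have size $\gg 2$, and a typical hyperedge $T_i$ of size $t$ should contribute $t-1$ ``savings'' toward the packing, not just $1$. Extracting exactly this savings calls for the hypergraph generalization of Nash-Williams--Tutte; combined with a dyadic bucketing of positions $i$ by $|T_i|$ (which is where I expect the $\log(1/\epsilon)$ loss to appear), one can balance the contributions of sparse and dense hyperedges. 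Once the counting lemma is in place, the remainder is a clean union bound and the choice of a sufficiently large $q$, yielding the claimed rate with list size $c/\epsilon$.
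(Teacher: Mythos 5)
Your high-level architecture (counterexample $\to$ hypergraph of agreement patterns $T_i$ $\to$ tree packing $\to$ union bound over a huge field) matches the paper's, and you correctly locate both the quantitative obstruction (a pairwise count only gives rate $\Omega(\eps^2)$; a hyperedge of size $t_i$ must contribute $t_i-1$ ``savings'') and the source of the $\log(1/\eps)$ loss (dyadic bucketing by $|T_i|$, which is exactly Lemma~\ref{lem_card}). But two steps in the middle are genuinely broken. First, the reduction ``subtract $f_0$ and replace $y$ by the codeword of $f_0$'' is invalid: $g_j=f_j-f_0$ vanishes on $A_j\cap A_0$, not on $A_j$, and for $\eps<1/2$ these intersections can be empty, so you cannot assume the $g_j$ vanish on sets of size $\eps n$. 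The paper never makes this reduction; it retains all $\binom{t}{2}$ pairwise agreement sets $I_i\cap I_j$ and ties the differences $f_i-f_j$ together through the cycle-space consistency relations $f_{ij}+f_{js}-f_{is}=0$, encoded by the block $\mathcal{B}_t\otimes\mathcal{I}_k$ of the intersection matrix.

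Second, and more fundamentally, the counting scheme as you organize it cannot deliver the savings you need. Once all the polynomials are fixed (cost $q^{k(L+1)}$), a constrained position $\alpha_i$ ranges over at most $k-1$ values instead of $q$ \emph{regardless of} $|T_i|$ --- a single factor-of-$q$ saving per constrained position, never $q^{|T_i|-1}$. Since the hypothesis only forces about $\eps n$ constrained positions, you land back at $k(L+1)\lesssim \eps n$, i.e.\ rate $O(\eps^2)$; the tree packing has nowhere to act in a count structured this way. The paper's mechanism is different in kind: the tree packing is used to exhibit a monomial with nonvanishing coefficient in $\det(M_{k,(I_1,\dots,I_t)})$ (each spanning tree selects a transversal of entries $x_s^i$, and Claim~\ref{claim-property-of-B_t} guarantees the complementary minor of $\mathcal{B}_t$ is nonsingular), so that the agreement constraints plus the cycle constraints force $f_i=f_j$, a contradiction; badness of $\alpha$ then means $\det(M)(\alpha)=0$, and the DeMillo--Lipton--Schwartz--Zippel lemma bounds the bad $\alpha$'s by $O(\deg)\cdot q^{n-1}$ per matrix. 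Your sketch is missing this linear-algebraic step entirely, and I do not see how to complete the direct count without it. Finally, the clean hypergraph Nash-Williams--Tutte theorem you propose to ``invoke'' is only conjectured in the paper (Conjecture~\ref{conj:weak} with $C=1$); what is actually proved is a version with $C=O(\log t)$ (Theorem~\ref{thm:conj-weak}), and the paper's primary proof of Theorem~\ref{thm:main-LD} sidesteps hypergraphs altogether by randomly subsampling $J\subseteq[t]$ and pruning so that all triple intersections vanish (Lemma~\ref{lem_subset}), after which the ordinary graph Nash-Williams--Tutte theorem suffices (Theorem~\ref{theorem-weaker}).
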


\begin{comment}
\begin{theorem}[Explicit construction of Theorem \ref{thm:main-LD}]\label{thm:exp-con}
For any prime $p$ and $q:=p^{k^n}$, let $\alpha=(\alpha_1,\ldots,\alpha_n)\in \mathbb{F}_q^n$ be a vector such that for $1\leq i\leq n$
\begin{equation}
\label{Eq-0}
\mathbb{F}_{p^{k^{i-1}}}(\alpha_i)=\mathbb{F}_{p^{k^i}},
\end{equation}
 i.e., $\alpha_i$ generates a degree $k$ extension over the field $\mathbb{F}_{p^{k^{i-1}}}$.  Then, the $[n,k]$-RS code defined by $\alpha$ satisfies  is $(\frac{2}{3}(1-R),2)$ list-decodable.
\end{theorem}
\end{comment}

As discussed above, Theorem~\ref{thm:main-LD} is stronger than the result of Rudra and Wootters~\cite{Rudra-Wootters}, in that the result of \cite{Rudra-Wootters} requires that the rate tend to zero as $n$ grows, while ours holds for constant-rate codes.
On the other hand, our result requires the field size $q$ to be quite large (see Table~\ref{tab:litreview}), which \cite{Rudra-Wootters} did not require.

Our result also differs from the result of Shangguan and Tamo~\cite{shangguan2019combinatorial0} discussed above.  Because that work focuses on small list sizes, it does not apply to list-decoding radii approaching $1$.  In contrast, we are able to list-decode up to radius $1 - \eps$.  We note that \cite{shangguan2019combinatorial0} is able to show that RS codes are exactly optimal, while we are off by logarithmic factors.  Both our work and that of \cite{shangguan2019combinatorial0} require \black{exponentially} large field sizes.

\paragraph{Generalization to list-recovery.}
Theorem~\ref{thm:main-LD} follows from a more general result about list-recovery.
Our main result is the following (see Theorem~\ref{list-recover-main} for a more detailed version).

\begin{theorem}[RS codes with list-recovery beyond the Johnson bound]\label{thm:main}
 There is a constant $c\ge 1$ such that the following statement holds.
 For any $\epsilon\in(0,1]$, any positive integer $\ell$, and any sufficiently large $n$, there exist RS codes with rate $R\ge \frac{\varepsilon}{c\sqrt{\ell}(\log(1/\epsilon)+1)}$ over a large enough finite field (as an \black{exponential} function of $n$, $\epsilon^{-1}$, and $\ell$), that are $(1 - \eps, \ell, c\ell/ \eps)$-list-recoverable.
\end{theorem}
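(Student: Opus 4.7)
I would argue by contradiction: assume polynomials $f_0, \ldots, f_L$ of degree $< k$ and sets $S_1, \ldots, S_n \s \F_q$ of size $\ell$ witness a list-recovery failure, where $L + 1 = c\ell/\epsilon$. The key algebraic input is that for each $j \neq j'$, the difference $f_j - f_{j'}$ has degree less than $k$, so the relation $f_j(\alpha_i) = f_{j'}(\alpha_i)$ holds for fewer than $k$ coordinates $\alpha_i$. The goal is then to derive an incompatibility between this degree constraint and the combinatorial richness of the agreement pattern, where the latter is controlled by the (hypergraph) Nash-Williams--Tutte theorem.

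I would build an auxiliary multi-hypergraph $H$ on vertex set $\{0, 1, \ldots, L\}$: for every coordinate $i \in [n]$ and every value $v \in S_i$, include the hyperedge $E_{i,v} = \{\, j : f_j(\alpha_i) = v \,\}$. The closeness assumption gives $\sum_{i,v}(|E_{i,v}| - 1) \geq (L+1)(1-\epsilon) n - \ell n$, which is $\Omega((L+1)n)$ in our parameter regime. Invoking the hypergraph tree-packing theorem promised in the abstract, I would decompose most of $H$ into $t$ edge-disjoint spanning trees of $\{0,\ldots,L\}$, each contributing $L$ tree-edges (pairs sitting inside a common hyperedge). Since any pair $(j, j')$ can share a hyperedge for at most $k-1$ coordinates and trees are edge-disjoint, counting gives $tL \leq \binom{L+1}{2}(k-1)$, hence $t \leq (L+1)(k-1)/2$. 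If $t$ could be pushed to $\Omega(n)$, combining the two bounds would already yield the naive rate $R = \Omega(\epsilon/\ell)$.

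The \textbf{main obstacle} is that the tree-packing lower bound on $t$ need not be $\Omega(n)$: the Nash-Williams--Tutte certificate for a small $t$ is a vertex partition with few crossing hyperedges, which in our setting corresponds to a subset $A \subsetneq \{0, \ldots, L\}$ such that most coordinates $i$ have $T_i = \{\, j : f_j(\alpha_i) \in S_i \,\}$ entirely inside $A$ or entirely outside. When this happens I would recurse on the smaller side of the partition: a sub-instance with fewer polynomials and an effective error parameter that has shrunk by a constant factor. Repeating this peeling step at most $O(\log(1/\epsilon))$ times exhausts the process, and this recursion depth is the source of the logarithmic factor in the theorem. The $\sqrt{\ell}$ (rather than $\ell$) dependence I expect to come from a Cauchy--Schwarz type estimate on the hyperedge-size distribution: at a coordinate with $d_i$ agreeing polynomials spread over at most $\ell$ distinct values, $\sum_{v}\binom{|E_{i,v}|}{2} = \Omega(d_i^2/\ell)$, and balancing this refined count against the pairwise degree bound produces the advertised $\sqrt{\ell}$ gain within each recursive layer.

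Finally, the statement concerns \emph{existence} of a Reed--Solomon code, so the evaluation points $\alpha_1, \ldots, \alpha_n$ may be chosen deliberately. The algebraic obstructions that arise from the tree-packing branches are cut out by finitely many polynomial equations in the $\alpha_i$, so it suffices to take $\F_q$ large enough and invoke a Schwartz--Zippel union bound to find evaluation points avoiding every bad configuration, as in \cite{shangguan2019combinatorial0}.
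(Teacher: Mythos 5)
There is a genuine gap at the heart of your argument: the counting contradiction you propose cannot beat the Johnson bound, so it cannot reach the claimed rate. Concretely, the total hyperedge weight of your agreement hypergraph is $\sum_{i,v}(|E_{i,v}|-1)\le \sum_{i,v}|E_{i,v}| \le (L+1)n$, and with the correct agreement count (for radius $1-\eps$ each $f_j$ satisfies $f_j(\alpha_i)\in S_i$ on at least $\eps n$ coordinates, so the lower bound is $(L+1)\eps n-\ell n=\Theta(\ell n)$, not $(L+1)(1-\eps)n-\ell n$). Since each spanning tree of the $(L+1)$-vertex set consumes $L$ units of this weight, \emph{no} amount of connectivity can push the number of edge-disjoint spanning trees beyond $t=O(\ell n/L)=O(\eps n)$; the premise "if $t$ could be pushed to $\Omega(n)$" is unattainable, and your recursion on partitions does not address this, because the obstruction is the total weight budget, not a bad partition. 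Comparing $t=O(\eps n)$ against your pairwise upper bound $t\le\binom{L+1}{2}(k-1)/L$ yields only $k=\Omega(\eps^2 n/\ell)$, i.e., exactly the Johnson bound. This is unavoidable for any argument that uses only the pairwise fact "$f_j-f_{j'}$ vanishes on at most $k-1$ points": the pairwise budget is $\binom{L+1}{2}(k-1)$, whereas the improvement to rate $\tilde\Omega(\eps/\sqrt{\ell})$ requires a budget of roughly $k$ \emph{per polynomial}, i.e., $(t-1)k$ total.

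The missing ingredient is the algebraic mechanism that supplies that stronger budget. In the paper, tree packings are not used for a counting contradiction at all; they are used to exhibit a nonvanishing monomial in the Leibniz expansion of the determinant of a $t$-wise intersection matrix, whose rows combine the Vandermonde (agreement) constraints with the cycle-space relations $f_{ij}+f_{js}-f_{is}=0$ among the differences. Nonsingularity of that matrix, together with the nonzero kernel vector $(f_i-f_j)_{i<j}$, gives the contradiction, and the relevant threshold is $\wt(I_{[t]})\ge (t-1)k$ rather than anything involving $\binom{t}{2}k$. The paper's actual proof of this theorem then does not invoke any hypergraph Nash-Williams--Tutte statement (which is only established in a weakened form, and only applied to list decoding): it uses the ordinary graph theorem to prove nonsingularity under the extra hypothesis of empty three-wise intersections, and a separate combinatorial sampling lemma — a random subset $J\subseteq[t]$ chosen with probability $p=\min\{\sqrt{\ell}/(2K),1/2\}$ after a dyadic bucketing of coordinate multiplicities — to reduce to that case; the bucketing is the source of the $\log(1/\eps)$ factor and the sampling probability the source of the $\sqrt{\ell}$, not a recursion depth or a Cauchy--Schwarz estimate. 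Your final Schwartz--Zippel step and the construction of the agreement hypergraph are both fine, but without the intersection-matrix (or an equivalent algebraic) step the argument cannot get past $R=O(\eps^2/\ell)$.
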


Theorem~\ref{thm:main} establishes list-recoverability for RS codes well beyond the Johnson bound,
and in particular breaks the $1/\ell$ barrier.
To the best of our knowledge, this is the first result to do so for radius arbitrarily close to $1$, although we note that work of Lund and Potukuchi achieved a similar rate for small error radius~\cite{LP20}.
We discuss related work below in Section~\ref{sec:related} and summarize quantitative results in Table~\ref{tab:litreview}.

\renewcommand{\arraystretch}{1.4}
\begin{table*}\caption{Prior work on list-decoding and list-recovery of RS codes.  Above, $C$ refers to an absolute constant.  The ``Capacity'' results refer to the list-decoding and list-recovery capacity theorems, respectively, and are impossibility results.  Above, we assume that $q \geq n$ and that $n \to \infty$ is growing relative to $1/\eps$ and $\ell$, and that $n$ is sufficiently large.}\label{tab:litreview}
\centering
\begin{tabular}{|p{3.8cm}||c|c|c|c|}
\hline
 & Radius $\rho$ & List size $L$ & Rate $R$ & Field size $q$ \\
\hline\hline
\textbf{List-Decoding:} &&&&\\
\hline
Capacity
& $1 - \eps$ & \black{$C/\eps$} & $\leq  \eps$ & - \\
\hline
Johnson bound & $1 - \eps$ & $\poly(n)$ & $C \eps^2$ & $q \geq n$ \\
\hline
\cite{Rudra-Wootters} & $1 - \eps$ & $C/\eps$ & $\frac{C\eps}{ \log^5(1/\eps) \log(q) }$ & $q \geq Cn\log^C(n/\eps)/\eps$ \\
\hline
\cite{shangguan2019combinatorial0} & $\frac{L}{L+1}(1-R)$ & $L=2,3$ & $R$ & $q = 2^{Cn}$ \\
\hline
This work (Thm.~\ref{thm:main-LD}) & $1 - \eps$ & $C/\eps$ & $\frac{C\eps}{ \log(1/\eps) }$ & $q = \left(\frac{1}{\eps}\right)^{Cn}$ \\
\hline
\hline
\textbf{List-Recovery:} &&&&\\
\hline
Capacity  & $1 - \eps$ & \black{$C/\eps$} & $\leq  \eps$ & -  \\
\hline
Johnson bound & $1 - \eps$ & $\poly(n)$ & $\frac{C\eps^2}{\ell}$ & $q \geq n$ \\
\hline
\cite{LP20} & $\rho \leq 1 - 1/\sqrt{2}$& $C\ell$ & $\frac{C}{\sqrt{\ell} \cdot \log q}$ & $q \geq C n \sqrt{\ell} \cdot \log n$ \\
\hline
This work (Thm.~\ref{thm:main}) & $1 - \eps$ & $\frac{C\ell}{\eps}$ & $\frac{C\eps}{\sqrt{\ell} \cdot \log(1/\eps)}$ & $q = \left(\frac{\ell}{\eps} \right)^{Cn}$ \\
\hline
\end{tabular}
\end{table*}

\paragraph{Applications to perfect hashing.}
Our techniques also have an application to the construction of \em strongly perfect hash matrices\em, as detailed below.
Given a matrix and a set $S$ of its columns, a row is said to {\it separate} $S$ if, restricted to this row, these columns have distinct values. For a positive integer $t$, a matrix is said to be a {\it $t$-perfect hash matrix} if any set of $t$ distinct columns of the matrix is separated by at least one row.
Perfect hash matrices were introduced by Mehlhorn \cite{data1} in 1984 for database management, and since then they have found various applications in cryptography \cite{cyr1}, circuit design \cite{circuit}, and the design of deterministic analogs of probabilistic algorithms \cite{alog}.

Let ${\rm PHF}(n,m,q,t)$ denote a $q$-ary $t$-perfect hash matrix with $n$ rows and $m$ columns. Given $m,q,t$, determining the minimal $n$ such that there exists a ${\rm PHF}(n,m,q,t)$ is one of the major open questions in this field, and has received considerable attention (see, e.g., \cite{Blackburn1998Optimal,blackburn2000perfect,shangguan2016separating}). For any integers $t\ge 2,~k\ge 2$, and sufficiently large prime power $q$, using tools from linear algebra Blackburn \cite{Blackburn1998Optimal} constructed a ${\rm PHF}(k(t-1),q^k,q,t)$, which remains the best-known construction for such parameters so far.

Constructing perfect hash matrices is related to list-recovery and list-decoding. Indeed, if the columns of our matrix are viewed as  codewords, then the matrix is a $t$-perfect hash matrix if and only if the code is $(0, t-1,t-1)$-list-recoverable \black{(we present a proof in Claim \ref{app:matrices-codes} for completeness)}. On the way to proving our main result on list-recovery, we prove a theorem (Theorem~\ref{theorem-weaker}, which we will state later), that gives very precise bounds, but only in a restricted setting.  While this setting is too restrictive to immediately yield results on list-recovery in general, it turns out to be enough to say something interesting about perfect $t$-hash matrices. In particular, we are able to recover Blackburn's result, and extend it to a generalization of perfect hashing where every set of $t$ columns needs to be separated not just by one row but by many rows.

\begin{theorem}\label{perfect-hash-matrix}
Given integers $1\le k<n$ and $t\ge 3$, for a sufficiently large prime power $q$ \black{(as an \black{exponential} function of $n$ and $t$)}, there exists an $n\times q^k$ matrix, defined on the alphabet $\mathbb{F}_q$, such that any set of $t$ columns is separated by at least $n-k(t-1)+1$ rows.
\end{theorem}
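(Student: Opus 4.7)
The plan is to take the matrix to be the generator matrix of an $[n,k]$-Reed--Solomon code: fix evaluation points $\alpha_1, \dots, \alpha_n \in \mathbb{F}_q$ for a sufficiently large prime power $q$, and let the $q^k$ columns be indexed by polynomials $f \in \mathbb{F}_q[x]$ of degree less than $k$, with the $i$-th entry of column $f$ equal to $f(\alpha_i)$. The evaluation points will be chosen ``generically'' (for instance by algebraic independence over the prime subfield, or via a Schwartz--Zippel argument) so that certain auxiliary Vandermonde-like determinants that appear in the algebraic step below are all nonzero.

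Given $t$ distinct polynomials $f_1, \dots, f_t$, a row $i$ fails to separate the corresponding columns iff $f_a(\alpha_i) = f_b(\alpha_i)$ for some $a \ne b$; let $B \subseteq [n]$ be the set of such rows. The claim ``at least $n - k(t-1)+1$ separating rows'' is equivalent to $|B| \le k(t-1)-1$, so this is what I would prove. For each $i \in B$, I would record a spanning forest $F_i$ of the coincidence graph at row $i$, i.e.\ of the disjoint union of cliques given by the partition $P_i$ of $[t]$ induced by the values $f_j(\alpha_i)$. Assembling these into a multigraph $G = \bigcup_{i \in B} F_i$ on vertex set $[t]$, we have $|E(G)| = \sum_{i\in B}(t-r_i) \ge |B|$, where $r_i < t$ is the number of parts of $P_i$. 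Moreover, the multiplicity of any edge $(a,b)$ in $G$ is at most the number of roots of $f_a-f_b$, hence at most $k-1$.

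Suppose for contradiction that $|B| \ge k(t-1)$, so that $|E(G)| \ge k(t-1)$. The main tool is then the Nash-Williams--Tutte theorem: applied to $G$ (with the spanning forests $F_i$ chosen so as to satisfy the partition-connectivity condition, using the hypergraph version of tree packing developed elsewhere in the paper), it yields $k$ edge-disjoint spanning trees $T_1, \dots, T_k$ of $[t]$ sitting inside $G$. This is precisely the regime of the restricted Theorem~\ref{theorem-weaker}, whose precise combinatorial bound is calibrated exactly to this tree-packing extraction. For each edge $e=(a,b) \in T_s$ at associated position $i(e) \in B$, one has $(c_a - c_b)\cdot \vec{\alpha}_{i(e)} = 0$, where $c_j \in \mathbb{F}_q^k$ is the coefficient vector of $f_j$ and $\vec{\alpha}_i = (1,\alpha_i,\dots,\alpha_i^{k-1})$. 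Setting $m_j := c_j - c_1$ (with $m_1=0$), these $k(t-1)$ equations are linear in the $k(t-1)$ unknowns $(m_2,\dots,m_t) \in (\mathbb{F}_q^k)^{t-1}$.

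The main obstacle, and the place where genericity of the $\alpha_i$ is used crucially, is showing that this square system is of full rank. One would argue this by rooting each $T_s$ and propagating from the root, which reduces the determinant of the coefficient matrix to a product of Vandermonde-like minors built from the evaluation points assigned along the edges of the trees; for generic $\alpha_i$ each such minor is nonzero, so the whole system is invertible. Full rank forces $m_2 = \cdots = m_t = 0$, i.e.\ $f_1 = \cdots = f_t$, contradicting distinctness. Hence $|B| \le k(t-1)-1$, which is exactly the desired separating-row bound.
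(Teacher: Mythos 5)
Your overall strategy is the same as the paper's (RS generator matrix, generic evaluation points chosen via a Schwartz--Zippel count, a tree packing extracted by Nash-Williams--Tutte, and a linear-algebraic contradiction), but there are two genuine gaps in the middle of the argument.

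First, the step ``$|E(G)|\ge k(t-1)$, hence Nash-Williams--Tutte yields $k$ edge-disjoint spanning trees'' does not work. The Nash-Williams--Tutte theorem requires that \emph{every} partition $\mathcal{P}$ of $[t]$ have at least $k(|\mathcal{P}|-1)$ cross-edges, and a lower bound on the total edge count does not give this: the coincidences could be concentrated inside a proper subset $J\subsetneq[t]$ (a subset $J$ can carry up to $\binom{|J|}{2}(k-1)$ edges, far more than $k(|J|-1)$), in which case a partition isolating $J$ has too few cross-edges and no spanning-tree packing exists. The paper repairs exactly this by passing to a \emph{minimal} subset $S\subseteq[t]$ with $\wt(I_S)\ge k(|S|-1)$ (Lemma~\ref{main-lemma}); minimality is what guarantees hypothesis (ii) of Theorem~\ref{theorem-weaker}, which is the partition-connectivity condition in disguise (Claim~\ref{claim-spanning-tree}). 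Your proposal omits this reduction, and without it the tree-packing step fails.

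Second, recording a full spanning forest $F_i$ of the coincidence partition at each bad row $i$ creates edges of $G$ that carry the \emph{same} evaluation point $\alpha_i$. In the final determinant computation this means distinct transversals can produce identical monomials, so the ``product of Vandermonde-like minors is generically nonzero'' argument is exposed to cancellation --- this is precisely the obstruction that confines the paper to Theorem~\ref{theorem-weaker} (the case of empty three-wise intersections) rather than the full Conjecture~\ref{conjecture}. The paper sidesteps it by assigning each non-separating row to exactly \emph{one} coinciding pair $\{j,l\}$, so every variable appears in a single Vandermonde block and the chosen monomial occurs exactly once in the determinant expansion. Your alternative of invoking the hypergraph tree-packing machinery would only give the weaker $C=O(\log t)$ version and hence a weaker separation bound, not the stated $n-k(t-1)+1$. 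With these two repairs --- one edge per bad row, then restriction to a minimal over-weight subset --- your argument becomes essentially the paper's proof.
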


We call a matrix with the property given by Theorem \ref{perfect-hash-matrix} a {\it strongly $t$-perfect hash matrix}; this can be viewed as an ``error-resilient'' version of perfect hash matrices.  Strongly perfect hash matrices were first introduced by the third and fourth authors of this paper for $t=3$, with a slightly different definition \cite{shangguan2020degenerate}. Indeed, Lemma 25 of \cite{shangguan2020degenerate} implies the $t=3$ case of Theorem~\ref{perfect-hash-matrix}, but it breaks down at that point. We overcome this barrier, and construct strongly $t$-perfect hash matrices for all integers $t\ge 3$. The main ingredient in our proof is a surprising connection from strongly perfect hashing to graph theory (see Section \ref{phf} for the details).  Perfect hash matrices with a similar property (i.e., every set of $t$ columns needs to be separated by more than one row) have also been studied in \cite{dougherty2019hash}, but not to our knowledge in this parameter regime.

Theorem~\ref{perfect-hash-matrix} recovers Blackburn's result by taking $n=k(t-1)$, and it establishes the new result for strongly perfect hash matrices.
Based on a result of Blackburn \cite{Blackburn1998Optimal}, we also show (Proposition~\ref{proposition}) that the hash matrix in Theorem~\ref{perfect-hash-matrix} is optimal for strongly perfect hashing, among all \em linear \em hash matrices.  Both Theorem~\ref{perfect-hash-matrix} and Proposition~\ref{proposition} are proved in Section~\ref{phf}.

\paragraph{A new connection to the Nash-Williams--Tutte theorem, and a new hypergraph Nash-Williams--Tutte conjecture.}
In order to derive our results, we build on the framework of \cite{shangguan2019combinatorial0}.
That work developed a framework to view the list-decodability of Reed--Solomon codes in terms of the singularity of \em intersection matrices \em (which we define in Section~\ref{preliminaries}).  The main new technical contribution of our work is to connect the singularity of these matrices to tree-packings in particular graphs.  This connection allows us to use the Nash-Williams--Tutte theorem from graph theory to obtain our results.   The Nash-Williams--Tutte theorem gives sufficient conditions for the existence of a large \em tree packing \em (that is, a collection of pairwise edge-disjoint spanning trees) in a graph.

We think that this connection is a contribution in its own right, and it is our hope that it will lead to further improvements to our results on Reed--Solomon codes.  In particular, we hope that it will help establish the following conjecture of \cite{shangguan2019combinatorial0}:

\begin{conjecture}[Conjecture 1.5 of \cite{shangguan2019combinatorial0}]\label{conjecture-0}
 For any $\epsilon>0$ and integers $1\le k<n$ with $\epsilon n\in\mathbb{Z}$, there exist RS codes with rate $R=\frac{k}{n}$ over a large enough (as a function of $n$ and $\epsilon$) finite field, that are list-decodable from radius $1-R-\epsilon$ and list size at most $\lceil\frac{1-R-\epsilon}{\epsilon}\rceil.$
\end{conjecture}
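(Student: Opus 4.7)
My plan is to push the intersection-matrix framework sketched in this paper to its sharpest setting and reduce the conjecture to a combinatorial packing statement in hypergraphs. Set $L=\lceil(1-R-\epsilon)/\epsilon\rceil$ and suppose toward contradiction that some received word $y\in\F_q^n$ has $L+1$ RS-codewords $c_0,\dots,c_L$ within relative distance $1-R-\epsilon$. Let $A_i\subseteq[n]$ be the set of coordinates where $c_i$ agrees with $y$, so $|A_i|\ge(R+\epsilon)n$. Following the reduction of \cite{shangguan2019combinatorial0} (also used in this paper), the existence of such a configuration is equivalent to the singularity, for every admissible intersection pattern of the $A_i$'s, of an \emph{intersection matrix} whose rows encode polynomial degree constraints and whose columns are indexed by the coordinates $[n]$. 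The goal becomes: show every such intersection matrix is non-singular over some extension field.

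The second step is to translate non-singularity into a packing condition on the $(L+1)$-vertex hypergraph $H$ whose hyperedges record, for each coordinate $j\in[n]$, the set $\{\,i : j\in A_i\,\}$ of codewords passing through $y_j$. Mirroring the graph case treated in this paper, one would argue that the intersection matrix is non-singular whenever $H$ admits a sufficiently rich packing of spanning connected sub-hypergraphs---the hypergraph analogue of an edge-disjoint spanning-tree packing. Double-counting the agreement sets gives $\sum_i|A_i|\ge(L+1)(R+\epsilon)n$, which translates into an average edge-multiplicity on $H$ that exactly meets the threshold demanded by the hypergraph Nash-Williams--Tutte statement conjectured in this paper. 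Once the non-singularity of every bad intersection matrix is established generically, a standard Schwartz--Zippel / polynomial-method argument over a large enough $\F_q$ produces evaluation points $\alpha_1,\dots,\alpha_n$ for which no bad configuration occurs, yielding the claimed RS code.

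The main obstacle is the middle step. The present paper only obtains the graph version of Nash-Williams--Tutte, which is enough to control pairwise interactions between agreement sets and accounts for the $\log(1/\epsilon)$ loss in Theorem~\ref{thm:main-LD}. Hitting the Singleton-type bound $R+\epsilon$ with list size $\lceil(1-R-\epsilon)/\epsilon\rceil$ appears to require the full strength of the hypergraph generalization conjectured here: one needs to pack spanning connected sub-hypergraphs in an $(L+1)$-uniform (or non-uniform, with bounded edge sizes) hypergraph whose total edge-count matches a tight Nash-Williams--Tutte-type threshold, and currently no such theorem is known beyond uniformity $2$.

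In short, my proposed route reduces Conjecture~\ref{conjecture-0} to the hypergraph Nash-Williams--Tutte conjecture formulated in this paper, plus the (already mostly developed) matrix-to-packing dictionary. I would therefore expect the conjecture on RS codes to stand or fall with the combinatorial one; genuinely new ideas would be needed either to prove the hypergraph packing theorem in full strength, or to bypass the packing reduction by attacking the non-singularity of intersection matrices directly, for example through a more refined inductive argument on the structure of the agreement pattern.
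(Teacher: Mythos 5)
You have not proved the statement, and neither does the paper: Conjecture~\ref{conjecture-0} is left open here. What you have written is a reduction plan, and it coincides almost exactly with the roadmap the paper itself lays out in Section~\ref{sec:conjectures}. The dictionary you describe — agreement sets $A_i$ $\to$ intersection matrix $M_{k,(I_1,\dots,I_t)}$ $\to$ the hypergraph $H$ on $[t]$ whose edge for coordinate $j$ is $\{i : j \in A_i\}$ $\to$ a packing/distinguishability condition — is precisely the chain Conjecture~\ref{conj:weak} $\Rightarrow$ Conjecture~\ref{conj:weak-matrix} $\Rightarrow$ Conjecture~\ref{conjecture-0-weaker}, formalized as Theorem~\ref{thm:complete} plus Lemma~\ref{lem:nonsing} (together, Theorem~\ref{thm:complete-2}); taking $C=1$ there recovers Conjecture~\ref{conjecture-0}. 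Your weight count $\sum_i|A_i|-|\bigcup_i A_i|\ge kL$ for $L=\lceil(1-R-\epsilon)/\epsilon\rceil$ matches the computation in \eqref{eq:nonsing-0}, and your Schwartz--Zippel step is the standard polynomial-method finish used throughout the paper. So the proposal is a faithful account of the intended attack, not an independent or alternative one.

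The genuine gap is the one you name but do not close: the hypergraph Nash--Williams--Tutte statement at the tight threshold $C=1$. The paper proves it only for $C\ge 6\log t$ (Theorem~\ref{thm:conj-weak}), which is exactly why Theorem~\ref{thm:main-LD} loses a $\log(1/\epsilon)$ factor and falls short of Conjecture~\ref{conjecture-0}. Be aware also that the paper's own Remark after Corollary~\ref{cor:rand-6} shows the natural "pack $k$ edge-disjoint connected spanning subhypergraphs" strategy provably fails for $C<2$, so establishing Conjecture~\ref{conj:weak} at $C=1$ requires exploiting $k$-distinguishability (cancellation-free monomials in the determinant via unique signatures) rather than mere connectivity packing; a tree-decomposition in which the same edge label appears in several trees can still be distinguishable. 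One further small imprecision: the correspondence between bad list-decoding configurations and intersection matrices is one-directional (a bad configuration forces the evaluated matrix to be singular, as in Lemma~\ref{main-lemma-2}), not an equivalence; this does not affect the architecture of the argument, but "equivalent" overstates what the reduction gives.
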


Conjecture~\ref{conjecture-0} is stronger than our Theorem~\ref{thm:main-LD} about list-decoding.  In particular, our theorem is near-optimal, but it is interesting mostly in the low-rate/high-noise parameter regime.  In contrast, Conjecture~\ref{conjecture-0} conjectures that there exist \em exactly \em optimal RS codes, in any parameter regime.

To encourage others to use our new connection and make progress on Conjecture~\ref{conjecture-0}, we propose a method of attack in Section~\ref{sec:conjectures}.  This outline exploits our connection to the Nash-Williams--Tutte theorem, and proceeds via a conjectured generalization of the Nash-Williams--Tutte theorem to hypergraphs: we show that establishing this hypergraph conjecture (which is stated as Conjecture~\ref{conj:weak} in Section~\ref{sec:conjectures}) would in fact establish Conjecture~\ref{conjecture-0}. In Section~\ref{sec:conjectures}, we give evidence for our hypergraph Nash-Williams--Tutte conjecture, Conjecture~\ref{conj:weak}, observing that the ``easy direction'' of the conjecture follows from the Nash-Williams--Tutte theorem, and also that a quantitative relaxation of the conjecture follows from existing work \cite{CS07, CCV09}.
As further evidence of the viability of this approach, this quantitative relaxation implies a second proof of our main list-decoding result, Theorem~\ref{thm:main-LD}, and we also sketch this proof in Section~\ref{sec:conjectures}.\footnote{This second proof does not immediately establish list-recoverability, which is why we focus on our first proof.}

%\ryl{Deleted: The proof of Theorem~\ref{thm:main} on list-recovery that is presented in Section~\ref{list-recovery} uses the standard (graph) Nash-Williams--Tutte theorem directly, without moving to hypergraphs.  In particular, it does \em not \em follow the outline set out in Section~\ref{sec:conjectures}.Therefore, in order to demonstrate the validity of our outline in Section~\ref{sec:conjectures}, we actually give a \em second \em proof of Theorem~\ref{thm:main-LD} about list-decoding that does follow the outline above.  In particular, we prove (Theorem~\ref{thm:weaker}) a weaker form of our hypergraph Nash-Williams--Tutte conjecture, and we show that this also implies Theorem~\ref{thm:main-LD}.}

%\mkw{Deleted another paragraph here for space -- the paragraph used to say that we hoped that the hypergraph conjecture and the proof of the weaker version would be of independent interest, but now it just says that we hope that the conjecture will be of independent interest.  Since we already talked about that a bit I think it's okay to just remove it.}

\subsection{Related Work}\label{sec:related}
We briefly review related work.  See Table~\ref{tab:litreview} for a quantitative comparison to prior work.

\paragraph{List-decoding of RS codes.} Ever since the Guruswami--Sudan algorithm~\cite{gurus}, which efficiently list-decodes RS codes up to the Johnson bound, it has been open to understand the extent to which RS codes are list-decodable \em beyond \em the Johnson bound, and in particular if there are RS codes that are list-decodable all the way up to the list-decoding capacity theorem, matching the performance of completely random codes.  There have been negative results that show that \em some \em RS codes are not list-decodable to capacity~\cite{Ben-Sasson}, and others that show that even if they were, in some parameter regimes we are unlikely to find an efficient list-decoding algorithm~\cite{list-dec-discrete-log}.  The work of Rudra and Wootters, mentioned above, showed that for any code with suitably good distance, a random puncturing of that code was likely to be near-optimally list-decodable; this implies that an RS code with random evaluation points is likely to be list-decodable.  Unfortunately, as discussed above, this result requires a constant alphabet size $q$ in order to yield a constant-rate code, while RS codes necessarily have $q \geq n$.

Recently, Shangguan and Tamo~\cite{shangguan2019combinatorial0} studied the list-decodability of RS codes in a different parameter regime, namely when the list size $L$ is very small, either $2$ or $3$.
They were able to get extremely precise bounds on the rate (showing that there are RS codes that are exactly optimal), but unfortunately for such small list sizes, it is impossible for any code to be list-decodable up to radius $1 - \eps$ for small $\eps$, which is our parameter regime of interest.
Unlike the approach of \cite{Rudra-Wootters}, which applies to random puncturings of any code, the work of \cite{shangguan2019combinatorial0} targeted RS codes specifically and developed an approach via studying \em intersection matrices. \em  The reason that their approach stopped at $L=3$ was the difficulty of analyzing these intersection matrices.  We build on their approach and use techniques from graph theory---in particular, the Nash-Williams--Tutte theorem---to analyze the relevant intersection matrices beyond what \cite{shangguan2019combinatorial0} were able to do.  We discuss our approach more below in Section~\ref{sec:tech}.

%\mkw{Updated Zachi's paragraph as per our email exchange---I mentioned the GST result (will mention it again later for list-recovery) and used the ``inspired'' language instead of ``building on''.}
\paragraph{Subsequent work on list-decoding of RS codes.}
After our work first appeared, and inspired by our approach,  Ferber, Kwan, and Sauermann \cite{ferber2020} gave a beautiful proof establishing the existence of RS codes with rate $\Omega(\epsilon)$ that are list-decodable from radius $1-\epsilon$ with list size $O(1/\epsilon)$, over a polynomially (in the code's length) large finite field.\footnote{In fact, they show something more general: if one begins with any code of sufficiently large distance over a sufficiently large alphabet, and randomly punctures it to rate $\Omega(\eps)$, the resulting code is with high probability $(1 - \eps, O(1/\eps))$ list-decodable.} In further follow-up work, Goldberg, Shangguan, and Tamo \cite{goldberg2021listdecoding} further improved the rate from $\Omega(\eps)$ to a rate approaching $\frac{\eps}{2 - \eps}$.

\black{Let us give a brief overview for the main ideas of \cite{ferber2020} and \cite{goldberg2021listdecoding}. To prove that RS codes have large list-decoding radius, the authors of \cite{ferber2020} show that    a random puncturing a full-length RS code, with high probability,  resulting    in  a code with the desired property. To do so, they prove an upper bound on the number of bad puncturings, that is, puncturings that lead to a code with a poor list-decoding radius. Their upper bound in fact holds for every code with sufficiently large minimum distance, however it does not utilize the linearity of the code. The authors of \cite{goldberg2021listdecoding} strengthened the counting argument of \cite{ferber2020} (via the code's linearity) and obtained a stronger result that improves upon \cite{ferber2020}.}

Compared with our result on the list-decodability of RS codes, the results of \cite{ferber2020} and \cite{goldberg2021listdecoding} remove the logarithmic factor in $1/\eps$, and allow for smaller alphabet sizes; additionally, their proof is much shorter. However, we believe that there are still some advantages to our approach (beyond inspiring that of \cite{ferber2020} and \cite{goldberg2021listdecoding}).  First, the result of \cite{ferber2020} does not apply to list-recovery, and while \cite{goldberg2021listdecoding} does apply to list-recovery, they do not surpass the $1/\ell$ barrier in the rate.  Second, neither \cite{ferber2020} nor \cite{goldberg2021listdecoding} fully resolve Conjecture~\ref{conjecture-0} about optimal list-decodability of RS codes.  We believe that the framework and tools developed in this paper together with Conjecture~\ref{conj:weak} provide a plausible attack method to resolve Conjecture~\ref{conjecture-0}.

\paragraph{List-recovery of RS codes.} While the Guruswami--Sudan algorithm is in fact a list-recovery algorithm, much less was known about the list-recovery of RS codes beyond the Johnson bound than was known about list-decoding.  (There is a natural extension of the Johnson bound for list-recovery, see~\cite{GS01}; for RS codes, it implies that an RS code of rate about $\eps^2 / \ell$ is list-recoverable up to radius $1 - \eps$ with input list sizes $\ell$ and polynomial output list size).
\black{Similar to list-decoding,} it is known that \em some \em RS codes are not list-recoverable beyond the Johnson bound~\cite{Guruswami-rudra-limits-list-decoding}.  However, much less was known on the positive front.
In particular, neither of the works~\cite{Rudra-Wootters,shangguan2019combinatorial0} discussed above work for list-recovery.  In a recent work, Lund and Potuchuki~\cite{LP20} have proved an analogous statement to that of \cite{Rudra-Wootters}: any code of decent distance, when randomly punctured to an appropriate length, yields with high probability a good list-recoverable code.  This implies the existence of RS codes that are list-recoverable beyond the Johnson bound.
However, in \cite{LP20} there is again a dependence on $\log(q)$ in the rate bound, meaning that for RS codes, the rate must be sub-constant.  Further, the work of \cite{LP20} only applies up to radius $\rho = 1 - 1/\sqrt{2}$, and in particular does not apply to radii $\rho = 1 - \eps$, as we study in this work.  Our results also work in the constant-$\rho$ setting of \cite{LP20}, and in that regime we show that RS codes of rate $\Omega(1/\sqrt{\ell})$ are $(\rho, \ell, O(\ell))$ list-recoverable, which improves over the result of \cite{LP20} by a factor of $\log q$ in the rate.
However, we do require the field size to be much larger than that is required by \cite{LP20} (see Table~\ref{tab:litreview}).

\paragraph{Subsequent work on list-recovery of RS codes.}
The recent work of Goldberg, Shangguan, and Tamo \cite{goldberg2021listdecoding} mentioned above builds on \cite{ferber2020}, and shows that there are RS codes of rate approaching $\frac{\eps}{1 + \ell - \eps}$ that are $(1 - \eps, \ell, L_{\eps, \ell})$-list-recoverable, for a constant $L_{\eps, \ell}$ that depends only on $\eps$ and $\ell$. Compared to our work, while \cite{goldberg2021listdecoding} improves the dependence on $\eps$ in the rate by a factor of $\log(1/\eps)$, it has a worse dependence on $\ell$, and in particular does not break the $1/\ell$ barrier that is present in the Johnson bound.

\paragraph{List-decoding and list-recovery of RS-like codes.}
There are constructions---for example, of \em folded RS codes \em and \em univariate multiplicity codes \em \cite{GR08,wang,Kop15,kopparty2018improved}---of codes that are based on RS codes and that are known to achieve list-decoding (and list-recovery) capacity, with efficient algorithms.  Our goal in this work is to study Reed--Solomon codes themselves.

\paragraph{Perfect hash matrices and strongly perfect hash matrices.} Perfect hash matrices have been studied extensively since the 1980s. There are two parameter regimes that are studied. The first is when the alphabet size $q$ is constant and the number of rows tends to infinity~\cite{nilli1994perfect,fredman1984size,korner1988new,korner1986fredman,xing2019beating}.  The second is when the number of rows is viewed as a constant, while $q$ may tend to infinity~\cite{Blackburn1998Optimal,blackburn2000perfect,shangguan2016separating}.
In both cases the strength $t$ of a perfect hash matrix is a constant.  Our work studies the second case; as mentioned above, Blackburn~\cite{blackburn2000perfect} gave an optimal construction for linear hash matrices in this parameter regime, and as a special case we obtain a second proof of Blackburn's result.

The study of strongly perfect hash matrices is relatively new \cite{shangguan2020degenerate}. The thesis \cite{dougherty2019hash} collected some recent results on a closely related topic. However, the parameters considered there are quite different from those in our paper, and to the best of our knowledge, our construction is the best known in the parameter regime we consider.
Another related notion called {\it balanced hashing} was introduced in \cite{alon2007balanced,alon2009balanced}, where, with our notation, any set of $t$ columns of a matrix needs to be separated by at least $a_1$ and at most $a_2$ rows, for some integers $a_1\le a_2$. Note that in our setting, we want every set of $t$ columns to be separated by as many rows as possible, while in the setting of balanced hashing it cannot exceed the threshold $a_2$; thus, the two settings are incomparable.

\subsection{Technical Overview}\label{sec:tech}

\paragraph{Intersection matrices.}
Our approach is centered around \em intersection matrices, \em introduced in \cite{shangguan2019combinatorial0}.  Intersection matrices and their nonsingularity\footnote{To avoid possible confusion, we would like to mention that the standard notion of nonsingluar matrices applies only to {\it square scalar} matrices. However, we are in fact dealing with {\it rectangle variable} matrices perhaps with more rows than columns. Therefore, we abuse the notion nonsingluar a little by calling a variable matrix nonsingular if it contains a square submatrix with nonzero determinant as well as the same number of columns as the original matrix.} are defined formally below in Definition~\ref{Def-matrix-general}, but we give a brief informal introduction here.  A $t$-wise intersection matrix, $M$, is defined by a collection of sets $I_1, I_2, \ldots, I_t \subseteq [n]$, and has entries that are monomials in $\mathbb{F}_q[x_1, x_2, \ldots, x_n]$. It was shown in \cite{shangguan2019combinatorial0} that if there is a counter-example to the list-decodability of a Reed--Solomon code with evaluation points $(\alpha_1, \ldots, \alpha_n)$---that is, if there exist polynomials $f_1, f_2, \ldots, f_{L+1}$ that all agree with some other polynomial $g: \mathbb{F}_q \to \mathbb{F}_q$ at many points $\alpha_i$---then there is a $(L+1)$-wise intersection matrix that becomes singular when $\alpha_i$ is plugged in for $x_i$ for all $i \in [n]$.

The set-up (both the definition of an intersection matrix and the connection to list-decoding) is most easily explained by an example.  Suppose that we are interested in list-decoding for $L=3$, and suppose that we are interested in a RS code with evaluation points $\alpha_1, \alpha_2, \ldots, \alpha_n$.  Let $f_1, f_2, f_3, f_4$ and $g$ be a counter-example to list-decoding, as above, and for $1\le j\le 4$, let $I_j = \{ i\in [n] \,:\, f_j(\alpha_i) = g(\alpha_i)\}$.

Now consider the product shown in Figure~\ref{fig:matvec}. \black{Let $f_1, f_2, f_3, f_4 \in \mathbb{F}_q[x]$ have degree $k-1$ and suppose that $I_j = \{ s \,:\, f_j(\alpha_s) = g(\alpha_s) \}$.  (In particular, $f_i$ and $f_j$ agree on $I_i \cap I_j$).  Then the matrix-vector product depicted above is zero, where the vector $\vec{f_i}$ refers to the $k$ coefficients of the polynomial $f_i$, and the $j$-th coordinate of this vector is the coefficient of $x^{j-1}$ in $f$.
Here, $V_k(I_i\cap I_j) \in \mathbb{F}_q^{|I_i\cap I_j|\times k}$ denotes the Vandermonde matrix whose rows are $[\alpha_s^0,\alpha_s^1,\dots,\alpha_s^{k-1}]$ for $s\in I_i\cap I_j$ (see \eqref{vandermonde} below for a more precise definition).
The notation $\mathcal{I}_k$ denotes the $k\times k$ identity matrix.}
%\black{(see the caption for notation)}.

%\input{int_mat_fig}

\begin{figure*}
\centering
\footnotesize
\begin{tikzpicture}[scale=.5]
\draw[thick] (0,0) rectangle (12, 14);
\draw[thick] (0,8) -- (12, 8);
\foreach \j in {8, 10, 12}
{
\draw[thin] (0,\j) -- (12, \j);
}
\foreach \j in {2,4,6,8,10}
{
\draw[thin] (\j,8) -- (\j, 14);
}
\draw[fill=black!10] (0,7) rectangle (2,8);
\draw[fill=black!10] (2,5) rectangle (4, 7);
\draw[fill=black!10] (4,4) rectangle (6, 5);
\draw[fill=black!10] (6,2) rectangle (8, 4);
\draw[fill=black!10] (8,1) rectangle (10, 2);
\draw[fill=black!10] (10,0) rectangle (12, 1);
\node[black](a) at (-3, 7.5) {$V_k(I_1 \cap I_2)$};
\node[black](b) at (-3, 6) {$V_k(I_1 \cap I_3)$};
\node[black](c) at (-3, 4.5) {$V_k(I_2 \cap I_3)$};
\node[black](d) at (-3, 3) {$V_k(I_1 \cap I_4)$};
\node[black](e) at (-3, 1.5) {$V_k(I_2 \cap I_4)$};
\node[black](f) at (-3, 0.5) {$V_k(I_3 \cap I_4)$};
\draw[->,black] (a) to (1, 7.5);
\draw[->,black] (b) to (3, 6) ;
\draw[->,black] (c) to (5, 4.5);
\draw[->,black] (d) to (7, 3) ;
\draw[->,black] (e) to (9, 1.5) ;
\draw[->,black] (f) to (11, 0.5);

\foreach \j in {(1, 13), (3,11), (5,9), (9,13), (11,11), (11,9)}{
\node at \j {$\mathcal{I}_k$};
}
\foreach \j in {(7, 13), (7,11), (9,9)}{
\node at \j {$-\mathcal{I}_k$};
}

\draw[thick,fill=black!10] (13, 2) rectangle (14, 14);
\foreach \j in {12,10,8,6,4}
{
\draw (14,\j) to (13,\j);
}

\node[anchor=west,black!40!black](a) at (15, 13) {$\vec{f_1} - \vec{f_2}$};
\node[anchor=west,black!40!black](b) at (15, 11) {$\vec{f_1} - \vec{f_3}$};
\node[anchor=west,black!40!black](c) at (15, 9) {$\vec{f_2} - \vec{f_3}$};
\node[anchor=west,black!40!black](d) at (15, 7) {$\vec{f_1} - \vec{f_4}$};
\node[anchor=west,black!40!black](e) at (15, 5) {$\vec{f_2} - \vec{f_4}$};
\node[anchor=west,black!40!black](f) at (15, 3) {$\vec{f_3} - \vec{f_4}$};

\draw[black!40!black,->] (a) to (13.5, 13);
\draw[black!40!black,->] (b) to (13.5, 11);
\draw[black!40!black,->] (c) to (13.5, 9);
\draw[black!40!black,->] (d) to (13.5, 7);
\draw[black!40!black,->] (e) to (13.5, 5);
\draw[black!40!black,->] (f) to (13.5, 3);

\node at (19, 12) {$=0$};
\end{tikzpicture}

\caption{\black{An illustration of the proof for list size $L=3$}
%Here, $V_k(I_i\cap I_j) \in \mathbb{F}_q^{|I_i\cap I_j|\times k}$ denotes the Vandermonde matrix with $(i,j)$ entry equal to $\alpha_{t_i}^{j-1}$ for $I_i\cap I_j=\{t_1,\dots,t_{|I_i\cap I_j|}\}$, $i$<++> and $j \in [k]$.  The notation $\mathcal{I}_k$ denotes the $k\times k$ identity matrix.
}
\label{fig:matvec}
\end{figure*}

An inspection of Figure~\ref{fig:matvec} shows that the matrix-vector product depicted is zero.  Indeed, the top part is zero for any choice of the $f_i$, and the bottom part is zero since $f_i$ and $f_j$ are assumed to agree on $\{\alpha_s:s\in I_i \cap I_j\}$. The matrix shown is the $4$-wise intersection matrix for the sets $I_1, I_2, I_3, I_4$, evaluated at $\alpha_1, \ldots, \alpha_n$.  If the $f_i$'s agree too much with the function $g$ (i.e., if they are a counter-example to list-decodability for some given radius), then the sets $I_i \cap I_j$ are going to be larger, and this matrix will have more rows.  In particular, the more the $f_i$'s agree with $g$, the harder it is for this matrix to be singular.  Intuitively, this sets us up for a proof by contradiction: if $f_1, f_2, f_3, f_4$ agree too much with $g$, then this matrix is nonsingular (at least for a non-pathological choice of $\alpha_i$'s); but Figure~\ref{fig:matvec} displays a kernel vector!

A $t$-wise intersection matrix (for sets $I_1, \ldots, I_t$) generalizes a $4$-wise intersection matrix shown in Figure~\ref{fig:matvec}.  The bottom part looks exactly the same---a block-diagonal matrix with Vandermonde blocks---and the top part is an appropriate generalization that causes the analogous $k\cdot{t \choose 2}$-long vector corresponding to the $f_i$'s to vanish.

\paragraph{A conjecture about $t$-wise intersection matrices.}
With the motivation in Figure~\ref{fig:matvec}, the strategy of \cite{shangguan2019combinatorial0} was to study $t$-wise intersection matrices $M$ for $t = L+1$, and to show that for every appropriate choice of $I_1, \ldots, I_{t}$, the polynomial $\det(M) \in \mathbb{F}_q[x_1, x_2, \ldots, x_n]$ is not identically zero.  The list-decodability of RS codes would then follow from the DeMillo--Lipton--Schwartz--Zippel
 lemma along with a counting argument.
In particular, they made the following conjecture, and showed that it implies Conjecture~\ref{conjecture-0} about list-decoding.
Below, the \em weight \em of a family  of subsets $I_1,\ldots,I_t$ of $[n]$ is defined to be
\begin{equation}\label{definition-weight}
  {\rm wt}(I_1,\ldots,I_t)=\sum_{i=1}^t |I_i|-\left|\bigcup_{i=1}^t I_i\right|,
\end{equation}
and for a set $J$ of indices,  we use the shorthand   ${\rm wt}(I_J):={\rm wt}(I_j:j\in J)$.
\begin{conjecture}[Conjecture 5.7 of \cite{shangguan2019combinatorial0}]\label{conjecture}
  Let $t\ge 3$ be an integer and $I_1,\ldots,I_t\s[n]$ be subsets satisfying
  \begin{enumerate}
\item [\rm{(i)}] $\wt(I_J)\leq (|J|-1)k$ for all  nonempty $J\subseteq [t]$,
\item [\rm{(ii)}] Equality holds for $J=[t]$, i.e., $\wt(I_{[t]})=(t-1)k$.
  \end{enumerate}
  Then the $t$-wise intersection matrix $M_{k,(I_1,\ldots,I_t)}$ is nonsingular over any finite field.
\end{conjecture}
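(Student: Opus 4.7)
The plan is to reduce the conjecture to a hypergraph tree-packing statement and then invoke a hypergraph analogue of the Nash-Williams--Tutte theorem. First I would translate the weight hypothesis into a hypergraph density condition: associate to $(I_1,\ldots,I_t)$ a hypergraph $H$ on vertex set $[t]$ by introducing, for each $a \in \bigcup_j I_j$, the hyperedge $e_a = \{j \in [t]: a \in I_j\}$. A direct double-counting argument gives, for every nonempty $J \subseteq [t]$,
\[
  \wt(I_J) \;=\; \sum_{a:\, e_a \cap J \neq \emptyset} (|e_a \cap J| - 1),
\]
so condition (i) is exactly the statement that the ``total edge-deficit'' of $H$ restricted to $J$ is at most $(|J|-1)k$, and condition (ii) makes this bound tight at $J=[t]$. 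These are precisely the hypotheses under which one expects $H$ to decompose into $k$ edge-disjoint spanning connected sub-hypergraphs.

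Next, I would invoke such a hypergraph Nash-Williams--Tutte statement (the generalization proposed in Section~\ref{sec:conjectures}) to extract $k$ edge-disjoint spanning connected sub-hypergraphs $T^{(1)},\ldots,T^{(k)}$ of $H$, each with total edge-deficit exactly $t-1$. The edges $\{e_a\}$ are thereby partitioned into $k$ ``layers'', and each layer naturally corresponds to one copy of the $t-1$ Vandermonde rows of $M_{k,(I_1,\ldots,I_t)}$ needed, together with the $\binom{t-1}{2}k$ top cycle rows, to make the matrix square.

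I would then use the packing to exhibit nonsingularity. The strategy is to choose an evaluation of the $\alpha_a$'s layer by layer so that $M_{k,(I_1,\ldots,I_t)}$ acquires a block structure (block-triangular or block-diagonal after suitable row operations coming from the cycle-space rows) in which each layer contributes a Vandermonde-type factor, manifestly nonzero for generic evaluation. Equivalently, one can argue at the level of the polynomial $\det M_{k,(I_1,\ldots,I_t)} \in \F_q[x_1,\ldots,x_n]$ by picking, from each layer, a witness monomial coming from the corresponding spanning hypertree and checking that no cancellation occurs across layers.

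The main obstacle will be the hypergraph Nash-Williams--Tutte step: the full version required is itself a conjecture, and only a weaker form is established in this paper (Theorem~\ref{thm:weaker}), which suffices for a weakened consequence but not for the conjecture in full. Closing this gap---either by proving the hypergraph tree-packing theorem in its sharp form, or by finding a direct algebraic argument that bypasses the combinatorial decomposition and works instead with the symbolic intersection matrix---appears to be the core difficulty.
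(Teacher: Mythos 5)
This statement is a conjecture (Conjecture 5.7 of \cite{shangguan2019combinatorial0}) that the paper does \emph{not} prove; it only proves relaxations of it (Theorem~\ref{theorem-weaker}, which adds the hypothesis that all three-wise intersections are empty, and Theorem~\ref{thm:weaker}, which introduces a factor-of-$O(\log t)$ slack). So there is no proof in the paper to compare against, only the roadmap of Section~\ref{sec:conjectures}. Your plan tracks that roadmap closely: the hypergraph $H$ with hyperedges $e_a=\{j: a\in I_j\}$, the identity $\wt(I_J)=\sum_a \max(0,|e_a\cap J|-1)$, and the translation of (i)--(ii) into $k$-weak-partition-connectivity are exactly Lemmas~\ref{lem:new-1}, \ref{lem:rand-2}, and \ref{lem:new-2}, and the monomial-extraction argument for $\det M_{k,(I_1,\ldots,I_t)}$ is Theorem~\ref{thm:complete}. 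You also correctly identify that the hypergraph packing step is the open part. To that extent the proposal is an accurate outline of an attack, not a proof, and you say so.

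There is, however, a concrete flaw in the \emph{form} of the hypergraph statement you propose to invoke. You ask for a decomposition of $H$ into $k$ edge-disjoint spanning connected sub-hypergraphs ("layers"), and you plan to rule out cancellation in the determinant because distinct layers use distinct hyperedges. This sharp statement is provably false: the remark following Corollary~\ref{cor:rand-6} exhibits hypergraphs that are roughly $2k$-weakly-partition-connected yet admit at most $k-1$ edge-disjoint connected spanning subhypergraphs, so no such layering exists under hypotheses (i)--(ii) alone. The paper's Conjecture~\ref{conj:weak} is deliberately weaker: it asks only for a tree-assignment with a \emph{$k$-distinguishable} subgraph, i.e., a $k$-tree-decomposition whose signature $\sum_i i\cdot v^{T_i}$ is unique among all tree-decompositions, and the $k$ trees are explicitly allowed to share hyperedge labels (see Example~\ref{ex:1}, where the orange label appears in both $T_0$ and $T_1$). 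The non-cancellation step is then carried by the uniqueness of the signature, not by label-disjointness; label-disjointness is merely a sufficient condition (Lemma~\ref{lem:disj-distinguishable}) that is only achievable with the $O(\log t)$ slack of Theorem~\ref{thm:conj-weak}. So even granting a "hypergraph Nash-Williams--Tutte theorem," the version you state cannot close the conjecture, and the cancellation analysis you defer is precisely where the remaining difficulty sits.
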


The conditions (i) and (ii) above turn out to be the right way of quantifying ``the $f_i$'s agree enough with $g$.''  That is, if the $f_i$'s agree too much with $g$ (in the sense of going beyond Conjecture~\ref{conjecture-0} about list-decoding), then it is possible to find sets $I_j$ so that (i) and (ii) hold.

Unfortunately, the work of \cite{shangguan2019combinatorial0} was only able to establish
Conjecture~\ref{conjecture}
for $t=3,4$ (corresponding to $L=2,3$), and it seemed challenging
to extend their techniques directly to much larger values of $L$.

\paragraph{Establishing the conjecture under an additional assumption, and using that to establish our main results.}
In this work, we use a novel connection to the Nash-Williams--Tutte theorem, which establishes the existence of pairwise edge-disjoint spanning trees in a graph, to extend the results of \cite{shangguan2019combinatorial0} to larger $L$, at the cost of an additional assumption. More precisely, we are able to show in Theorem~\ref{theorem-weaker} (stated and proved in Section~\ref{sec:3}) that Conjecture~\ref{conjecture} holds, \em provided \em that all three-wise intersections $I_i\cap I_j\cap I_\ell$ of the sets $I_j$ are empty.

The connection to the Nash-Williams--Tutte theorem is explained in Section~\ref{sec:3}.  Briefly, we consider each term in the expression
\[ \det(M) = \sum_{\sigma \in S_n} (-1)^{\mathrm{sgn}(\sigma)} \prod_{i=1}^n M_{i,\sigma(i)}.\]
We show that $\prod_{i=1}^n M_{i, \sigma(i)}$ is a nonzero monomial in $x_1, \ldots, x_n$ if and only if $\sigma$ picks out a tree packing of a graph\footnote{Throughout this paper, a tree packing of a graph $G$ means a collection of pairwise edge-disjoint spanning trees of $G$.}  that is determined by the sets $I_1, \ldots, I_t$.  It turns out that the requirements of (i) and (ii) in Conjecture~\ref{conjecture} translate exactly into the requirements needed to apply the Nash-Williams--Tutte theorem to this graph.  Thus, if (i) and (ii) hold, then there exists a tree packing in this graph and hence a nonzero term in $\det(M)$.

If the sets $I_i \cap I_j$ and $I_{i'} \cap I_{j'}$ that appear in the lower part of the $t$-wise intersection matrix do not intersect (that is, if there are no three-wise intersections among the sets $I_j$), then the reasoning above is enough to establish the conclusion of Conjecture~\ref{conjecture}, because all of the terms that appear in the expansion of the determinant are distinct monomials, and they cannot cancel.  This is why Theorem~\ref{theorem-weaker} has this assumption.

While Theorem~\ref{theorem-weaker} is not strong enough to immediately establish results for list-decoding or list-recovery (indeed, there is no reason that there should not be three-wise intersections for the polynomials $f_i$ discussed above), it \em is \em enough for our application to perfect hash matrices, which we work out in Section~\ref{phf}.

In order to apply Theorem~\ref{theorem-weaker} to list-decoding, we back off from Conjecture~\ref{conjecture} a bit. First, we allow a factor of $\Theta(\log t)$ slack on the right-hand sides of (i) and (ii). Second, rather than showing that the $t$-wise intersection matrix $M_{k,(I_1,\dots,I_t)}$ is nonsingular, we show that there exists a $t'$-wise intersection matrix that is nonsingular for some $t' < t$. Following the connection of \cite{shangguan2019combinatorial0} illustrated in Figure~\ref{fig:matvec}, this turns out to be enough to establish our main theorem on list-decoding/recovery.

We choose this smaller intersection matrix in Lemma~\ref{lem_subset} by carefully choosing a random subset $J$ of $[t]$. By greedily removing elements from the sets $\{I_j:j\in J\}$, we can obtain subsets $I_j'\subseteq I_j$ with empty three-wise intersections $I_j'\cap I_{j'}'\cap I_{j''}'=\emptyset$. Furthermore, by the careful random choice of $J$, and since we allowed a $\Theta(\log t)$ slack in the initial weight bounds, we can show this step does not delete too many elements. This is the key step of Lemma~\ref{lem_subset}.
Using some of the sets $\{I_j:j\in J\}$, we can find a smaller intersection matrix obeying the setup of Conjecture~\ref{conjecture} with the additional guarantee that all three-wise intersections are empty. We provide a more detailed summary of the proof in Section~\ref{sec:overview}.

\paragraph{Another avenue to list-decoding: a hypergraph Nash-Williams--Tutte conjecture.}
Extending our connection of list-decoding RS codes to the Nash-Williams--Tutte theorem, we show that a suitable \em hypergraph \em generalization of the Nash-Williams--Tutte theorem would imply Conjecture~\ref{conjecture} about the nonsingularity of intersection matrices, without any need for an additional assumption about three-wise intersections of the sets $I_j$.

%\ryl{De-emphasized the second proof. Old text: As mentioned above, we actually give a \em second \em proof of Theorem~\ref{thm:main-LD} on list-decoding.Our second proof is inspired by the observation that a suitable \em hypergraph \em generalization of the Nash-Williams--Tutte theorem would imply Conjecture~\ref{conjecture} about the nonsingularity of intersection matrices, without any need for an additional assumption about three-wise intersections of the sets $I_j$.  }

We conjecture that such a generalization is true, and we state it in Section~\ref{sec:conjectures} as Conjecture~\ref{conj:weak}. It requires a bit of notation to set up, so we do that in Section~\ref{sec:conjectures} rather than here; however, the reader interested in the hypergraph conjecture can at this point jump straight to Section~\ref{sec:conjectures} without missing anything.

We show that if our hypergraph conjecture were true, it would imply Conjecture~\ref{conjecture}, on the nonsingularity of intersection matrices (Theorem~\ref{thm:complete-2}).  This in turn would imply
Conjecture~\ref{conjecture-0}, establishing the existence of RS codes with optimal list-decodability.  This suggests a plan of attack towards Conjecture~\ref{conjecture-0}.

While we are unable to establish this challenging conjecture in full, we give some evidence for it.
First, we show that the ``easy part'' of the conjecture follows from the Nash-Williams--Tutte theorem.
Second, we observe that a quantitative relaxation of the conjecture follows from known results on Steiner tree packings \cite{CS07} and disjoint bases of polymatroids \cite{CCV09}.
This relaxation can be combined with the connection of hypergraph packings and intersection matrices established in Theorem~\ref{thm:complete-2}, and the connection between intersection matrices and list decoding RS codes, to give a second proof of Theorem~\ref{thm:main-LD}, that there are \emph{near}-optimally list-decodable RS codes.

In addition to implying the optimal list-decodability of RS codes, Conjecture~\ref{conj:weak} may be of independent interest.
A hypergraph generalization of Nash-Williams--Tutte is known for \emph{partition-connected} hypergraphs \cite{frank2003decomposing} (see Section~\ref{sec:conjectures} for definition), a well studied notion.
However, for a different notion called \emph{weak-partition-connectivity}, less seems to be known, and Conjecture~\ref{conj:weak} poses a Nash-Williams--Tutte generalization for weakly-partition-connected hypergraphs.
%\ryl{Added this, hopefully we are okay with this wording}

%\ryl{Old text: While we are unable to establish this hypergraph conjecture in full, we are able to establish a quantitative relaxation of it (Theorem~\ref{thm:conj-weak}).  This in turn establishes a quantitative relaxation of Conjecture~\ref{conjecture} (Conjecture~\ref{conj:weak-matrix}) about the nonsingularity of intersection matrices, which in turn establishes a quantitative relaxation of Conjecture~\ref{conjecture-0} (Conjecture~\ref{conjecture-0-weaker}); that is, there are \emph{near}-optimally list-decodable RS codes. This gives a second proof of Theorem~\ref{thm:main-LD}, and also gives evidence that our plan of attack is plausible.  We illustrate the outline of our plan of attack, and our two proofs of Theorem~\ref{thm:main-LD}, in Figure~\ref{fig:M1}.}

\paragraph{Organization.}
A graphical overview of our results can be found in Figure~\ref{fig:M1}.
We begin in Section~\ref{preliminaries} with the needed notation and definitions, including the definition of $t$-wise intersection matrices.

In Sections~\ref{sec:3}, \ref{phf}, and \ref{list-recovery}, we prove Theorem~\ref{thm:main-LD} and Theorem~\ref{thm:main} using our proof of Conjecture~\ref{conjecture} under the additional assumption of no three-wise intersections.  More precisely,
in Section~\ref{sec:3}, we show how to use the Nash-Williams--Tutte theorem from graph theory to prove Theorem~\ref{theorem-weaker}, which establishes Conjecture~\ref{conjecture} under the assumption of no three-wise intersections.
In Section~\ref{phf} we use Theorem~\ref{theorem-weaker} to prove Theorem~\ref{perfect-hash-matrix} about perfect hash matrices. A reader interested in the list-recovery result can skip Section \ref{sec:3} and continue directly to Section \ref{list-recovery}.
In Section~\ref{list-recovery} we prove Theorem~\ref{thm:main} (and thus Theorem~\ref{thm:main-LD}) on the list-recoverability (list-decodability) of RS codes.

In Sections~\ref{sec:conjectures}, we conjecture a hypergraph generalization of the Nash-Williams--Tutte theorem and prove (Theorem~\ref{thm:complete-2}) that it implies optimal list-decodability of RS codes.
We also give some evidence for it by observing an ``easy direction'' follows from the ordinary Nash-Williams--Tutte theorem, by highlighting a known relaxation, and sketching how this relaxation gives us a second proof of Theorem~\ref{thm:main-LD}.
%\ryl{Old text: In Sections~\ref{sec:conjectures} and \ref{sec:hypergraph}, we give our second proof of Theorem~\ref{thm:main-LD} that uses our hypergraph version of the Nash-Williams--Tutte theorem.  More precisely, Section~\ref{sec:conjectures} introduces our plan of attack and our hypergraph conjecture.  In Section~\ref{sec:hypergraph}, we prove a quantitatively weaker version of the hypergraph conjecture, and show how this can be used to give a second proof of Theorem~\ref{thm:main-LD}.}

%\input{dep_graph}

\begin{figure*}
\label{diagram}
\centering
\begin{tikzpicture}[node distance=2cm]
\footnotesize

\node (1)[startstop, yshift=0cm, fill=white!20]{\underline{Conjecture \ref{conj:weak}} \\ Hypergraph Nash-Williams--Tutte Conjecture};
\node (2)[startstop, right of =1, xshift=3cm,fill=white!20]{\underline{Conjecture \ref{conjecture}}\\ Nonsingularity of intersection matrices};
\node (3)[startstop, right of =2, xshift=3cm,fill=white!20]{\underline{Conjecture \ref{conjecture-0}}\\ Optimal List Decoding of RS Codes};
\draw [arrow] (1) --node[anchor=south] {Thm. \ref{thm:complete-2}}(2);
\draw [arrow] (2) --node[anchor=south] {\cite{shangguan2019combinatorial0} }(3);

\node (0)[startstop, above of=1, yshift=3cm,fill=white!20]{\underline{Lemma \ref{Nash-Williams}}\\ Nash-Williams--Tutte theorem \cite{nash1961edge,tutte1961problem}.};

\node (7)[startstop, right of=0, xshift=3cm,fill=white!20]{\underline{Theorem \ref{theorem-weaker}}\\ Nonsingularity of intersection matrices, provided that $I_i \cap I_j \cap I_\ell = \emptyset$ for all distinct $i,j,l$.};

\node (9)[startstop,  thick, right of =7, xshift=3cm,fill=white!20]{\underline{Theorem \ref{thm:main}} \\Main Theorem: List Recovery of RS codes};
\draw [arrow] (0)--(7);
\draw [arrow] (7)--(9);
\node (8)[startstop, thick, above of =9, yshift=0.5cm,fill=white!20]{\underline{Theorem \ref{perfect-hash-matrix}} \\Existence of strongly perfect hash matrices};
\draw [arrow] (7) |-(8);

\node (6)[startstop, thick, below of =9, xshift=0cm,yshift= -0.5cm,fill=white!50!white!20]{\underline{Theorem \ref{thm:main-LD}}\\ List Decoding of RS codes };
\draw [arrow] (9) --(6);

\node [left of =1,color=red!70!black,anchor=east]
{\begin{minipage}{3cm} \begin{center} Proposed roadmap to optimal list-decoding, presented in Section~\ref{sec:conjectures}.
\end{center}\end{minipage}};
\node [left of =0,color=green!50!black,anchor=east]
{\begin{minipage}{3cm}
\begin{center}
Our proof of Theorem~\ref{thm:main-LD}, presented in Sections~\ref{sec:3},~\ref{phf},~\ref{list-recovery}
\end{center}\end{minipage}};

\draw[dashed] (1) to (0);
\draw[dashed] (2) to (7);
\draw[dashed] (3) to (6);
\end{tikzpicture}

\caption{
A diagram of the results and conjectures presented in this work.
Solid arrows represent logical implications.
Dashed lines indicate how the proposed roadmap to optimal list decoding parallels our proof of Theorem~\ref{thm:main-LD}.
}\label{fig:M1}
\end{figure*}

\section{Preliminaries}\label{preliminaries}

The main goal of this section is to present the definition of $t$-wise intersection matrices over an arbitrary field $\mathbb{F}$.

Let $\N^+=\{1,2,\dots\}$ and $[n]=\{1,2,\dots, n\}$ for $n\in\N^+$. Denote by $\log x$  the base-$2$ logarithm of $x$. For a finite set $X$ and an integer $1\le k\le |X|$, let $\binom{X}{k}=\{A\subseteq X:|A|=k\}$ be the family of all $k$-subsets of $X$. For an integer $t\ge 3$, we define the following lexicographic order on $\binom{[t]}{2}$. For distinct $S_1,S_2 \in\binom{[t]}{2}$, $S_1< S_2$ if and only if $\max(S_1)<\max(S_2)$ or $\max(S_1)=\max(S_2)$ and $\min(S_1)<\min(S_2)$.
For a partition $\mathcal{P}$ of $X$, let $|\mathcal{P}|$ denote the number of parts of $\mathcal{P}$.
In the remaining part of this paper, assume that $n,k$ are integers satisfying $1\le k<n$.

We view a polynomial $f\in\mathbb{F}_q[x]$ of degree at most $k-1$ as a vector of length $k$ defined by its $k$ coefficients, where for $1\le i\le k$, the $i$-th coordinate of this vector is the coefficient of $x^{i-1}$ in $f$. By abuse of notation that vector is also denoted by $f$.

\subsection{Cycle Spaces}
We need the notion of the {\it cycle space} of a graph, which is typically defined over the Boolean field $\mathbb{F}_2$ (see, e.g.,  \cite{diestel2017graphentheory}). Here we define it over an arbitrary field $\mathbb{F}$. An equivalent definition can be found in \cite{BBNN93}, where it is called the ``circuit-subspace''.

Let $K_t$ be the undirected complete graph with the vertex set $[t]$. Denote by  $\{i,j\}$  the edge connecting vertices $i$ and $j$. Let %$K_t^o$
\black{{\it oriented $K_t$}}
be the oriented graph obtained by replacing $\{i,j\}$ with the directed edge $(i,j)$ for all $1\le i < j\le t$.
For a \black{undirected} graph $G$ with vertex set $[t]$, an {\it oriented cycle} in $G$ is a set of directed edges of the form
\[
C=\{(i_0, i_1), (i_1, i_2), \dots, (i_{m-1}, i_m)\}
\]
 where $m\geq 3$, $i_0,\dots,i_{m-1}$ are distinct, $i_m=i_0$ and $\{i_{j-1}, i_j\}$ is an edge of $G$ for all $j=1,\dots,m$.

 Suppose $C$ is a union of edge-disjoint oriented cycles in $G$.
Then $C$ is uniquely represented by a vector $u^C=(u^C_{\{i,j\}}: \{i,j\}\in \binom{[t]}{2})\in\mathbb{F}^{\binom{t}{2}}$, defined  for $ 1\leq i<j\leq t$ by
 \[
u^C_{\{i,j\}}=\begin{cases}1 & (i,j)\in C, \\
 -1 & (j,i)\in C, \\
 0 & \text{else}.
 \end{cases}  \
 \]
 Hence, the sign of a nonzero coordinate  $u^C_{\{i,j\}}$ indicates whether the orientation of $\{i,j\}$ in $C$ complies with its orientation in \black{the oriented $K_t$}. %$K_t^o$.
We further assume that the coordinates of $u^C $ are ordered by the aforementioned lexicographic order on $\binom{[t]}{2}$.

Denote by $C(G)\subseteq \mathbb{F}^{\binom{t}{2}}$ the subspace spanned by the set of vectors  $$\black{C(G)=}\{u^C: C \text{ is an oriented cycle in } G\}$$ over $\mathbb{F}$.
We call $C(G)$ the {\it cycle space} of $G$ over $\mathbb{F}$. We are particularly interested in the cycle space $C(K_t)$ of $K_t$.
For distinct $i,j,\ell\in [t]$, denote by $\Delta_{ij\ell}$ the oriented cycle $\{(i,j), (j,\ell),(\ell, i)\}$ and call it  an {\it oriented triangle}.
We have the following lemma, generalizing \cite[Theorem 1.9.5]{diestel2017graphentheory}.

\begin{lemma} \label{cycle-space-basis}
The \black{cycle} space $C(K_t)\subseteq \mathbb{F}^{\binom{t}{2}}$ has dimension $\binom{t-1}{2}$, and the set
\[
\mathcal{B}_t=\{u^{\Delta_{ijt}}:1\le i<j\le t-1\}
\]
 is a basis of \black{the cycle space} $C(K_t)$.
\end{lemma}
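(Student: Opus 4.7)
The plan is to deduce both statements by combining linear independence of $\mathcal{B}_t$ with an upper bound on $\dim C(K_t)$ coming from the signed incidence matrix of $K_t^o$, exactly as in the $\mathbb{F}_2$ case \cite[Theorem~1.9.5]{diestel2017graphentheory} but now over an arbitrary field $\mathbb{F}$. Since $|\mathcal{B}_t| = \binom{t-1}{2}$, it suffices to prove (a) that $\mathcal{B}_t$ is linearly independent and (b) that $\dim C(K_t) \le \binom{t-1}{2}$.

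For (a), I would use a one-line projection argument. Let $\pi : \mathbb{F}^{\binom{t}{2}} \to \mathbb{F}^{\binom{t-1}{2}}$ be the projection that retains only the coordinates indexed by edges $\{i,j\} \in \binom{[t-1]}{2}$ (those not incident to $t$). The oriented triangle $\Delta_{ijt}$ intersects this set in the single edge $\{i,j\}$, and since $i<j$ the directed edge $(i,j) \in \Delta_{ijt}$ agrees with the canonical orientation in $K_t^o$, so $u^{\Delta_{ijt}}_{\{i,j\}} = 1$. Hence $\pi(u^{\Delta_{ijt}}) = e_{\{i,j\}}$, and the images form (a permutation of) the standard basis, which forces $\mathcal{B}_t$ to be linearly independent.

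For (b), I would introduce the signed incidence matrix $D \in \mathbb{F}^{[t] \times \binom{t}{2}}$ of $K_t^o$ defined, for $i<j$, by $D_{v,\{i,j\}} = +1$ if $v=j$, $-1$ if $v=i$, and $0$ otherwise, and show $C(K_t) \subseteq \ker(D)$ by verifying $Du^C = 0$ for every oriented cycle $C$. This is a direct case check at any $v \in C$: whether the edge of $C$ entering (resp. leaving) $v$ goes with or against the canonical orientation of $K_t^o$, the product $D_{v,e}\, u^C_e$ equals $+1$ on the incoming edge and $-1$ on the outgoing edge, while edges not in $C$ contribute $0$; by linearity this extends to edge-disjoint unions. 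The rank of $D$ is then computed by noting that $\ker(D^\top) = \{x \in \mathbb{F}^t : x_j = x_i \text{ for all } 1 \le i < j \le t\}$ consists of constant vectors because $K_t$ is connected (this holds over any field), so $\rank(D) = t-1$ and $\dim\ker(D) = \binom{t}{2} - (t-1) = \binom{t-1}{2}$.

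Combining (a) and (b) with $|\mathcal{B}_t|=\binom{t-1}{2}$ forces $\dim C(K_t) = \binom{t-1}{2}$ and $\mathcal{B}_t$ to be a basis, and in fact yields the stronger equality $C(K_t) = \ker(D)$. The step that requires the most care, and is the only place where Diestel's $\mathbb{F}_2$ proof does not transfer verbatim, is the sign verification $Du^C=0$: because the vectors $u^C$ mix $+1$ and $-1$ entries dictated by how the cycle's orientation compares with that of $K_t^o$, I would spell out each of the four sign configurations at an incident vertex and check that they pair off correctly. Every other ingredient (connectedness giving $\rank(D)=t-1$, the projection argument, and linearity in $C$) is standard and independent of the characteristic of $\mathbb{F}$.
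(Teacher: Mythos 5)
Your proof is correct, but the spanning half is argued differently from the paper. The linear-independence step is the same in both (the paper also observes that $u^{\Delta_{ijt}}$ restricted to the coordinates in $\binom{[t-1]}{2}$ is a standard basis vector). For the rest, the paper proves directly that $\mathcal{B}_t$ spans $C(K_t)$: given an oriented cycle $C$ with lexicographically smallest edge $\{i,j\}=e_C$, it subtracts $u^{\Delta_{ijt}}$ and checks that the result is the vector of a cycle (or of a disjoint union of two cycles through $t$) whose smallest edge is strictly larger, so a reverse induction on $e_C$ expresses every $u^C$ in the span of $\mathcal{B}_t$; the dimension formula then falls out. You instead sandwich the dimension: $\mathcal{B}_t\subseteq C(K_t)\subseteq\ker(D)$ together with $\dim\ker(D)=\binom{t}{2}-(t-1)=\binom{t-1}{2}=|\mathcal{B}_t|$ forces equality everywhere. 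Both arguments are sound over an arbitrary field; the key sign checks ($Du^C=0$ at each vertex of $C$, and the rank of $D$ via connectedness of $K_t$) are exactly as you describe. Your route buys the extra identity $C(K_t)=\ker(D)$, i.e., the cycle space is the full kernel of the signed incidence matrix, and it is arguably cleaner since it avoids the slightly delicate case analysis of whether $u^C-u^{\Delta_{ijt}}$ is one cycle or two; the paper's route is more self-contained in that it produces an explicit expansion of each $u^C$ in the basis. Amusingly, the matrix $D$ you introduce is precisely the one the paper later defines in the proof of Claim~\ref{claim-property-of-B_t}, where the identity $\mathcal{B}_t\cdot D=0$ is used for a different purpose, so your proof unifies the two.
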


\begin{proof}
\black{Note that $\mathcal{B}_t$ is formed by the set of vectors representing the oriented triangles $\Delta_{ijt}$ with the third vertex fixed to $t$.} The vectors in $\mathcal{B}_t$  are linearly independent since  $u^{\Delta_{ijt}}_{\{i,j\}}=1$ and $u^{\Delta_{ijt}}_{\{i',j'\}}=0$ for $1\leq i<j\leq t-1$ and  $1\leq i'<j'\leq t-1$ with $\{i,j\}\neq \{i',j'\}$.
Let $W$ be the span of $\mathcal{B}_t$ over $\mathbb{F}$.
Consider an arbitrary oriented cycle $C$ in $K_t$. We claim that $u^C\in W$, and this would imply that $\mathcal{B}_t$ is a basis of \black{the cycle space} $C(K_t)$ and that the dimension of $C(K_t)$ is $|\mathcal{B}_t|=\binom{t-1}{2}$.

Denote by $e_C$ the smallest $\{i,j\}\in \binom{[t]}{2}$ in the lexicographic order such that $(i,j)\in C$ or $(j,i)\in C$.
Next, we will prove the claim by a reverse induction on the lexicographic order of $e_C$. Note that $t\not\in e_C$ since $|C|\geq 3$, which implies that the claim is vacuously true when $e_C=\{t-1,t\}$ (which never occurs).
Now assume that the claim holds for all oriented cycles $C'$ with $e_{C'}> e_{C}$.
Let $\{i,j\}=e_C$, where $i<j$.
We may assume that $(i,j)\in C$ by flipping the orientation of $C$ if necessary, which corresponds to negating  $u^C$.

Let $s$ be the number of directed edges  that $C$ and $\Delta_{ijt}$ share. If $s=3$ then it is clear by definition that
$C=\Delta_{ijt}$, and we are done. Otherwise, $1\leq s\leq 2$ and it is easy to verify \black{by a case analysis} that $u^C-u^{\Delta_{ijt}}=u^{C'}$ for a set $C'$ that is either an oriented cycle in $G$ or a disjoint union of two oriented cycles $C_1, C_2$ in $G$ passing through $t$. The latter case occurs when \black{$s=1$, $C$ passes through $t$, and $\{(t,i),(j,t)\}\cap C=\emptyset$}. \black{As $C'$ is obtained from $C$ by removing its smallest edge $(i,j)$ (under the lexicographic order) and adding one (when $s=2$) or two (when $s=1$) edges greater $(i,j)$,} in either case, the smallest edge of $C'$  is greater than the edge $e_C=\{i,j\}$. Hence, by the induction hypothesis and the fact that $u^{C'}=u^{C_1}+u^{C_2}$ when $C'$ is the disjoint union of $C_1$ and $C_2$, we have $u^{C'}\in W$. So $u^C=u^{C'}+u^{\Delta_{ijt}}\in W$, completing the proof of the claim.
\end{proof}

The  basis $\mathcal{B}_t$ is also viewed as a $\binom{t-1}{2}\times \binom{t}{2}$ matrix over $\mathbb{F}$ whose columns are labeled
by the edges $\{i,j\}$ of $K_t$, according to the lexicographic order defined above. Moreover, the rows of $\mathcal{B}_t$ represent $u^{\Delta_{ijt}}$ for $1\le i<j\le t-1$, and are labeled by $\{i,j\}\in \binom{[t-1]}{2}$, also according to the lexicographic order. %on $\binom{[t-1]}{2}$.
For example, $\mathcal{B}_3=(1,-1,1)$ and
\begin{equation}\label{Def-B_4}
  \begin{aligned}
    \mathcal{B}_4=\left(
      \begin{array}{cccccc}
        1 &  &  & -1 & 1 &  \\
         & 1 &  & -1 &  & 1 \\
         &  & 1 &  & -1 & 1 \\
      \end{array}
    \right),
  \end{aligned}
\end{equation}
where the $6$ columns are labeled and ordered lexicographically by  $\{1,2\}<\{1,3\}<\{2,3\}<\{1,4\}<\{2,4\}<\{3,4\}$.
Observe for example that the $\pm 1$ entries in the first row correspond to the oriented triangle $\Delta_{124}=\{(1, 2), (2,4), (4,1)\}$, where we have $-1$ on the column labeled by the edge $\{1,4\}$, since the directed edge $(4,1)$ in $\Delta_{124}$ has the opposite orientation from the  orientation of the edge in \black{the oriented $K_t$}. %$K_t^o$.

We remark that the above definition of $\mathcal{B}_t$, is  given with respect to the fixed orientation of the edges of \black{the oriented $K_t$},
%$K_t^o$,
as with the definition of $u^C$ for any oriented cycle $C$. One may define $\mathcal{B}_t$ with respect to other orientations of edges, which corresponds to changing the signs in some columns. These definitions are all equivalent and the analysis in this paper holds for any orientation up to change of signs.

Moreover, when the characteristic of $\mathbb{F}$ is two,  we recover the definition of $\mathcal{B}_t$  in \cite{shangguan2019combinatorial0} using the fact that $1=-1$. While working in the case $\mathrm{char}(\mathbb{F})=2$ has the advantage that  there is no need to distinguish the signs, the theory holds more generally over any field.

\subsection{$t$-Wise Intersection Matrices}

We proceed to define $t$-wise intersection matrices, but we begin with a few preliminary definitions.
Given  $n$  variables or field elements $x_1,\ldots,x_n$, define the  $n\times k$ Vandermonde matrix
\begin{equation}\label{vandermonde}
V_k(x_1,\ldots,x_n)=\left(
    \begin{array}{cccc}
      1  & x_{1} & \cdots & x_{1}^{k-1} \\
       &  & \ddots &  \\
      1  & x_{n} & \cdots & x_{n}^{k-1} \\
    \end{array}
  \right).
\end{equation}
When the $x_i$'s are understood from the context, for $I\subseteq [n]$, we use the abbreviation $V_k(I):=V_k(x_i:i\in I)$ to denote the restriction of $V_k(x_1,\ldots,x_n)$ to the rows with indices in $I$.

Let $\mathcal{I}_k$ denote the identity matrix of order $k$. Next, we give the definition of $t$-wise intersection matrices.
\begin{definition}[$t$-wise intersection matrices]\label{Def-matrix-general}
For a positive integer $k$ and $t\geq 3$ subsets $I_1,\ldots,I_t\subseteq[n]$, the $t$-wise intersection matrix  $M_{k,(I_1,\ldots,I_t)}$  is  the
$(\binom{t-1}{2}k+\sum_{1\le i<j\le t}|I_i\cap I_j|)\times \binom{t}{2}k$ variable matrix with entries in $\mathbb{F}[x_1, \ldots,x_n]$, defined as

\begin{equation*}\label{int-matrix}
\black{M_{k,(I_1,\ldots,I_t)}=}\left(
  \begin{array}{c}
    \mathcal{B}_t\otimes \mathcal{I}_k  \\\hline

  {\rm diag}\Big(V_k(I_i\cap I_j): \{i,j\}\in \binom{[t]}{2}\Big) \\
  \end{array}
\right),
\end{equation*}
where $\otimes$ is tensor product of matrices and
\begin{itemize}
\item $\mathcal{B}_t\otimes \mathcal{I}_k$ is a $\binom{t-1}{2}k\times \binom {t}{2}k$ matrix with entries in $\{0, \pm 1\}$ \black{(recall that the matrix $\mathcal{B}_t$ is defined below the proof of Lemma \ref{cycle-space-basis})},

\item ${\rm diag}\big(V_k(I_i\cap I_j): \{i,j\}\in \binom{[t]}{2}\big)$ is a block diagonal matrix with blocks  $V_k(I_i\cap I_j)$,  ordered by the lexicographic order on $\{i,j\}\in\binom{[t]}{2}$. Note that this matrix has order $(\sum_{1\le i<j\le t}|I_i\cap I_j|)\times \binom {t}{2}k$. If $I_i\cap I_j=\emptyset$ for some $i,j$, then $V_k(I_i\cap I_j)$ is of order $0\times k$ and the $\{i,j\}\in \binom{[t]}{2}$ block of $k$ columns is a $\sum_{1\le i<j\le t}|I_i\cap I_j|\times k$ zero matrix.
%\zeyu{I do not understand the last sentence. Isn't the block $V_k(I_i\cap I_j)$ a $|I_i\cap I_j|\times k$ matrix? And why is it zero?}\ryl{I rearranged the wording to what I think was meant. Maybe it is more clear now?} \zeyu{Thanks. I see what you mean now.}
\end{itemize}
\end{definition}

\noindent \black{To avoid possible confusion, we want to emphasize that we use the capital $I_1,\ldots,I_t$ to denote subsets of $[n]$, and use the curly $\mathcal{I}_k$ to denote the identity matrix of order $k$.} The reader is referred to Appendix~\ref{app:1} (see Example \ref{example-1}) for an example of a $4$-wise intersection matrix. We note that when $t=2$, $\mathcal{B}_t$ is an empty matrix and $M_{k,(I_1,I_2)}$ is simply a Vandermonde matrix.

For a vector $\alpha\in\mathbb{F}^n$, the evaluation of $M_{k,(I_1,\ldots,I_t)}$ at the vector $\alpha$ is denoted by $M_{k,(I_1,\ldots,I_t)}(\alpha)$, where each variable $x_i$ is assigned the value $\alpha_i$. Given subsets $I_1,\ldots,I_t\subseteq[n]$, we call the variable matrix $M_{k,(I_1,\ldots,I_t)}$ {\it nonsingular} if it contains at least one $\binom{t}{2}k\times\binom{t}{2}k$ submatrix whose determinant is a nonzero polynomial in $\mathbb{F}[x_1,\ldots,x_n]$.

The paper \cite{shangguan2019combinatorial0} connects the nonsingularity of intersection matrices to the list-decodability of RS codes.
We will use this connection to prove our main result, Theorem~\ref{thm:main}.

However, we will first prove that \em certain \em intersection matrices are nonsingular.  This will both allow us to cleanly illustrate the connection to disjoint tree packings of graphs, and it will also yield Theorem~\ref{perfect-hash-matrix} about perfect hash matrices.  We will do this in Theorem~\ref{theorem-weaker} in the next section.

\iffalse
We will make use of the following easy claim.
\black{\begin{claim}\label{cla:degree}
    With the notation above, if the variable matrix $M_{k,(I_1,\ldots,I_t)}$ contains a $\binom{t}{2}k\times\binom{t}{2}k$ submatrix whose determinant is a nonzero polynomial in $\mathbb{F}[x_1,\ldots,x_n]$, then the degree of such a polynomial cannot exceed $(t-1)k(k-1)$.
\end{claim}}
\black{\begin{proof}
\end{proof}}
\fi

\section{Connection to Tree Packing and an Intermediate Result}
\label{sec:3}

In this section we prove the following theorem.  We recall from \eqref{definition-weight} the definition of the \em weight \em of a collection of sets:

\begin{equation}
  {\rm wt}(I_1,\ldots,I_t)=\sum_{i=1}^t |I_i|-\left|\bigcup_{i=1}^t I_i\right|.
\end{equation}

\begin{theorem}\label{theorem-weaker}
  Let $t\ge 2$ be an integer and $I_1,\ldots,I_t\s[n]$ be subsets satisfying \rm{(i)} $I_i\cap I_j\cap I_{\ell}=\emptyset$ for all $1\le i<j<\ell\le t$; \rm{(ii)} $\wt(I_J)\leq (|J|-1)k$ for all  nonempty $J\subseteq [t]$; \rm{(iii)}  $\wt(I_{[t]})=(t-1)k$. Then the $t$-wise intersection matrix $M_{k,(I_1,\ldots,I_t)}$ is nonsingular over any field.
\end{theorem}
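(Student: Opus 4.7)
The plan is to exhibit a single monomial in $\det M_{k,(I_1,\ldots,I_t)}$ whose coefficient in the Leibniz expansion equals $\pm 1$, using the Nash--Williams--Tutte theorem to supply this monomial from an edge-disjoint spanning tree packing. First, I would associate to $(I_1,\ldots,I_t)$ the multigraph $G$ on vertex set $[t]$ that places one edge between $i$ and $j$ for each $s \in I_i \cap I_j$. Condition (i) ensures each $s$ contributes to at most one such pair, so every $s$ is identified with a unique edge of $G$, and inclusion--exclusion together with (i) gives $\wt(I_J) = \sum_{\{i,j\}\subseteq J} |I_i \cap I_j|$; equivalently, $\wt(I_J)$ equals the number of edges of $G$ induced on $J$. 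In particular, (iii) yields $|E(G)| = (t-1)k$. For any partition $\mathcal{P} = \{P_1,\ldots,P_r\}$ of $[t]$, the number of crossing edges is $|E(G)| - \sum_a \wt(I_{P_a}) \geq (t-1)k - \sum_a (|P_a|-1)k = (r-1)k$, using (ii). The Nash--Williams--Tutte theorem then produces $k$ pairwise edge-disjoint spanning trees $T_1,\ldots,T_k$ of $G$, which by the edge count must partition $E(G)$.

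Next, let $c: E(G) \to [k]$ be the coloring defined by $c(s) = m$ iff $s \in T_m$, and consider the monomial $\mu = \prod_{s \in E(G)} x_s^{c(s)-1}$. I would compute the coefficient of $\mu$ in $\det M_{k,(I_1,\ldots,I_t)} = \sum_\sigma \text{sgn}(\sigma) \prod_r M_{r,\sigma(r)}$ by identifying which permutations $\sigma$ contribute. Condition (i) is critical here: the variable $x_s$ appears in only one Vandermonde block of $M$, the block indexed by the edge of $G$ containing $s$---call it $\{i_s, j_s\}$---so the exponent $c(s)-1$ of $x_s$ in $\mu$ together with $\{i_s,j_s\}$ uniquely forces the Vandermonde row indexed by $s$ to be matched to the column $(\{i_s, j_s\}, c(s))$. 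This forced assignment is a bona fide bijection because the trees $T_1,\ldots,T_k$ are pairwise edge-disjoint, so distinct copies $s \neq s'$ of the same multi-edge lie in different trees and receive different colors. The top half of $\sigma$ must then bijectively match each top row $(\Delta_{ijt}, m)$ to a remaining column $(\{i,j\}, m)$ with $\{i,j\} \in E(K_t) \setminus T_m$.

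Summing over the permissible top matchings, the coefficient of $\mu$ equals a fixed global sign times $\prod_{m=1}^{k} \det\bigl(\mathcal{B}_t[:, E(K_t) \setminus T_m]\bigr)$, because $\mathcal{B}_t \otimes \mathcal{I}_k$ decomposes across offsets as $k$ copies of $\mathcal{B}_t$, and at offset $m$ the columns left after the Vandermonde assignment are exactly those labelled by $E(K_t) \setminus T_m$. The main technical step, which I expect to be the most delicate part of the argument, is proving that $\det(\mathcal{B}_t[:, E(K_t) \setminus T]) = \pm 1$ for every spanning tree $T$ of $K_t$. I would establish this via a change-of-basis argument inside the integer cycle space $C(K_t; \mathbb{Z})$: the fundamental cycle matrix $F_T$, whose rows are $u^{C_e}$ for $e \in E(K_t) \setminus T$ with $C_e$ the unique cycle in $T \cup \{e\}$, satisfies $F_T[:, E(K_t) \setminus T] = \pm I$ and hence has determinant $\pm 1$; since both $\mathcal{B}_t$ (Lemma~\ref{cycle-space-basis}, applied over $\mathbb{Z}$) and $F_T$ are $\mathbb{Z}$-bases of $C(K_t; \mathbb{Z})$, the transition matrix $P$ defined by $\mathcal{B}_t = P \cdot F_T$ lies in $GL_{\binom{t-1}{2}}(\mathbb{Z})$, giving $\det(\mathcal{B}_t[:, E(K_t) \setminus T]) = \det(P) \cdot (\pm 1) = \pm 1$. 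This quantity is nonzero in any field, so the coefficient of $\mu$ in $\det M_{k,(I_1,\ldots,I_t)}$ is $\pm 1 \neq 0$, establishing nonsingularity.
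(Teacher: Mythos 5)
Your proposal is correct and follows essentially the same route as the paper: build the multigraph $G$ with $|I_i\cap I_j|$ parallel edges, verify the Nash--Williams--Tutte hypothesis from (ii) and (iii) via the identity $\wt(I_J)=\sum_{\{i,j\}\subseteq J}|I_i\cap I_j|$ (which needs (i)), take the monomial determined by the resulting tree packing, and use (i) again to see that the Vandermonde rows are forced, so the coefficient reduces to $\prod_m \det\bigl(\mathcal{B}_t[:,E(K_t)\setminus T_m]\bigr)$. The one place you diverge is the supporting lemma: the paper proves only that deleting columns labelled by an acyclic subgraph preserves the row rank of $\mathcal{B}_t$ (Claim~\ref{claim-property-of-B_t}, via the incidence matrix $D$ with $\mathcal{B}_t D=0$), whereas you prove the sharper statement that the determinant is $\pm 1$ by observing that $\mathcal{B}_t$ is itself the fundamental cycle basis for the star at $t$ and that transition matrices between fundamental cycle bases over $\mathbb{Z}$ are unimodular; your version yields the exact coefficient $\pm 1$ rather than mere nonvanishing, at the cost of invoking the integral cycle lattice, but either suffices for the theorem.
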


\black{Recall that for $J\subseteq [t]$, we use the shorthand   ${\rm wt}(I_J):={\rm wt}(I_j:j\in J)$.}
As discussed above, this theorem stops short of Conjecture~\ref{conjecture}, due to the assumption that $I_i \cap I_j \cap I_\ell = \emptyset$.  In the language of list-decoding Reed--Solomon codes, this only gives us a statement about lists of potential codewords that have no three-wise intersections.
However, we will build on this statement to prove our main theorem about list-recovery (Theorem~\ref{thm:main}), and moreover this is already enough to
prove our result on the existence of strongly perfect hash matrices (Theorem \ref{perfect-hash-matrix}).

The main tool of proving Theorem \ref{theorem-weaker} is the following classical result in graph theory.

\begin{lemma}[Nash-Williams \cite{nash1961edge}, Tutte \cite{tutte1961problem}, see also Theorem 2.4.1 of \cite{diestel2017graphentheory}]\label{Nash-Williams}
  A multigraph contains $k$ edge-disjoint spanning trees if and only if for every partition $\mathcal{P}$ of its vertex set it has at least $(|\mathcal{P}|-1)k$ cross-edges.
  Here an edge is called a cross-edge for $\mathcal{P}$ if its two endpoints are in different members of $\mathcal{P}$.
\end{lemma}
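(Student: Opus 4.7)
The plan is to establish the two directions separately: necessity by a direct contraction argument, and sufficiency by invoking the matroid union theorem of Edmonds--Nash-Williams and then translating its hypothesis into the partition condition.

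\textbf{Necessity.} Suppose $G$ contains $k$ edge-disjoint spanning trees $T_1,\ldots,T_k$, and fix a partition $\mathcal{P}=\{V_1,\ldots,V_p\}$ of $V(G)$. Form the multigraph $G/\mathcal{P}$ by contracting each $V_i$ to a single vertex while retaining parallel edges; by construction, every edge of $G/\mathcal{P}$ corresponds to a cross-edge of $\mathcal{P}$ in $G$. Each $T_i$ projects to a connected spanning submultigraph of $G/\mathcal{P}$, which has $p$ vertices and therefore at least $p-1$ edges. Since the $T_i$ are edge-disjoint, summing gives at least $k(p-1)$ cross-edges.

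\textbf{Sufficiency via matroid union.} Let $M=M(G)$ be the graphic matroid of $G$, whose rank function is $r(F)=|V(G)|-c(V,F)$, where $c(V,F)$ denotes the number of connected components of the subgraph $(V,F)$. The $k$-fold union $M\vee\cdots\vee M$ is the matroid on $E(G)$ whose independent sets are exactly the edge subsets decomposable into $k$ edge-disjoint forests; hence $G$ has $k$ edge-disjoint spanning trees if and only if this union matroid has rank at least $k(|V|-1)$. By Edmonds's matroid union theorem, that rank equals
\[
\min_{F\subseteq E}\bigl(|E\setminus F|+k\cdot r(F)\bigr)\;=\;\min_{F\subseteq E}\bigl(|E\setminus F|+k(|V|-c(V,F))\bigr),
\]
so the existence of $k$ edge-disjoint spanning trees is equivalent to the inequality $|E\setminus F|\ge k(c(V,F)-1)$ holding for every $F\subseteq E$.

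\textbf{From subsets to partitions.} It remains to verify that the ``subset inequality'' above is equivalent to the ``partition inequality'' in the lemma. Given a partition $\mathcal{P}$, let $F$ be the set of non-cross-edges of $\mathcal{P}$; each part of $\mathcal{P}$ is then a union of connected components of $(V,F)$, so $c(V,F)\ge|\mathcal{P}|$, and the subset inequality gives (\#cross-edges)$=|E\setminus F|\ge k(c(V,F)-1)\ge k(|\mathcal{P}|-1)$. Conversely, given $F\subseteq E$, take $\mathcal{P}$ to be the partition of $V$ into components of $(V,F)$; then every cross-edge of $\mathcal{P}$ lies in $E\setminus F$ and $|\mathcal{P}|=c(V,F)$, so the partition inequality applied to $\mathcal{P}$ yields the subset inequality.

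The principal obstacle is the matroid union theorem itself, which we would invoke as a black box; a more self-contained route is a direct induction on $|E(G)|$ using an exchange argument on a maximum collection of $k$ edge-disjoint forests (along the lines of Diestel, Theorem~2.4.1), but this is more intricate and less transparent than the matroid-theoretic proof outlined above.
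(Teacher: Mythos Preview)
Your proof is correct and follows one of the standard routes to Nash-Williams--Tutte. However, the paper does not prove this lemma at all: it is stated with citations to Nash-Williams, Tutte, and Diestel's textbook and is then used as a black box in the proof of Claim~\ref{claim-spanning-tree}. So there is nothing to compare against; the paper simply imports the result.
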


In order to apply the Nash-Williams--Tutte theorem, we will construct a graph $G$ from the sets $I_1, I_2, \ldots, I_t$.  We first note that the assumptions on $I_1, \ldots, I_t$ from Theorem~\ref{theorem-weaker} imply some nice properties that will later allow us to apply Lemma~\ref{Nash-Williams}.

\begin{claim}\label{cl:size}
Suppose that $I_1,\ldots,I_t$ are subsets satisfying the assumptions of Theorem \ref{theorem-weaker}.
Then the  matrix $M_{k,(I_1,\ldots,I_t)}$ is a square matrix of order $\binom{t}{2}k$. Further, for any $J\subseteq [t]$ with $|J|\ge 2$,
\begin{equation}\label{constraint-weaker}
  {\rm wt}(I_J)=\sum_{\{i,j\}\in\binom{J}{2}}|I_i\cap I_j|.
\end{equation}
\end{claim}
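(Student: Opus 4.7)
The plan is to prove the weight identity first and then deduce the square-matrix claim from it. The key observation is that assumption (i) of Theorem~\ref{theorem-weaker} kills all inclusion-exclusion terms of order three and higher. Concretely, for any $J\subseteq[t]$ with $|J|\ge 2$, since every triple intersection $I_i\cap I_j\cap I_\ell$ (for distinct $i,j,\ell\in J$) is empty, inclusion-exclusion applied to $|\bigcup_{j\in J} I_j|$ collapses to
\[
\Bigl|\bigcup_{j\in J} I_j\Bigr| \;=\; \sum_{j\in J}|I_j| \;-\; \sum_{\{i,j\}\in\binom{J}{2}}|I_i\cap I_j|.
\]
Rearranging and using the definition $\wt(I_J)=\sum_{j\in J}|I_j|-|\bigcup_{j\in J}I_j|$ immediately yields the identity \eqref{constraint-weaker}.

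Next, I specialize the identity to $J=[t]$ and combine it with assumption (iii), $\wt(I_{[t]})=(t-1)k$, to conclude
\[
\sum_{\{i,j\}\in\binom{[t]}{2}}|I_i\cap I_j| \;=\; (t-1)k.
\]
This is precisely the quantity that appears in the row count of $M_{k,(I_1,\ldots,I_t)}$ from Definition~\ref{Def-matrix-general}. Using the Pascal-like identity $\binom{t-1}{2}+(t-1)=\binom{t}{2}$, the total number of rows becomes
\[
\binom{t-1}{2}k \;+\; \sum_{\{i,j\}\in\binom{[t]}{2}}|I_i\cap I_j| \;=\; \binom{t-1}{2}k + (t-1)k \;=\; \binom{t}{2}k,
\]
which matches the column count $\binom{t}{2}k$. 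Hence $M_{k,(I_1,\ldots,I_t)}$ is square of the asserted order.

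I do not anticipate any real obstacle here: the claim is essentially a bookkeeping consequence of assumption (i) (which truncates inclusion-exclusion) together with the tight weight equality (iii). The only thing worth double-checking is that the argument is indeed valid for every $J$ of size at least two, not just for $J=[t]$; this is automatic, because the empty-triple-intersection property is hereditary under taking subcollections.
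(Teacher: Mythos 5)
Your proposal is correct and follows essentially the same route as the paper: truncate inclusion–exclusion using the empty triple intersections from assumption (i) to get \eqref{constraint-weaker}, then combine the $J=[t]$ case with assumption (iii) and the row count from Definition~\ref{Def-matrix-general} to see the matrix is square of order $\binom{t}{2}k$. The only cosmetic difference is that you treat general $J$ first and specialize, whereas the paper does $J=[t]$ first and notes the general case is analogous; your explicit remark that the empty-triple-intersection hypothesis is hereditary under subcollections is a nice touch but not a new idea.
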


\begin{proof}
By \eqref{definition-weight} (the definition of weight) and the inclusion-exclusion principle \begin{equation}
\label{stam}
{\wt}(I_{[t]})=\sum_{i=1}^t|I_i|-\left|\bigcup_{i=1}^tI_i\right|=\sum_{j=2}^t\sum_{J\in\binom{[t]}{j}}(-1)^{|J|}\left|\bigcap_{i\in J}I_i\right|.
\end{equation} Therefore, by assumption (i) of Theorem \ref{theorem-weaker} we have $\wt(I_{[t]})=\sum_{1\le i<j\le t} |I_i\cap I_j|$. Then, by assumption (iii)
the  matrix $M_{k,(I_1,\ldots,I_t)}$ is in fact a square matrix of order $\binom{t}{2}k$. Similarly, \eqref{constraint-weaker} holds for any $J\subseteq [t]$ with $|J|\ge 2$.
\end{proof}

To prove Theorem \ref{theorem-weaker}, let us construct a multigraph $G$ defined on a set $V$ of $t$ vertices, say $V=\{v_1,\ldots,v_t\}$. For $1\le i<j\le t$, connect vertices $v_i,v_j$ by $|I_i\cap I_j|$ multiple edges.

Applying Lemma \ref{Nash-Williams} to $G$ leads to the following claim.

\begin{claim}\label{claim-spanning-tree}
Let $G$ be as above.  Then
$G$ contains $k$ edge-disjoint spanning trees.
\end{claim}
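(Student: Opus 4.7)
The plan is to apply the Nash-Williams--Tutte theorem (Lemma \ref{Nash-Williams}) directly: I need to verify that for every partition $\mathcal{P}$ of the vertex set $V=\{v_1,\ldots,v_t\}$, the multigraph $G$ has at least $(|\mathcal{P}|-1)k$ cross-edges. Given how $G$ is constructed, the number of edges between $v_i$ and $v_j$ is exactly $|I_i\cap I_j|$, so this is an entirely counting question about the sets $I_1,\dots,I_t$.

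More concretely, fix a partition $\mathcal{P}=\{V_1,\dots,V_p\}$ of $V$, and for each $s\in[p]$ set $J_s=\{i\in[t]:v_i\in V_s\}$, so that $\{J_1,\dots,J_p\}$ is a partition of $[t]$. The number of cross-edges of $\mathcal{P}$ in $G$ equals
\[
\sum_{\{i,j\}\in\binom{[t]}{2}}|I_i\cap I_j|\;-\;\sum_{s=1}^{p}\;\sum_{\{i,j\}\in\binom{J_s}{2}}|I_i\cap I_j|.
\]
Now I would invoke Claim \ref{cl:size}: by identity \eqref{constraint-weaker}, the first sum equals $\wt(I_{[t]})$ and each inner sum in the second equals $\wt(I_{J_s})$ (the identity is vacuous when $|J_s|=1$, where both quantities are $0$).

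By hypothesis (iii) of Theorem \ref{theorem-weaker}, $\wt(I_{[t]})=(t-1)k$, and by hypothesis (ii), $\wt(I_{J_s})\le (|J_s|-1)k$ for every nonempty $J_s$ (including the case $|J_s|=1$, which is trivial). Hence the number of cross-edges is at least
\[
(t-1)k\;-\;\sum_{s=1}^{p}(|J_s|-1)k\;=\;(t-1)k-(t-p)k\;=\;(p-1)k,
\]
using $\sum_{s=1}^p|J_s|=t$. Since $\mathcal{P}$ was arbitrary, Lemma \ref{Nash-Williams} yields $k$ edge-disjoint spanning trees of $G$.

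There is no serious obstacle here once the translation between the set system and the multigraph is set up: the main content is that hypotheses (ii) and (iii) of Theorem \ref{theorem-weaker} are \emph{exactly} the hypotheses needed to feed into Nash-Williams--Tutte once one uses \eqref{constraint-weaker} (which in turn is where hypothesis (i), the vanishing of three-wise intersections, is used) to identify weights with pairwise intersection counts. The only thing to be a little careful about is handling singleton parts of $\mathcal{P}$ correctly when applying the weight bounds, but this is routine.
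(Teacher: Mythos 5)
Your proposal is correct and follows essentially the same route as the paper: apply Nash-Williams--Tutte after counting cross-edges as the total edge count $\wt(I_{[t]})=(t-1)k$ minus the edges internal to each part, bounded via \eqref{constraint-weaker} and hypotheses (ii)--(iii) by $\sum_s(|J_s|-1)k$. The handling of singleton parts and the role of hypothesis (i) through Claim~\ref{cl:size} are exactly as in the paper's argument.
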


\begin{proof}
Let $\mathcal{P}=\{V_1,\ldots,V_s\}$ be an arbitrary partition of $V$. Then it is clear that $\sum_{i=1}^s |V_i|=t$. According to Lemma \ref{Nash-Williams}, to prove the claim it suffices to show that $G$ has at least $(s-1)k$ cross-edges with respect to $\mathcal{P}$. By \eqref{constraint-weaker} and assumption (iii) of Theorem \ref{theorem-weaker} it is easy to see that $G$ contains $\sum_{1\le i<j\le t}|I_i\cap I_j|=(t-1)k$ edges. Moreover, by \eqref{constraint-weaker} and assumption (ii) of Theorem \ref{theorem-weaker} one can infer that for each $i\in[s]$, the induced subgraph of $G$ on the vertex set $V_i$ has at most $\wt(I_j:j\in V_i)\le (|V_i|-1)k$ edges. It follows that the number of cross-edges of $G$ (with respect to $\mathcal{P}$) is at least
$$
(t-1)k-\sum_{i=1}^s (|V_i|-1)k=\left(t-1-\sum_{i=1}^s |V_i|+s\right)k=(s-1)k,
$$
as needed, thereby completing the proof of the claim.
\end{proof}

Below, we will relate a tree packing of this graph $G$ to the determinant of the intersection matrix $M_{k,(I_1, \ldots, I_t)}$.  In order to do this, we first record
a property of the matrix $\mathcal{B}_t$.  Recall that the columns of $\mathcal{B}_t$ are indexed by ${[t] \choose 2}$.

\begin{claim}\label{claim-property-of-B_t}
If we remove a set of columns from $\mathcal{B}_t$, then the new matrix will have the same row rank as $\mathcal{B}_t$ if and only if the columns are labeled by an acyclic subgraph of $K_t$.
\end{claim}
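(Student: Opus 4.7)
The plan is to reinterpret the claim in terms of the cycle space, using Lemma~\ref{cycle-space-basis}, which identifies the row space of $\mathcal{B}_t$ with $C(K_t)\subseteq\mathbb{F}^{\binom{[t]}{2}}$. Label the columns of $\mathcal{B}_t$ by $E:=\binom{[t]}{2}$ and let $S\subseteq E$ be the set of columns to be removed. Deleting the columns indexed by $S$ corresponds to applying the coordinate projection $\pi_{E\setminus S}\colon\mathbb{F}^E\to\mathbb{F}^{E\setminus S}$ to the row space. Therefore the row rank is preserved if and only if $\pi_{E\setminus S}$ is injective on $C(K_t)$, equivalently
\[
C(K_t)\cap \mathbb{F}^S=\{0\},
\]
where $\mathbb{F}^S\subseteq\mathbb{F}^E$ denotes the subspace of vectors supported on $S$. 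The entire proof reduces to showing that this intersection is trivial exactly when $S$ is acyclic.

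For the easy direction I would argue the contrapositive: if $S$ is not acyclic it contains the edge set of some cycle $i_0 i_1\cdots i_{m-1} i_0$ in $K_t$. Orienting this cycle gives an oriented cycle $C=\{(i_0,i_1),\ldots,(i_{m-1},i_0)\}$ whose associated vector $u^C$ lies in $C(K_t)$ by definition and is supported on the edges of $C\subseteq S$. Since $u^C\neq 0$, it is a nonzero element of $C(K_t)\cap\mathbb{F}^S$, so removing $S$ strictly reduces the row rank.

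For the converse I would show that whenever $S$ is a forest in $K_t$, every $u\in C(K_t)\cap\mathbb{F}^S$ must vanish. The cleanest route is to identify $C(K_t)$ with the kernel of the signed vertex--edge incidence matrix $B$ of $K_t^o$: each $u^C$ satisfies Kirchhoff's flow conservation at every vertex, so $C(K_t)\subseteq\ker B$, and since $\dim\ker B=\binom{t}{2}-(t-1)=\binom{t-1}{2}=\dim C(K_t)$ by Lemma~\ref{cycle-space-basis} together with the rank of an incidence matrix of a connected graph, the two spaces agree. Then $C(K_t)\cap\mathbb{F}^S=\ker(B|_S)$, and the classical fact that the columns of $B|_S$ are linearly independent whenever $S$ is a forest (provable by induction: a forest has a leaf $v$ whose unique incident edge $e\in S$ has a nonzero entry in row $v$ that cannot be cancelled, so $e$'s coefficient must be $0$; iterate) forces $u=0$.

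The one delicate point I anticipate is that the identification $C(K_t)=\ker B$ is not explicitly developed in the excerpt, only the spanning-by-cycles definition. If preferring to stay within the paper's framework, I would instead give a direct leaf-peeling proof: take a hypothetical nonzero $u\in C(K_t)\cap\mathbb{F}^S$ with $S$ a forest, pick a leaf $v$ of the subgraph induced by $\mathrm{supp}(u)$ within $S$, and derive a contradiction from the fact that $u$, being a sum of oriented cycles, assigns zero net flow to any vertex, while at $v$ only one edge carries nonzero flow. Either route is short; the main conceptual move — replacing the rank question by the triviality of $C(K_t)\cap\mathbb{F}^S$ — is where the claim becomes transparent.
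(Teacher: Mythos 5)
Your proposal is correct and takes essentially the same route as the paper: the paper's auxiliary matrix $D$ is exactly the signed vertex--edge incidence matrix you invoke, the identity $\mathcal{B}_t\cdot D=0$ is your flow-conservation observation, and the contradiction at a degree-one vertex of the support of $w=u\cdot\mathcal{B}_t$ is precisely your leaf-peeling step (your fallback version, which avoids the unnecessary dimension count identifying $C(K_t)$ with $\ker B$). The direction where the removed columns contain a cycle is handled identically in both arguments, via the vector $u^C$ supported on that cycle.
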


\begin{proof}
First we prove the if direction.
Assume to the contrary that we can remove from  $\mathcal{B}_t$ some columns labeled by an acyclic subgraph $H$ of $K_t$ and reduce the row rank.
Let $\mathcal{B}'_t$ be the submatrix of $\mathcal{B}_t$ after the removal of the columns labeled by  $H$. The rows of   $\mathcal{B}'_t$ are linearly dependent by assumption. Hence, there exists a nonzero vector $u\in\mathbb{F}^{\binom{t-1}{2}}$ such that $u\cdot\mathcal{B}'_t=0$. As $u\neq 0$ and the rows of $\mathcal{B}_t$ are linearly independent, we have $u\cdot\mathcal{B}_t\neq 0$. Let $S\subseteq \binom{[t]}{2}$ be the support of $u\cdot\mathcal{B}_t$, where the support of a vector of length $n$ is the subset of $[n]$ that records the indices of its nonzero coordinates. As $u\cdot\mathcal{B}_t\neq 0$ and $u\cdot\mathcal{B}'_t=0$, we have $\emptyset\neq S\subseteq H$.

Consider the  $\binom{t}{2}\times t$ matrix $D=(D_{\{i,j\},s})$
which is defined   by
\[
D_{\{i,j\},s}=\begin{cases}
1 & s=j, \\
-1 & s=i, \\
0 & \text{otherwise},
\end{cases}
\]
where  $1\leq i<j\leq t \text{ and } s\in [t]$. Note that the rows and columns of $D$ are  labeled by $\{i,j\}\in \binom{[t]}{2}$
and  $s\in [t]$ respectively. It is not hard to check that $\mathcal{B}_t\cdot D=0$. \black{Indeed, for $1\le i<j\le t-1$ and $1\le s\le t$, let us compute the inner product of the $\{i,j\}$-th row of $\mathcal{B}_t$, i.e. $u^{\Delta_{ijt}}$, and the $s$-th column of $D$, denoted by $D^s$. As $u^{\Delta_{ijt}}$ has only three nonzero coordinates: $u^{\Delta_{ijt}}_{\{i,j\}}=1,~u^{\Delta_{ijt}}_{\{i,t\}}=-1$, and $u^{\Delta_{ijt}}_{\{j,t\}}=1$, by the definition of $D$ one can observe that
\begin{itemize}
    \item [(i)] if $s\not\in\{i,j,t\}$ then $D^s_{\{i,j\}}=D^s_{\{i,t\}}=D^s_{\{j,t\}}=0$;
    \item [(ii)] if $s=i$ then $D^s_{\{i,j\}}=D^s_{\{i,t\}}=-1,~D^s_{\{j,t\}}=0$;
    \item [(iii)]  if $s=j$ then $D^s_{\{i,j\}}=1,~D^s_{\{i,t\}}=0,~D^s_{\{j,t\}}=-1$;
    \item [(iv)] if $s=t$ then $D^s_{\{i,j\}}=0,~D^s_{\{i,t\}}=D^s_{\{j,t\}}=1$.
\end{itemize}
One can check that in all of the four cases listed above, the inner product $u^{\Delta_{ijt}}\cdot D^s=0$, which implies that $\mathcal{B}_t\cdot D=0$ and hence $u\cdot \mathcal{B}_t\cdot D=0$.}

Denote $u\cdot \mathcal{B}_t$ by $w=(w_{\{i,j\}})\in\mathbb{F}^{\binom{t}{2}}$, whose support is $S$.
As  $\emptyset\neq S\subseteq H$ and $H$ is acyclic, we can find $s_0\in [t]$ whose degree in $S$ is one, i.e., there exists a unique edge $\{i_0,j_0\}\in S$ such that $s_0\in \{i_0,j_0\}$. Then, the $s_0$-th entry of $w\cdot D$ is
\[
\sum_{\{i,j\}\in \binom{[t]}{2}} w_{\{i,j\}} D_{\{i,j\},s_0} = w_{\{i_0,j_0\}} D_{\{i_0,j_0\},s_0}=\pm  w_{\{i_0,j_0\}}\neq 0,
\]
which is a contradiction as $w\cdot D= u\cdot \mathcal{B}_t\cdot D=0$.

Now we prove the only if direction.
It suffices to prove that removing from $\mathcal{B}_t$ a set of columns labeled by a cycle $C$ of $K_t$ will reduce its row rank by at least $1$. Let us orient the edges of $C$ to make it an oriented cycle, which by abuse of notation is also denoted by $C$.
Since the rows of $\mathcal{B}_t$ form a basis of \black{the cycle space} $C(K_t)$, there is a nonzero vector $u\in\mathbb{F}^{\binom{t-1}{2}}$ such that $u\cdot\mathcal{B}_t=u^C$. \black{Note that $u^C$ is a vector of length $\binom{t}{2}$ with support being exactly $C$, i.e., for $\{i,j\}\in\binom{[t]}{2}$, $u^C_{\{i,j\}}\neq 0$ if and only if $\{i,j\}\in C$.} Let $\mathcal{B}'_t$ be the submatrix of $\mathcal{B}_t$ after the removal of the columns labeled by $C$. Then it is not hard to check that $u\cdot\mathcal{B}'_t=0$, which implies that the rows of $\mathcal{B}'_t$ are linearly dependent, as needed.
\end{proof}

Next we present the proof of Theorem \ref{theorem-weaker}. Recall from Claim~\ref{cl:size}
 that under the assumptions of Theorem \ref{theorem-weaker}, the $t$-wise intersection matrix
\begin{equation*}\label{int-matrix-2}
M_{k,(I_1,\ldots,I_t)}=\left(
  \begin{array}{c}
    \mathcal{B}_t\otimes \mathcal{I}_k  \\\hline

  {\rm diag}\Big(V_k(I_i\cap I_{j}): \{i,j\}\in \binom{[t]}{2}\Big) \\
  \end{array}
\right),
\end{equation*}

\noindent is a square matrix of order $\binom{t}{2}k$, and is defined by exactly $(t-1)k$ variables $x_s,~s\in S$, where $S\subseteq[n]$ is some subset of size $(t-1)k$. In order to prove that $M_{k,(I_1,\ldots,I_t)}$ is nonsingular, we proceed to show the nonsingularity of the following matrix, obtained by permuting the columns and rows of $M_{k,(I_1,\ldots,I_t)}$:
\begin{equation*}
M'_{k,(I_1,\ldots,I_t)}:=\left(
  \begin{array}{c}
    \mathcal{I}_k\otimes \mathcal{B}_t  \\\hline

  \big(\ma{C}_i:0\le i\le k-1\big) \\
  \end{array}
\right),
\end{equation*}

\noindent where $\ma{C}_i={\rm diag}\Big(V_k^{(i)}(I_j\cap I_{j'}): \{j,j'\}\in \binom{[t]}{2}\Big)$ and $V_k^{(i)}(I_j\cap I_{j'})$ is the $(i+1)$-th column of $V_k(I_j\cap I_{j'})$. \black{More precisely, one can write $\mathcal{C}_i={\rm diag}\Big((x_{\ell}^i:\ell\in I_j\cap I_{j'})^T:\{j,j'\}\in \binom{[t]}{2}\Big)$.}
Above,
$\big(\ma{C}_i:0\le i\le k-1\big)$ is a $(t-1)k\times\binom{t}{2}k$ variable matrix,
which consists of the matrices $\ma{C}_i$ stacked next to each other.
See Figure~\ref{fig:mat2} for an illustration, and Example~\ref{example-2} in Appendix~\ref{app:1} for a concrete example.

\begin{figure}
\centering
\footnotesize
\begin{tikzpicture}[scale=.5]
\begin{scope}
\draw[thick] (0,2) rectangle (12, 14);
\draw[thick] (0,8) -- (12, 8);
\foreach \j in {8, 10, 12}
{
\draw[thin] (0,\j) -- (12, \j);
}
\foreach \j in {2,4,6,8,10}
{
\draw[thin] (\j,8) -- (\j, 14);
}
\draw[fill=black!10] (0,7) rectangle (2,8);
\draw[fill=black!10] (2,6) rectangle (4, 7);
\draw[fill=black!10] (4,5) rectangle (6, 6);
\draw[fill=black!10] (6,4) rectangle (8, 5);
\draw[fill=black!10] (8,3) rectangle (10, 4);
\draw[fill=black!10] (10,2) rectangle (12, 3);

\node[black](a) at (5.6, 16) {\begin{minipage}{5cm}\begin{center} This column indexed by $\{j,\ell\} \in {[t]\choose 2}$ and $i \in [k]$\end{center}\end{minipage}};
\node[black](b) at (5,18) {\begin{minipage}{4cm}\begin{center} This row is indexed by $x_s$, for $s \in I_j \cap I_\ell$.\end{center}\end{minipage}};
\draw[->,black](a) to (5.6, 14);
\draw[->,black](b) to [out=180,in=160] (0,5.5);
\draw[dashed,black](0,5.5) to (5.6,5.5);
\draw[dashed,black](5.6,14) to (5.6, 5.5);

\node[black](c) at (2,4) {\begin{minipage}{5cm}\begin{center} This is $x_s^i$\end{center}\end{minipage}};
\draw[fill=black] (5.6, 5.5) circle (.2cm);
\draw[->,black](c) to (5.6, 5.5);

\node at (6,1) {$M_{k, (I_1, \ldots, I_t)}$};

\foreach \j in {(1, 13), (3,11), (5,9), (9,13), (11,11), (11,9)}{
\node at \j {$\mathcal{I}_k$};
}
\foreach \j in {(7, 13), (7,11), (9,9)}{
\node at \j {$-\mathcal{I}_k$};
}
\end{scope}
\begin{scope}[xshift=14cm]
\draw[thick] (0,2) rectangle (12, 14);
\draw[thick] (0,8) -- (12, 8);
\foreach \j in {8, 10, 12}
{
\draw[thin] (0,\j) -- (12, \j);
}
\foreach \j in {4,8}
{
\draw[thin] (\j,8) -- (\j, 14);
}
\draw[fill=black!10] (0,7) rectangle (.66,8);
\draw[fill=black!10] (4,7) rectangle (4.66,8);
\draw[fill=black!10] (8,7) rectangle (8.66,8);
\draw[fill=black!10] (0.66,6) rectangle (1.33, 7);
\draw[fill=black!10] (4.66,6) rectangle (5.33, 7);
\draw[fill=black!10] (8.66,6) rectangle (9.33, 7);
\draw[fill=black!10] (1.33,5) rectangle (2, 6);
\draw[fill=black!10] (5.33,5) rectangle (6, 6);
\draw[fill=black!10] (9.33,5) rectangle (10, 6);
\draw[fill=black!10] (2,4) rectangle (2.66, 5);
\draw[fill=black!10] (6,4) rectangle (6.66, 5);
\draw[fill=black!10] (10,4) rectangle (10.66, 5);
\draw[fill=black!10] (2.66,3) rectangle (3.33, 4);
\draw[fill=black!10] (6.66,3) rectangle (7.33, 4);
\draw[fill=black!10] (10.66,3) rectangle (11.33, 4);
\draw[fill=black!10] (3.33,2) rectangle (4, 3);
\draw[fill=black!10] (7.33,2) rectangle (8, 3);
\draw[fill=black!10] (11.33,2) rectangle (12, 3);

%\draw[thin,dashed] (4,7)--(4,2);
%\draw[thin,dashed] (8,7)--(8,2);
%\node at (0.5,2.5) {$\mathcal{C}_0$};
%\node at (4.5,2.5) {$\mathcal{C}_1$};
%\node at (8.5,2.5) {$\mathcal{C}_2$};

\foreach \j in {(2, 13), (6,11), (10,9)}{
\node at \j {$\mathcal{B}_t$};
}

\node[black](a) at (9.66, 16) {\begin{minipage}{5cm}\begin{center} This column indexed by $i \in [k]$ and $\{j,\ell\} \in {[t] \choose 2}$ \end{center}\end{minipage}};
\node[black](b) at (8.66,18) {\begin{minipage}{4cm}\begin{center} This row is indexed by $x_s$, for $s \in I_j \cap I_\ell$.\end{center}\end{minipage}};
\draw[->,black](a) to (9.66, 14);
\draw[->,black](b) to [out=0,in=0] (12,5.5);
\draw[dashed,black](12,5.5) to (9.66,5.5);
\draw[dashed,black](9.66,14) to (9.66, 5.5);

\node[black](c) at (8.5,4) { $x_s^i$};
\draw[fill=black] (9.66, 5.5) circle (.2cm);
\draw[->,black](c) to (9.66, 5.5);
\node at (6,1) {$M'_{k, (I_1, \ldots, I_t)}$};
\end{scope}

\end{tikzpicture}

\caption{
Re-ordering the rows/columns of an intersection matrix.
(In this cartoon, $t=4$ and $k=3$).
}
\label{fig:mat2}
\end{figure}

\begin{proof}[\textbf{Proof of Theorem \ref{theorem-weaker}}]
If $t=2$, then $M_{k,(I_1,I_2)}$ is a $k\times k$ Vandermonde matrix, which is nonsingular, so assume $t\ge 3$.
For the rest of the proof, we will consider the matrix $M' = M'_{k,(I_1, \ldots, I_t)}$ discussed above, and show that it is nonsingular.

Let the graph $G$ be as in the discussion above;
recall that for distinct $i,j\in [t]$, two vertices $v_i,v_{j}$ in $G$ are connected by $|I_i\cap I_{j}|$ edges.
By Claim~\ref{cl:size}, $M'$ is a square matrix with ${t \choose 2}k$ rows and columns, and $k(t-1)$ ``variable'' rows at the bottom.  Let $S\subseteq [n]$ be the subset that records the indices of variables $x_s$ that appear in $M'$.

For $1\le i<j\le t$, fix an arbitrary one-to-one correspondence between the $|I_i \cap I_j|$ edges connecting $v_i,v_{j}$ and the $|I_i \cap I_j|$  variables $x_s \in S$ so that $s\in I_i\cap I_{j}$.
Since any three distinct subsets $I_i,I_{j},I_{\ell}$ have empty intersection, this yields a one-to-one correspondence $$\phi:E(G)\longrightarrow \{x_s:s\in S\},$$
between the $(t-1)k$ edges of $G$ and the $(t-1)k$ variables with indices in $S$.

By Claim \ref{claim-spanning-tree}, the edges of $G$ can be partitioned into $k$ edge-disjoint spanning trees $T_i$, and $G=\bigcup_{i=0}^{k-1} T_i$.

Observe that for each $0\le i\le k-1$, $\mathcal{C}_i$ has entries that are either zero or of the form $x_s^i$ for some $x_s \in S$.
We will show how to use the tree decomposition of $G$ to choose nonzero entries in each $\mathcal{C}_i$ so that (a) every row in the bottom part of $M'$ is chosen exactly once, and (b) when the columns chosen are removed from $M'$, the resulting submatrix of $\mathcal{B}_t$ is nonsingular.  This will mean that the product of these non-zero entries appears in the determinant expansion of $M'$.

For each $i$, we pick $t-1$ non-zero elements from each $\mathcal{C}_i$: we choose $x_s^i$ for  $x_s \in \{\phi(e) : e \in T_i\}$.   That is, we consider all of the variables $x_s$ corresponding to edges that appear in $T_i$.
Let $m_i(x)$ denote the product of these entries:
\[ m_i(x) = \prod_{x_s \in \{\phi(e)~:~ e \in T_i\}} x_s^i. \]
Let $m(x)=\prod_{i=0}^{k-1} m_i(x)$. Since $\phi$ is a bijection, $m(x)$ is a product of $(t-1)k$ distinct entries chosen from the submatrix $\big(\ma{C}_i:0\le i\le k-1\big)$, and crucially, no two of them appear in the same row or column \black{--- no two in the same row because, by definition, the trees $T_i$ are pairwise edge-disjoint, and no two in the same column because a tree by definition has no parallel edges.}

To conclude the proof, it is enough to show that $m(x)$ appears as a nonvanishing term in the determinant expansion of $M'_{k,(I_1,\ldots,I_t)}$. Indeed, removing from $M'_{k,(I_1,\ldots,I_t)}$ the $(t-1)k$ rows and columns that correspond to $m(x)$, the resulting submatrix is a block diagonal matrix
\[
{\rm diag}\Big(\mathcal{B}'_t(i):~0\le i\le k-1\Big),
\]
where for each $i$, $\mathcal{B}'_t(i)$ is a square submatrix of $\mathcal{B}_t$ of order $\binom{t-1}{2}$. By construction, each $\mathcal{B}'_t(i)$ is obtained by removing from $\mathcal{B}_t$ a set of $t-1$ columns labeled by the spanning tree $T_i$.  By Claim \ref{claim-property-of-B_t}, this implies that $\mathcal{B}'_t(i)$ is nonsingular.
Moreover, as each of the sets $I_i \cap I_j$ are disjoint due to the assumptions of the theorem, the monomial $m(x)$ appears only once in the determinant expansion of $M'_{k,(I_1, \ldots, I_t)}$.
Consequently, the ``coefficient'' of $m(x)$ in the determinant expansion of $M'_{k,(I_1,\ldots,I_t)}$ is nonvanishing, completing the proof of the theorem.
\end{proof}

\section{Application to Perfect Hashing}
\label{phf}

In this section, we apply Theorem~\ref{theorem-weaker} to perfect hashing, and we prove Theorem~\ref{perfect-hash-matrix}.
\black{Recall that given a matrix and a set $S$ of its columns, a row is said to {\it separate} $S$ if, restricted to this row, these columns have distinct values. For a positive integer $t$, a matrix is said to be a {\it $t$-perfect hash matrix} if any set of $t$ distinct columns of the matrix is separated by at least one row. }

\begin{theorem*}[Theorem~\ref{perfect-hash-matrix}, restated]
Given integers $1\le k<n$ and $t\ge 3$, for a sufficiently large prime power $q$ \black{(as an \black{exponential} function of $n$ and $t$)}, there exists an $n\times q^k$ matrix, defined on the alphabet $\mathbb{F}_q$, such that any set of $t$ columns is separated by at least $n-k(t-1)+1$ rows.
\end{theorem*}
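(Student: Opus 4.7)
The plan is to take a Reed--Solomon-like construction: index the $q^k$ columns of the matrix by the polynomials $f\in\mathbb{F}_q[x]$ of degree less than $k$, and set the $(i,f)$-entry to $f(\alpha_i)$, where $\alpha_1,\ldots,\alpha_n\in\mathbb{F}_q$ are evaluation points yet to be chosen. A row $i$ separates the $t$ columns corresponding to polynomials $f_1,\ldots,f_t$ iff the values $f_1(\alpha_i),\ldots,f_t(\alpha_i)$ are pairwise distinct, so writing $I_{ab}=\{i:f_a(\alpha_i)=f_b(\alpha_i)\}$, the number of non-separating rows is $|N|:=|\bigcup_{a<b}I_{ab}|$, and the goal is to choose $\alpha$ so that $|N|\le k(t-1)-1$ for every $t$-tuple of distinct polynomials.

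Suppose for contradiction that some $t$-tuple $(f_1,\ldots,f_t)$ has $|N|\ge k(t-1)$. I would construct subsets $I_1,\ldots,I_t\subseteq[n]$ meeting the hypotheses of Theorem~\ref{theorem-weaker} together with $I_a\cap I_b\subseteq I_{ab}$ for every pair, as follows. Pick a subset $N'\subseteq N$ and, for each $i\in N'$, select a single ``agreement edge'' $e_i=\{a_i,b_i\}$ from the graph $G_i$ on $[t]$ whose edges are the pairs $\{a,b\}$ with $f_a(\alpha_i)=f_b(\alpha_i)$; then set $I_a:=\{i\in N':a\in e_i\}$. Because each row contributes to exactly two of the $I_a$'s, the triple intersections $I_a\cap I_b\cap I_c$ are empty. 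A short computation then gives $\wt(I_J)=|E(H[J])|$ for every $J\subseteq[t]$, where $H$ is the multigraph on vertex set $[t]$ with edge multiset $(e_i)_{i\in N'}$. Hence conditions (ii) and (iii) of Theorem~\ref{theorem-weaker} become the inequalities $|E(H[J])|\le(|J|-1)k$ for all $J\subseteq[t]$ together with the equality $|E(H)|=(t-1)k$, which by the Nash--Williams--Tutte theorem (Lemma~\ref{Nash-Williams}) is equivalent to $H$ decomposing into $k$ edge-disjoint spanning trees. Moreover, $I_a\cap I_b=\{i\in N':e_i=\{a,b\}\}\subseteq I_{ab}$, so the vector $(\vec{f_a}-\vec{f_b})_{\{a,b\}}$ from Figure~\ref{fig:matvec} lies in the kernel of $M_{k,(I_1,\ldots,I_t)}(\alpha)$, witnessing its singularity.

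To finish, Theorem~\ref{theorem-weaker} asserts that the polynomial matrix $M_{k,(I_1,\ldots,I_t)}$ is nonsingular---some $\binom{t}{2}k\times\binom{t}{2}k$ minor has nonzero determinant as a polynomial in $x_1,\ldots,x_n$---for every valid $(I_1,\ldots,I_t)$. Since there are only finitely many such tuples, the DeMillo--Lipton--Schwartz--Zippel lemma together with a union bound shows that for $q$ sufficiently large, some $\alpha\in\mathbb{F}_q^n$ makes every valid $M_{k,(I_1,\ldots,I_t)}(\alpha)$ nonsingular. Any such $\alpha$ yields the desired strongly $t$-perfect hash matrix, because the would-be contradictory tuple $(f_1,\ldots,f_t)$ cannot exist.

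The main obstacle is the combinatorial feasibility of the construction: given $|N|\ge k(t-1)$, one must actually produce $N'\subseteq N$ and edges $(e_i)_{i\in N'}$ with $e_i\in E(G_i)$ so that the resulting multigraph $H$ decomposes into $k$ edge-disjoint spanning trees. I would proceed by induction on $t$, choosing $\alpha$ to satisfy the perfect hashing property for every subfamily of size $t'<t$ simultaneously. If the auxiliary multigraph on $[t]$ with edge multiplicities $|I_{ab}|$ is disconnected, then the polynomials split into groups that never agree across the split, and the inductive hypothesis applied to each group forces $|N|<k(t-1)$, contradicting the assumption; so one may assume connectivity. In the connected case, the required selection exists by matroid intersection between the $k$-fold graphic matroid on $\binom{[t]}{2}$ (whose bases are the edge sets of $k$ edge-disjoint spanning trees) and the transversal matroid of the bipartite row-to-allowed-edge incidence graph, using the capacity bound $|I_{ab}|\le k-1$ (which follows since $f_a-f_b$ is a nonzero polynomial of degree less than $k$).
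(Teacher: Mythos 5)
Your overall architecture coincides with the paper's: build the matrix from an $[n,k]$-RS code, convert a badly separated $t$-tuple of columns into sets with empty triple intersections satisfying the hypotheses of Theorem~\ref{theorem-weaker}, observe that the vector $(f_a-f_b)_{a<b}$ lies in the kernel of the evaluated intersection matrix, and finish with DeMillo--Lipton--Schwartz--Zippel plus a union bound over the finitely many admissible set systems. Your identity $\wt(I_J)=|E(H[J])|$ and the translation of conditions (ii)--(iii) into the Nash-Williams--Tutte partition count are exactly the content of Claims~\ref{cl:size} and~\ref{claim-spanning-tree}. The divergence is in the combinatorial step you flag as the main obstacle, and there your justification has a genuine gap.

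In the connected case you assert that the selection exists by matroid intersection ``using the capacity bound $|I_{ab}|\le k-1$.'' Connectivity plus this pairwise bound is not sufficient once $t\ge 4$ and $k\ge 4$. For instance, with $J=\{1,2,3\}$ it is consistent with all capacities that $3(k-1)$ of the non-separating rows have all their agreement pairs inside $J$, leaving as few as $|N|-3(k-1)=3$ rows (when $t=4$, $|N|=3k$) able to contribute an edge incident to vertex $4$; then no selection gives vertex $4$ the degree $k$ that $k$ edge-disjoint spanning trees require. What excludes this is not the capacity bound but the bound $|\bigcup_{a<b\in J}I_{ab}|\le k(|J|-1)-1$ for every proper $J$ with $|J|\ge 3$ --- i.e., your inductive hypothesis, which you deploy only to dispose of the disconnected case. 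Once you use it for all proper $J$, the matroid machinery becomes unnecessary: assign each non-separating row an arbitrary agreement edge; the resulting multigraph $H$ satisfies $|E(H[J])|\le k(|J|-1)$ for every proper $J$ and $|E(H)|\ge k(t-1)$, hence meets the Nash-Williams--Tutte condition (Lemma~\ref{Nash-Williams}) for every partition and contains $k$ edge-disjoint spanning trees whose rows you keep as $N'$. This is essentially what the paper does, except that instead of an induction on $t$ it passes to a minimal subfamily $S\subseteq[t]$ with $\wt(I_S)\ge k(|S|-1)$ (Lemma~\ref{main-lemma}): minimality yields condition (ii) for free, a greedy deletion yields equality in (iii), and the argument never needs the feasibility claim for the full family $[t]$ that you are trying to prove.
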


We will also show that Theorem~\ref{perfect-hash-matrix} is optimal, at least within the class of \em linear \em hash matrices.
Generalizing a definition of \cite{Blackburn1998Optimal} (with a slightly different terminology), we say that an $n\times q^k$ matrix
$M$ is called {\it linear} if it is defined over the  field $\mathbb{F}_q$ and has the form $M=PQ$, where $P$ is an $n\times k$ coefficient matrix and $Q$ is the $k\times q^k$ matrix whose columns are formed by the $q^k$ distinct vectors of $\mathbb{F}_q^k$.

With this terminology, we will prove the following proposition, which generalizes a result of \cite{Blackburn1998Optimal} (see Theorem 4 of \cite{Blackburn1998Optimal}).
\begin{proposition}\label{proposition}
  If a linear $n\times q^k$ matrix  separates any  set of $t$ columns by at least $r$ rows, then $r\leq n-k(t-1)+1$.
\end{proposition}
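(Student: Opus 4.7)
The plan is a short reduction to Blackburn's Theorem 4 of \cite{Blackburn1998Optimal}, which corresponds to the $r = 1$ case of the proposition and asserts that any linear $t$-perfect hash matrix on $q^k$ columns must have at least $k(t-1)$ rows. Since the authors highlight Proposition~\ref{proposition} as a generalization of exactly that result of Blackburn, it is natural to try to \emph{bootstrap} from it.

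Write the hypothesized matrix as $M = PQ$, where $P \in \F_q^{n \times k}$ and $Q \in \F_q^{k \times q^k}$ is the fixed matrix whose columns enumerate the vectors of $\F_q^k$. Pick any $r - 1$ rows of $P$ and delete them, obtaining $P' \in \F_q^{(n-r+1) \times k}$; set $M' := P' Q$. Two things to observe: first, $M'$ is itself linear in the sense of the paper, because the column-encoding matrix $Q$ is unchanged and we merely replaced the coefficient matrix $P$ by its row-subset $P'$; second, for every $t$-subset $S$ of columns, at least $r$ rows of $M$ separate $S$ by hypothesis, and since we removed only $r - 1$ rows, at least one separating row survives in $M'$. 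Thus $M'$ is a linear $t$-perfect hash matrix on $q^k$ columns with exactly $n - r + 1$ rows.

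Applying Blackburn's Theorem 4 to $M'$ then yields $n - r + 1 \geq k(t-1)$, which rearranges to the desired inequality $r \leq n - k(t-1) + 1$.

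The only point that requires any care is the match between the two notions of ``linear''. The authors explicitly note that their definition differs only in slightly different terminology from Blackburn's, and since $Q$ is the same in $M$ and $M'$, the property of being linear is plainly preserved under row deletion; so invoking Blackburn's theorem on $M'$ is legitimate. Beyond this bookkeeping, there is essentially no obstacle: the argument is a direct ``puncturing''-style reduction in which the strong $r$-fold separation property degrades cleanly to ordinary $1$-fold separation after removing any $r - 1$ rows.
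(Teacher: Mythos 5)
Your proof is correct and is essentially the paper's argument: both are one-step reductions to Blackburn's Theorem 4 via row deletion. The only cosmetic difference is direction — you delete $r-1$ rows and observe the remaining matrix is still a $t$-perfect hash matrix (so it has at least $k(t-1)$ rows), while the paper restricts to $k(t-1)-1$ rows, extracts an unseparated $t$-set from the contrapositive of Blackburn's theorem, and notes that this set can be separated by at most $n-k(t-1)+1$ rows of the full matrix.
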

Proposition~\ref{proposition} implies that the bound in Theorem~\ref{perfect-hash-matrix} is tight, at least for linear constructions. \black{In fact, according to a question of Blackburn and Wild (see Question 4 in \cite{Blackburn1998Optimal}), for fixed integers $k,n,t$ and sufficiently large prime power $q$, the construction given by Theorem~\ref{perfect-hash-matrix} is likely to be optimal among all strongly perfect hash matrices (not necessarily linear). In fact, using our notation, Blackburn and Wild suspected that Proposition~\ref{proposition} holds also after removing the linearity condition.}

\subsection{Proof of Theorem~\ref{perfect-hash-matrix}}
Fix $\mathbb{F}$ to be the finite field $\mathbb{F}_q$, and let us begin with an overview of the proof.
Recall that an evaluation vector  of $\mathbb{F}_q^n$ is a vector whose coordinates are all distinct.
It is well-known that an $[n,k]$-RS code over  $\mathbb{F}_q^n$  is of size $q^k$, and that  any two distinct codewords agree on at most  $k-1$  coordinates. For our purpose, we view an $[n,k]$-RS code as an $n\times q^k$ matrix whose  columns are the codewords of the code. More precisely, the columns are all  the vectors $$\big\{\big(f(\alpha_1),\ldots,f(\alpha_n)\big)^T,~f\in\mathbb{F}_q[x],~\deg(f)< k\big\}$$ with some arbitrary ordering, and   $\alpha=(\alpha_1,\ldots,\alpha_n)$ is the evaluation vector that defines the code. We say that the evaluation vector $\alpha$ {\it defines} the  $n\times q^k$ matrix.

Fix an integer $t\ge 3$. An evaluation vector $\alpha\in\mathbb{F}_q^n$ is called ``bad'' if it does not define a strongly $t$-perfect hashing matrix.
The main idea in the proof of Theorem \ref{perfect-hash-matrix} is to show that the number of bad evaluation vectors is at most $O_{n,t}(q^{n-1})$, whereas   there are $\frac{q!}{(q-n)!}=\Theta_n(q^n)$ distinct evaluation vectors. Therefore,  for sufficiently large $q$ there must exist an evaluation vector that is not bad, i.e., it defines an $n\times q^k$ strongly $t$-perfect hash matrix.

The main tool used in proving the upper bound on the number of bad evaluation vectors is the following well-known result.

\begin{lemma}[DeMillo-Lipton-Schwartz-Zippel lemma, see, e.g., \cite{jukna2011extremal} Lemma 16.3]\label{zeros}
A nonzero polynomial $f\in\mathbb{F}_q[x_1,\ldots,x_n]$ of degree $d$ has at most $dq^{n-1}$ zeros  in $\mathbb{F}_q^n$.
\end{lemma}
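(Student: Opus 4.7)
The plan is to prove the lemma by induction on the number of variables $n$. For the base case $n=1$, a nonzero univariate polynomial of degree $d$ over any field has at most $d$ roots, which matches the bound $d q^{1-1} = d$. This is a standard fact (using that $\mathbb{F}_q[x]$ is a principal ideal domain, or just polynomial long division), and I would simply cite it rather than re-prove it.

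For the inductive step, assuming the statement for $n-1$, I would take a nonzero $f \in \mathbb{F}_q[x_1, \ldots, x_n]$ of total degree $d$ and view it as a polynomial in $x_n$ over the ring $\mathbb{F}_q[x_1, \ldots, x_{n-1}]$, writing
$$f(x_1, \ldots, x_n) = \sum_{i=0}^{j} g_i(x_1, \ldots, x_{n-1}) \, x_n^i,$$
where $j$ is the largest index with $g_j \not\equiv 0$. Since each monomial $x_1^{a_1} \cdots x_{n-1}^{a_{n-1}} x_n^i$ appearing in $f$ has total degree at most $d$, the coefficient $g_j$ has total degree at most $d - j$.

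Next I would partition $\mathbb{F}_q^{n-1}$ according to whether $g_j$ vanishes. Apply the inductive hypothesis to $g_j$ (which is nonzero and has degree at most $d-j$) to see that the ``bad'' set $B = \{(a_1, \ldots, a_{n-1}) : g_j(a_1, \ldots, a_{n-1}) = 0\}$ has size at most $(d-j) q^{n-2}$. Each tuple in $B$ can contribute at most $q$ zeros of $f$ (trivially, by letting $x_n$ range over $\mathbb{F}_q$), accounting for at most $(d-j) q^{n-1}$ zeros. For every $(a_1, \ldots, a_{n-1}) \notin B$, the specialization $f(a_1, \ldots, a_{n-1}, x_n)$ is a nonzero univariate polynomial of degree exactly $j$ (its leading coefficient $g_j(a_1, \ldots, a_{n-1})$ is nonzero by assumption), so it has at most $j$ roots by the base case. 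There are at most $q^{n-1}$ such tuples, contributing at most $j \, q^{n-1}$ further zeros. Summing gives at most $(d-j) q^{n-1} + j \, q^{n-1} = d \, q^{n-1}$ zeros, closing the induction.

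There is no real obstacle here; the only subtle point is that the leading-coefficient polynomial $g_j$ may have degree strictly less than $d - j$, but the inductive bound $(d-j) q^{n-2}$ we use is an \emph{upper} bound on $\deg(g_j) \cdot q^{n-2}$, so this causes no loss. The argument is the classical Schwartz--Zippel induction and is completely standard.
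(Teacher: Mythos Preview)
Your proof is correct and is exactly the standard Schwartz--Zippel induction. The paper does not give its own proof of this lemma; it simply cites it as a known result (referring to Jukna's textbook), so there is nothing to compare against beyond noting that your argument is the classical one.
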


We need three more lemmas before presenting the proof of Theorem \ref{perfect-hash-matrix}.

\begin{lemma}\label{main-lemma}
  Given integers $1\le k<n,~t\ge 3$, if  an evaluation vector $\alpha\in\mathbb{F}_q^n$ does not define an    $n\times q^k$ strongly $t$-perfect hashing matrix, then there exists an integer $s\in\{3,\ldots,t\}$ and subsets $I_1,\ldots,I_{s}\subseteq [n]$ such that
  \begin{itemize}
 \item [{\rm (i)}] $I_i\cap I_j\cap I_{\ell}=\emptyset$ for all $1\le i<j<\ell\le s$;
 \item [{\rm (ii)}] ${\rm wt}(I_{J})\leq (|J|-1)k$ for any nonempty subset $J\subseteq [s]$;
 \item [{\rm (iii)}] ${\rm wt}(I_{[s]})=(s-1)k$;
 \item [{\rm (iv)}] the $s$-wise intersection matrix $M_{k,(I_1,\ldots,I_{s})}$ over $\mathbb{F}_q$ is a nonsingular square matrix of order $k\binom{s}{2}$, whose determinant is a nonzero polynomial in $\mathbb{F}_q[x_1,\ldots,x_n]$ of degree less than $k^2t$.
\end{itemize}
\end{lemma}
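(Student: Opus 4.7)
The plan is to translate the failure of strong $t$-perfect hashing into a configuration of subsets of $[n]$, and then trim it to satisfy (i)--(iv). To unpack the hypothesis, if $\alpha$ does not define a strongly $t$-perfect hash matrix then, viewing the columns as RS codewords, there exist $t$ distinct polynomials $f_1,\dots,f_t\in\mathbb{F}_q[x]$ of degree less than $k$ whose codewords are separated by at most $n-k(t-1)$ rows. Equivalently the bad set
\[B:=\{i\in[n] : f_j(\alpha_i)=f_l(\alpha_i) \text{ for some } 1\le j<l\le t\}\]
has $|B|\ge k(t-1)$.

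Next I would enforce (i) by hand. For each $i\in B$ select a single pair $e_i=\{j_i,l_i\}$ with $f_{j_i}(\alpha_i)=f_{l_i}(\alpha_i)$ and set $A'_{jl}:=\{i\in B : e_i=\{j,l\}\}$. Then $|A'_{jl}|\le k-1$ (since $f_j-f_l$ is a nonzero polynomial of degree less than $k$) and the $A'_{jl}$ are pairwise disjoint. Defining $I_j:=\bigcup_{l\ne j}A'_{jl}$, a direct check shows $I_j\cap I_l=A'_{jl}$ while every triple intersection vanishes, giving (i), and $\wt(I_{[t]})=\sum_{j<l}|A'_{jl}|=|B|\ge k(t-1)$.

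The main step is extracting the right sub-configuration. I would take $S\subseteq[t]$ with $|S|\ge 2$ to be cardinality-minimal subject to $\wt(I_S)\ge(|S|-1)k$; this is well-defined since $[t]$ qualifies, and $|A'_{jl}|\le k-1$ rules out $|S|=2$, so $s:=|S|\ge 3$. By minimality every $J\subsetneq S$ with $|J|\ge 2$ satisfies $\wt(I_J)<(|J|-1)k$, while singletons satisfy (ii) trivially. If $\wt(I_S)>(s-1)k$ I would iteratively delete a single index from some $I_j\cap I_l$ with $j,l\in S$: the no-triple property guarantees that each deletion decreases $\wt(I_S)$ by exactly one and every other $\wt(I_J)$ by either $0$ or $1$, so finitely many deletions yield $\wt(I_S)=(s-1)k$ while (i) and (ii) are preserved. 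This trimming step is the delicate part of the argument, since (i), (ii), and the precise equality in (iii) must survive simultaneously, and it hinges on the empty-triple guarantee from the selection of $e_i$.

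Finally, (iv) follows almost for free. After relabeling $S$ as $[s]$, the family $(I_1,\dots,I_s)$ meets the hypotheses of Theorem~\ref{theorem-weaker}, so $M_{k,(I_1,\dots,I_s)}$ is a nonsingular square matrix of order $\binom{s}{2}k$ (squareness also following from Claim~\ref{cl:size}). The determinant has degree at most $(k-1)(s-1)k<k^2t$, because the top $\binom{s-1}{2}k$ rows are constants, while each of the $\sum_{j<l}|I_j\cap I_l|=(s-1)k$ Vandermonde rows contributes degree at most $k-1$.
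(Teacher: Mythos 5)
Your proposal is correct and follows essentially the same route as the paper: build the sets by assigning each non-separating row to exactly one agreeing pair (forcing empty triple intersections and $|I_j\cap I_l|\le k-1$), take a cardinality-minimal $S$ with $\wt(I_S)\ge(|S|-1)k$ to get (i)--(ii) and $s\ge 3$, trim elements one at a time to reach equality in (iii), and invoke Theorem~\ref{theorem-weaker} for (iv). The only difference is cosmetic bookkeeping (your $A'_{jl}$ notation and the explicit degree computation), so there is nothing to flag.
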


\begin{proof}
If an evaluation vector $\alpha\in\mathbb{F}_q^n$ is bad, then the $n\times q^k$ matrix it defines contains $t$ distinct columns defined by polynomials $f_1,\ldots,f_t$, which are  separated by at most $n-k(t-1)$ rows. Equivalently, there are at least  $k(t-1)$ rows that do not separate these $t$ columns.

Next, we iteratively construct  the sets $I_1,\ldots,I_t\subseteq [n]$. We set all of them to be the empty set, and then for each row $i$ that does not separate the $t$ columns, we add $i$ to arbitrary two sets $I_j,I_{\ell}$ for which $f_j(\alpha_i)=f_{\ell}(\alpha_i)$. It is easy to verify that the sets $I_j$ satisfy the following properties
\begin{itemize}
 \item [{\rm (a)}] $I_i\cap I_j\cap I_{\ell}=\emptyset$ for all $1\le i<j<\ell\le t$;
 \item [{\rm (b)}] $|I_i\cap I_j|\le k-1$ for distinct $i,j\in[t]$;
 \item [{\rm (c)}] ${\rm wt}(I_{[t]})=\sum_{1\le i<j\le t} |I_i\cap I_j|\geq k(t-1)$.
\end{itemize}

\noindent Indeed,  {\rm (a)}  follows from the definition of $I_1,\ldots,I_t$, {\rm (b)} follows from the property of RS codes, and {\rm (c)} follows from \eqref{stam} and {\rm (a)} (the equality part), \black{and the assumption that there are at least $k(t-1)$ rows that do not separate the $t$ columns (the inequality part)}.

Let \black{$s\in\{3,\ldots,t\}$} be the smallest positive integer  for which there exist a subset $S\subseteq [t]$ of size $s$ with  ${\rm wt}(I_{S})\ge k(s-1)>0$. \black{Then, by {\rm (b)} $s\ge 3$ and by {\rm (c)} $s\le t$. Therefore,} $s$ is well-defined. %Furthermore, as ${\rm wt}(I_S)=0$ for any $|S|=1$ and ${\rm wt}(I_S)<k$ for any $|S|=2$, we have $3\leq s\leq t$.
 Assume without loss of generality that $S=[s]$.

By construction and the minimality of $s$, the sets $I_1,\ldots, I_s$ satisfy properties {\rm (i)} and {\rm (ii)}. We proceed to verify that also  {\rm (iii)} holds. Note that properties {\rm (i)} and {\rm (ii)} continue to hold if one removes an element from one of the sets $I_j$, and by doing so,  the  weight ${\rm wt}(I_{[s]})$
    can reduce by at most one.   Hence, by iteratively removing elements from the sets $I_j$, one can construct sets, which we also denote by $I_1,\ldots, I_s$, that satisfy property {\rm (iii)}, while retaining properties {\rm (i)} and {\rm (ii)}.

Since the subsets $I_1,\ldots,I_{s}\subseteq [n]$ satisfy the  three assumptions of Theorem \ref{theorem-weaker}, it holds that  $M_{k,(I_1,\ldots,I_{s})}$ is a nonsingular matrix. \black{Moreover, it follows by Claim \ref{cl:size} that $M_{k,(I_1,\ldots,I_{s})}$ is a square matrix of order $k\binom{s}{2}$. It remains to prove the claimed upper bound on the degree of the determinant. As $M_{k,(I_1,\ldots,I_{s})}$ is a square matrix of order $k\binom{s}{2}$, by Definition \ref{Def-matrix-general} the lower (variable) part ${\rm diag}\big(V_k(I_i\cap I_j): \{i,j\}\in \binom{[s]}{2}\big)$ consists of exactly $(s-1)k$ rows. According to \eqref{vandermonde}, every entry of ${\rm diag}\big(V_k(I_i\cap I_j): \{i,j\}\in \binom{[s]}{2}\big)$ is a monomial of degree at most $k-1$. Therefore, the total degree of the determinant cannot exceed $(s-1)k(k-1)<k^2s\le kt$, as needed,}
%The claims on the order of the matrix and the degree of the polynomial are easy to verify,
thereby completing the proof of {\rm (i)}-{\rm (iv)}.
\end{proof}

\begin{lemma}\label{lem:required}
    Let $s\ge 3$ be an integer and $f_1,\ldots,f_s\in\mathbb{F}_q^k$. Let  $f=(f_i-f_j: 1\leq i<j\leq s)\in\mathbb{F}_q^{\binom{s}{2}k}$, which is the concatenation of the vectors  $f_i-f_j$  according to the lexicographic order on $\binom{[s]}{2}$ defined in Section \ref{preliminaries}. Then it follows that \begin{equation*}
    %\label{first-half}
 (\mathcal{B}_s\otimes \mathcal{I}_k)\cdot f^T=0.
\end{equation*}
\end{lemma}

\begin{proof}
    Note that for any $1\le i<j\le s$, we have
\begin{equation}\label{triangle}
(f_i-f_j)+(f_j-f_s)-(f_i-f_s)=0.
%f_{ij}+f_{js}-f_{is}=0.
\end{equation}
Recall that the row vectors of $\mathcal{B}_{s}$ correspond to the oriented triangles $\Delta_{ijs}$. Then it follows from \eqref{triangle} and the definition of $\mathcal{B}_{s}$ that
\begin{equation*}%\label{first-half}
 (\mathcal{B}_s\otimes \mathcal{I}_k)\cdot f^T=0,
\end{equation*}
as needed.
\end{proof}

\begin{lemma}\label{main-lemma-2}
  Let $\alpha$ be an evaluation vector that
  does not define an $n\times q^k$ strongly $t$-perfect hashing matrix, and let
  $M_{k,(I_1,\ldots,I_{s})}$  be the $s$-wise intersection matrix for $3\leq s\leq t$ given by Lemma \ref{main-lemma}. Then,  the matrix $M_{k,(I_1,\ldots,I_{s})}(\alpha)$, which is the evaluation of  $M_{k,(I_1,\ldots,I_{s})}$ at $\alpha$,  does not have full rank.
\end{lemma}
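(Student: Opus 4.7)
The plan is to exhibit an explicit nonzero vector in the kernel of $M_{k,(I_1,\ldots,I_s)}(\alpha)$. Since Lemma~\ref{main-lemma}(iv) guarantees that this matrix is square of order $\binom{s}{2}k$, producing a nontrivial kernel vector immediately shows it is not of full rank. The natural candidate, guided by the motivating picture in Figure~\ref{fig:matvec}, is the ``coefficient-difference'' vector $\mathbf{v} \in \mathbb{F}_q^{\binom{s}{2}k}$ whose block indexed by $\{i,j\} \in \binom{[s]}{2}$ (with $i<j$) is the length-$k$ coefficient vector of the polynomial $f_i - f_j$, where $f_1,\dots,f_t$ are the polynomials defining the $t$ columns that the evaluation vector $\alpha$ fails to separate (restricted, after relabeling, to the $s$-subset of indices selected in the proof of Lemma~\ref{main-lemma}).

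The first step is to verify that $\mathbf{v}$ lies in the kernel of the bottom (block-diagonal Vandermonde) portion of $M_{k,(I_1,\dots,I_s)}(\alpha)$. The $\{i,j\}$-block of this portion, after evaluation at $\alpha$, multiplies $V_k(I_i\cap I_j)(\alpha)$ by the coefficient vector of $f_i-f_j$, producing the evaluation vector $\bigl((f_i-f_j)(\alpha_m)\bigr)_{m\in I_i\cap I_j}$. By the construction in Lemma~\ref{main-lemma}, an index $m$ was placed into $I_i\cap I_j$ precisely because $f_i(\alpha_m)=f_j(\alpha_m)$; this property is preserved under both the restriction to $[s]$ and the subsequent element-removal step that trims the weight down to $(s-1)k$. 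Hence every entry of this block output vanishes.

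The second step is to verify that $\mathbf{v}$ is annihilated by the top block $\mathcal{B}_s\otimes\mathcal{I}_k$. By Lemma~\ref{cycle-space-basis}, the rows of $\mathcal{B}_s$ correspond to the oriented triangles $\Delta_{ijs}$ for $1\le i<j\le s-1$. The relation encoded by $u^{\Delta_{ijs}}$, applied to the blocks of $\mathbf{v}$, reads (up to the sign bookkeeping dictated by the reference orientation of $K_s^o$)
\[
(f_i - f_j) + (f_j - f_s) - (f_i - f_s) = 0,
\]
which vanishes identically. Taking the tensor with $\mathcal{I}_k$ just reproduces this identity coefficient by coefficient, so the entire top block output is zero.

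Finally, I would argue that $\mathbf{v}\ne 0$. The $t$ columns of the hash matrix that fail to be separated correspond to \emph{distinct} polynomials $f_1,\dots,f_t$; in particular any two of them among the $s$ retained indices differ, so at least one block $f_i-f_j$ of $\mathbf{v}$ is a nonzero polynomial and therefore has a nonzero coefficient vector. Combining the three steps yields $M_{k,(I_1,\dots,I_s)}(\alpha)\,\mathbf{v}=0$ with $\mathbf{v}\ne 0$, as required. No step looks hard in itself; the only mild bookkeeping is tracking the $\pm 1$ signs in $\mathcal{B}_s$ against the reference orientation of $K_s^o$, which just reshuffles the triangle identity above.
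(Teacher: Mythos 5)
Your proposal is correct and follows essentially the same route as the paper's proof: both exhibit the concatenated coefficient vector of the differences $f_i-f_j$ as a nonzero kernel vector, killing the bottom Vandermonde blocks via the agreement of $f_i$ and $f_j$ on $I_i\cap I_j$ and the top block via the triangle identity $f_{ij}+f_{js}-f_{is}=0$ encoded by the rows of $\mathcal{B}_s$. Your explicit remarks that agreement is preserved under the element-removal step of Lemma~\ref{main-lemma} and that the vector is nonzero because the $f_i$ are distinct are correct and, if anything, slightly more careful than the paper's write-up.
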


\begin{proof}
To prove the lemma, it suffices to show that the matrix  $M_{k,(I_1,\ldots,I_{s})}(\alpha)$ has a nontrivial kernel. Towards this end, let $f_1,\ldots,f_{s}$ be the $s$  distinct polynomials that correspond to the set $S=[s]$ found in the proof of  Lemma \ref{main-lemma}. \black{Note that the polynomials $f_1,\ldots,f_s$ correspond to subsets $I_1,\ldots,I_s$ that satisfy assumptions (i)-(iv) of Lemma \ref{main-lemma}}.

We view a polynomial of degree at most $k-1$ also as a   vector of length $k$  defined by its $k$ coefficients, where for $1\le i\le k$, the $i$-th coordinate of the vector is the coefficient of the monomial $x^{i-1}$ in that polynomial.  %Let $f_{ij}=f_i - f_j\in\mathbb{F}_q^k$ for $1\leq i<j\leq s$, and
Let  $f=(f_i-f_j: 1\leq i<j\leq s)\in\mathbb{F}_q^{\binom{s}{2}k}$, as defined in the statement of Lemma \ref{lem:required}. %which is the concatenation of the vectors  $f_i-f_j$  according to the lexicographic order on $\binom{[s]}{2}$ defined in Section \ref{preliminaries}.

We claim that $M_{k,(I_1,\ldots,I_{s})}(\alpha)\cdot f^T=0$. First of all, it follows by Lemma \ref{lem:required} that
\iffalse
Note that for any $1\le i<j\le s$, we have
\begin{equation}\label{triangle}
(f_i-f_j)+(f_j-f_s)-(f_i-f_s)=0.
%f_{ij}+f_{js}-f_{is}=0.
\end{equation}
Recall that the row vectors of $\mathcal{B}_{s}$ correspond to the oriented triangles $\Delta_{ijs}$. Then it follows from \eqref{triangle} and the definition of $\mathcal{B}_{s}$ that
\fi
\begin{equation}\label{first-half}
 (\mathcal{B}_s\otimes \mathcal{I}_k)\cdot f^T=0.
\end{equation}

\noindent Moreover, observe that by definition, for any $\ell\in I_i\cap I_j$ we have $f_i(\alpha_{\ell})=f_j(\alpha_{\ell})$, which implies that $0=f_i(\alpha_{\ell})-f_j(\alpha_{\ell})=(f_i-f_j)(\alpha_{\ell})=(f_i-f_j)(\alpha_{\ell})$. Therefore  $V_k(I_i\cap I_j)\cdot f_{ij}^T=0$, which implies that
\begin{equation}\label{second-half}
 {\rm diag}\Big(V_k(I_i\cap I_j): \{i,j\}\in \binom{[s]}{2}\Big)\cdot f^T=0.
\end{equation}

\noindent Combining \eqref{first-half} and \eqref{second-half} we conclude that $M_{k,(I_1,\ldots,I_{s})}(\alpha)\cdot f^T=0$, completing the proof of the claim.
\end{proof}

We are now in a position to prove Theorem \ref{perfect-hash-matrix}.

\begin{proof}[\textbf{Proof of Theorem \ref{perfect-hash-matrix}}]
We give an upper bound on the number of bad evaluation vectors that do not define a strongly $t$-perfect hash matrix.

Let $\mathcal{M}$ be the set of  $s$-wise intersection matrix $M_{k,(I_1,\ldots,I_{s})}$ that satisfy conditions {\rm (i)}-{\rm (iv)} of Lemma \ref{main-lemma} for any $s\in \{3,\ldots, t\}$.
It is clear that any  $s$-wise intersection matrix $M_{k,(I_1,\ldots,I_s)}$ is completely determined by the subsets $I_1,\ldots,I_s$, therefore the size of $\mathcal{M}$    is at most  $2^{nt}$.

 By Lemma \ref{main-lemma}, for any bad evaluation vector $\alpha\in\mathbb{F}_q^n$ there exists a matrix $M\in\mathcal{M}$ whose determinant is a nonzero polynomial in $\mathbb{F}_q[x_1,\ldots,x_n]$ of degree less than $k^2t$, \black{and moreover},  by Lemma \ref{main-lemma-2}, it holds that $\det(M)(\alpha)=0$. Therefore, the set of bad evaluation vectors is contained in the union of the zero sets of the polynomials $\det(M),~M\in\mathcal{M}$, which by Lemma \ref{zeros}, is of size at most
$$|\mathcal{M}|\cdot(k^2t)q^{n-1}\le (2^{nt}k^2t)q^{n-1}=O_{n,t}(q^{n-1}).$$
The result follows by observing that the number of evaluation vectors, i.e., the number of  vectors in $\mathbb{F}_q^n$ with pairwise distinct coordinates is  $\frac{q!}{(q-n)!}=\Theta_n(q^n)$. Hence, for sufficiently large $q$ there exist many evaluation vectors that are not bad, and the result follows.
\end{proof}

\subsection{Proof of Proposition~\ref{proposition}}
Next, we prove Proposition~\ref{proposition}, which implies that Theorem~\ref{perfect-hash-matrix} is tight, at least for linear hash matrices.
\begin{proof}[\textbf{Proof of Proposition \ref{proposition}}]
It is clear that any $n\times q^k$ matrix $M$ given by an $[n,k]$-RS code is linear, as we may write $M=V_k(\alpha_1,\ldots,\alpha_n)\cdot Q$, where $\alpha=(\alpha_1,\ldots,\alpha_n)$ is the evaluation vector,  $V_k(\alpha_1,\ldots,\alpha_n)$ is the associated $n\times k$ Vandermonde matrix as defined in \eqref{vandermonde}, and $Q$ is the $k\times q^k$ matrix whose columns are formed by the $q^k$ distinct vectors of $\mathbb{F}_q^k$.
The statement on the optimality follows  from  Theorem 4 of \cite{Blackburn1998Optimal} which claims that  any linear $N\times q^k$ matrix that has the property that any set of $t$ columns is separated by at least one row, satisfies $N\ge k(t-1)$.

By this result, the restriction to the first   $k(t-1)-1$ rows of an $n\times q^k$ linear matrix, contains a set of $t$ columns that are not separated at all. Hence, this set of columns is separated by at most $n-k(t-1)+1$ rows of the $n\times q^k$ linear matrix, and the result follows.
\end{proof}

\section{Near-Optimal List-Recovery of RS Codes: Proof of the Main Theorem}
\label{list-recovery}

In this section, we prove our main theorem on the list-recovery of RS codes, Theorem~\ref{thm:main}.
For $n,k\in\mathbb{N}^+$, a prime power $q>1$, and $\alpha_1,\dots,\alpha_n\in\mathbb{F}_q$, define
\[
\mathsf{RS}_{n,k,q}(\alpha_1,\dots,\alpha_n):=\{(f(\alpha_1),\dots,f(\alpha_n)) : f(x)\in \mathbb{F}_q[x], ~\deg(f)<k\}\subseteq\mathbb{F}_q^n.
\]
In particular, when $\alpha_1,\dots,\alpha_n$ are distinct, $\mathsf{RS}_{n,k,q}(\alpha_1,\dots,\alpha_n)$ is the $[n,k]$ RS code over $\mathbb{F}_q$ with the evaluation points $\alpha_1,\dots,\alpha_n$.

We will in fact prove the following theorem, which implies Theorem~\ref{thm:main}.
\begin{theorem} \label{list-recover-main}
Let $k,n, L, \ell\in\N^+$, $\epsilon\in (0,1]$, and $\delta>0$ be such that  $L\geq (1+\delta)\ell/\epsilon-1$  and
\[
k/n\leq \frac{\epsilon}{c \sqrt{\ell}( \frac{1+\delta}{\delta } )(\log(\frac{1}{\epsilon})+\log( \frac{1+\delta}{\delta })+1)},
\]
where $c>0$ is the constant in Lemma~\ref{lem_subset}.
%Consider the RS code
%\[
%\mathcal{C}=\{(f(\alpha_1),\dots,f(\alpha_n)) : f(x)\in %\mathbb{F}_q[x], ~\deg(f)<k\}
%\]
Suppose $q\geq 2^{c'(L+n\log L)}$ for a large enough constant $c'>0$ and $\alpha_1,\dots,\alpha_n$ are chosen uniformly and independently from $\mathbb{F}_q$ at random.
Then with high probability \black{($\ge 1-O_{n,L}(1/q)$)}, the code $\mathsf{RS}_{n,k,q}(\alpha_1,\dots,\alpha_n)$ has rate $R=k/n$ and is list-recoverable up to relative distance $1-\epsilon$ with input list size $\ell$ and output list size $L$.
In particular, by choosing $\delta$ to be any positive constant, we could achieve $L=O(\ell/\epsilon)$ and $R=\Omega\left(\frac{\epsilon}{\sqrt{\ell}(\log(1/\epsilon)+1)}\right)$.
\end{theorem}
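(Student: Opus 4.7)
The plan is a proof by contradiction combining the intersection matrix framework of Section~\ref{sec:3} with the combinatorial engine Lemma~\ref{lem_subset} and the nonsingularity result Theorem~\ref{theorem-weaker}. Suppose that for some evaluation vector $\alpha\in\F_q^n$, the RS code $\cC$ fails to be $(1-\epsilon,\ell,L)$-list-recoverable. Then there exist $L+1$ distinct polynomials $f_1,\ldots,f_{L+1}$ of degree less than $k$ and input lists $S_1,\ldots,S_n\subseteq\F_q$ of size $\ell$ such that each set $I_j=\{i\in[n]:f_j(\alpha_i)\in S_i\}$ has size at least $\epsilon n$. The hypothesis $L+1\geq(1+\delta)\ell/\epsilon$ then forces $\sum_j|I_j|\geq(1+\delta)\ell n$, so many indices $i$ have multiplicity $|T_i|:=|\{j:i\in I_j\}|$ comfortably exceeding $\ell$. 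Since $|S_i|=\ell$, pigeonhole produces many pairs of the $f_j$'s that agree at such indices, and the Reed--Solomon distance caps the number of such pairwise agreements at $k-1$ per pair.

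To convert this combinatorial surplus into a usable algebraic constraint, I apply Lemma~\ref{lem_subset}: it randomly chooses a subset $J\subseteq[L+1]$ with $|J|\geq 2$ and greedily trims the sets $\{I_j:j\in J\}$ to refined subsets $\{I'_j:j\in J\}$ satisfying (a) pairwise empty three-wise intersections $I'_i\cap I'_j\cap I'_\ell=\emptyset$, (b) the weight bound $\wt(I'_{J'})\leq(|J'|-1)k$ for every nonempty $J'\subseteq J$, and (c) the tight equality $\wt(I'_J)=(|J|-1)k$. The $\Theta(\log(1/\epsilon))$ slack baked into the rate hypothesis, together with the $\sqrt{\ell}$ factor flowing from the pigeonhole above, is calibrated precisely so that this trimming succeeds with positive probability. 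With such $J$ and $\{I'_j\}$ in hand, Theorem~\ref{theorem-weaker} immediately yields that the variable intersection matrix $M_{k,(I'_j:j\in J)}$ is nonsingular, and hence $\det M_{k,(I'_j:j\in J)}$ is a nonzero polynomial in $\F_q[x_1,\ldots,x_n]$ of total degree at most $O(k^2|J|)=O(kn)$.

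A union-bound argument over the random $\alpha$ then closes the proof, mirroring the one used for Theorem~\ref{perfect-hash-matrix}. The number of configurations $(f_1,\ldots,f_{L+1},S_1,\ldots,S_n,J,\{I'_j\})$ to consider is at most $q^{O(Lk)}\cdot\binom{q}{\ell}^n\cdot 2^{O(nL)}$, and the DeMillo--Lipton--Schwartz--Zippel lemma bounds the fraction of $\alpha\in\F_q^n$ on which any single determinant vanishes by $O(kn/q)$. The field-size hypothesis $q\geq 2^{c'(L+n\log L)}$ absorbs these factors, so with high probability a uniformly random $\alpha$ is simultaneously good for every configuration. For such a good $\alpha$, one constructs, as in Figure~\ref{fig:matvec} (adapted to list-recovery by selecting for each index $i$ a single pair of polynomials agreeing at $\alpha_i$), a nonzero kernel vector of $M_{k,(I'_j:j\in J)}(\alpha)$ from the pairwise differences $f_j-f_{j'}$; this directly contradicts the nonsingularity of the evaluated matrix, completing the proof.

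The main obstacle is Lemma~\ref{lem_subset}: the random choice of $J$ combined with the greedy deletions must simultaneously eliminate every three-wise intersection \emph{and} preserve the tight equality $\wt(I'_J)=(|J|-1)k$. This is where both the logarithmic loss in the rate and the $\sqrt{\ell}$ dependence enter---the former from the deletion cost needed to empty three-wise intersections, the latter from the pigeonhole conversion of ``$f_j(\alpha_i)\in S_i$'' into actual pairwise polynomial agreements. Once Lemma~\ref{lem_subset} is in hand, the remainder is a routine combination of the intersection-matrix technology from Section~\ref{sec:3} with a standard first-moment probabilistic argument.
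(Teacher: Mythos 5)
Your overall strategy coincides with the paper's: contradiction via Lemma~\ref{lem_subset} plus Theorem~\ref{theorem-weaker}, a kernel vector built from the differences $f_i-f_j$ (using Condition~(4) of Lemma~\ref{lem_subset} to ensure the Vandermonde rows annihilate it in the list-recovery setting), and a DeMillo--Lipton--Schwartz--Zippel union bound over random evaluation points. However, your union bound as written fails. You propose to union over all configurations $(f_1,\dots,f_{L+1},S_1,\dots,S_n,J,\{I'_j\})$, of which there are $q^{O(Lk)}\cdot\binom{q}{\ell}^n\cdot 2^{O(nL)}$, against a per-configuration failure probability of $O(kn/q)$. The product is roughly $q^{O(Lk)+\ell n-1}$, which tends to infinity as $q$ grows; no field-size hypothesis can absorb a union whose cardinality grows faster than $q$. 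Increasing $q$ only makes this worse.

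The missing observation is that the polynomial $\det M_{k,(I'_j)_{j\in J}}$ depends only on the pair $(J,\{I'_j\}_{j\in J})$ and not on the polynomials $f_i$ or the lists $S_i$, so the union need only range over set systems. This is exactly the paper's Lemma~\ref{lem_counting}: because Condition~(1) of Lemma~\ref{lem_subset} forces $|\{i\in J: j\in I'_i\}|\le 2$ for every coordinate $j$, the number of admissible pairs $(J,(I'_i)_{i\in J})$ is at most $2^t\bigl(1+t+\binom{t}{2}\bigr)^n$, which is independent of $q$. One then fixes $\alpha$ to be simultaneously good for \emph{all} such set systems (probability $1-o(1)$ under $q\ge 2^{c'(L+n\log L)}$), and only afterwards runs the contradiction argument for whatever polynomials and lists a hypothetical counterexample supplies. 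With that repair your argument matches the paper's proof; the rest of your outline (including the degree bound and the role of the $\sqrt{\ell}$ and $\log(1/\epsilon)$ losses inside Lemma~\ref{lem_subset}) is consistent with what the paper does.
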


We begin with an overview of the proof.

\subsection{Overview of the Proof}
\label{sec:overview}

We give an overview of our proof of Theorem~\ref{list-recover-main}.
For simplicity, let us first assume the input list size $\ell$ equals one, i.e., we restrict to the case of list decoding.
In this case, Theorem~\ref{list-recover-main} states that there exist RS codes of rate $\Omega(\frac{\epsilon}{\log(1/\epsilon)+1})$ that are list-decodable from radius $1-\epsilon$ with list size $O(1/\epsilon)$.

As discussed previously, Conjecture~\ref{conjecture} about the nonsingularity of intersection matrices would be enough to establish Theorem~\ref{list-recover-main}, and indeed an even stronger result.
While we do not know if Conjecture~\ref{conjecture} holds in general,  Theorem~\ref{theorem-weaker} states that it holds under an extra condition that $I_i\cap I_{i'}\cap I_{i''}=\emptyset$ for distinct $i,i',i''\in[t]$. Our proof of Theorem~\ref{list-recover-main} is based on this theorem.

As Theorem~\ref{theorem-weaker} requires the above extra condition,  which does not hold in general, we cannot simply follow the proof in \cite{shangguan2019combinatorial0} and replace  Conjecture~\ref{conjecture} by Theorem~\ref{theorem-weaker}.
One naive way of fixing this  is removing elements from the sets $I_i$ until the condition $I_i\cap I_{i'}\cap I_{i''}=\emptyset$ for distinct $i,i',i''\in[t]$ is satisfied. Specifically, for each $j\in [n]$ such that there exist more than two sets $I_{i_1}, \dots, I_{i_s}$ containing $j$, we pick two sets (say $I_{i_1}$ and $I_{i_2}$) and remove $j$ from all the other sets. The resulting sets $I'_1,\dots,I'_t$ satisfy the condition $I'_i\cap I'_{i'}\cap I'_{i''}=\emptyset$ for distinct $i,i',i''\in[t]$ and we can now apply Theorem~\ref{theorem-weaker} to conclude that $M_{k,(I'_1,\ldots,I'_t)}$ is nonsingular.

The problem with this idea, however, is that  $\wt(I'_{[t]})$ could be much smaller than $\wt(I_{[t]})$.
Indeed, if a coordinate $j\in [n]$ is simultaneously contained in $s\geq 2$ sets $I_{i_1},\dots,I_{i_s}$, then it is counted $s-1$ times in   $\wt(I_{[t]})=\sum_{i=1}^t |I_i|-\left|\bigcup_{i=1}^t I_i\right|$ (one for each set $I_{i_u}$, $u=1,\dots,s$, and minus one for $\bigcup_{i=1}^t I_i$). On the other hand, this coordinate is counted only once in $\wt(I'_{[t]})$ since we have removed it from all but two of the sets $I_{i_u}$.
As $s$ can be as large as $t$, the new weight $\wt(I'_{[t]})$ can be smaller than $\wt(I_{[t]})$ by a factor of $t-1=\Omega(1/\epsilon)$.
%So in order to achieve $\wt(I'_{[t]})\geq (t-1)k$ as required by Theorem~\ref{theorem-weaker},\footnote{Theorem~\ref{theorem-weaker} requires the stronger condition $\wt(I'_{[t]}) = (t-1)k$, but this can be achieved by further removing elements from the sets $I'_i$.} we need to start with sets $I_i$ such that  $\wt(I_{[t]})\geq \Omega(1/\epsilon)\cdot (t-1)k$, where we lose a factor of $\Omega(1/\epsilon)$.
As a consequence, implementing this idea directly only yields RS codes of rate $\Omega(\epsilon^2)$.

To mitigate this problem, we perform a random sampling of the collection $\{I_1,\dots,I_t\}$ before removing elements from $I_i$. Namely, we choose a random subset $J\subseteq [t]$ of some appropriate cardinality to be determined later.
Then,  we remove elements from the sets $I_i$ just like before, but only for $i\in J$, so that the resulting sets $I'_i$ satisfy the condition $I'_i\cap I'_{i'}\cap I'_{i''}=\emptyset$ for distinct $i,i',i''\in J$.
Finally, we apply Theorem~\ref{theorem-weaker} to conclude that  the $|J|$-wise intersection matrix $M_{k,(I'_i)_{i\in J}}$ is nonsingular, which can still be used to prove the list-decodability of the RS code.

The advantage of replacing $[t]$ by the random sample $J\subseteq [t]$ is that the condition $\wt(I'_{[t]})\geq (t-1)k$
is replaced by $\wt(I'_{J})\geq (|J|-1)k$. It turns out that the condition $I_i'\cap I_{i'}'\cap I_{i''}'=\emptyset$ for distinct $i,i',i''\in J$ is easier to satisfy since  $|J|$ may be much smaller than $t$.
Consequently, we are able to show that there exist RS codes of rate $\Omega(\frac{\epsilon}{\log(1/\epsilon)+1})$ using this improved method.

Finally, we explain how to choose the cardinality of the sample $J$.
Let $j\in [n]$ and denote by $s_j$ the number of sets among $I_1,\dots, I_t$ that contain  $j$.
Then for the index $j$, we choose $|J|=\Theta(t/s_j)$ for our purpose.
However, the number $s_j$ may vary  when $j$ ranges over $[n]$, meaning that there may not be a single choice of $|J|$ that works best for all $j\in [n]$ simultaneously.

We solve this problem using the following trick: Create a logarithmic number of ``buckets'' and put $j\in [n]$ in the $i$-th bucket if $2^{i-1} \leq s_j<2^i$. Then choose $|J|$ according to the heaviest bucket. Here, we lose a factor of $O(\log(1/\epsilon)+1)$ in the rate because there are about $\log L = O(\log(1/\epsilon)+1)$ buckets.

\paragraph{Generalization to list recovery.} In the case of list decoding, we choose each set $I_i$ to be the subset of coordinates where a codeword $c_i$ and the received word $y$ agree.  In the more general setting of list recovery, each $I_i$ is the subset of $j\in [n]$ where $j$-th coordinate of $c_i$ belongs to a certain set $S_j\subseteq \mathbb{F}_q$ of size $\ell$.
So the $j$-th coordinate of $c_i$ may take at most $\ell$ possible values, which may be thought of as $\ell$ different ``colors."
% there are multiple received words $y^{(1)},\dots, y^{(\ell)}$ in the input list, so we need to keep track of multiple sets $I_i^{(1)},\dots,I_i^{(\ell)}$ for each $i\in [t]$.

One way of extending our proof to list recovery is choosing subsets $I'_1\subseteq I_1,\dots,I'_t\subseteq I_t$ that come from coordinates of codewords with the same color, where the colors are picked such that $\wt(I'_{[t]})=\wt(I'_1,\dots, I'_t)$ is maximized. Then we proceed as in the case of list decoding.
It is not hard to show that this yields RS codes of rate $\Omega(\frac{\epsilon}{\ell(\log(1/\epsilon)+1)})$ which are list-recoverable from radius $1-\epsilon$ with input list size $\ell$ and output list size $O(\ell/\epsilon)$.

With a more careful analysis, we show that we can achieve a better rate  $\Omega(\frac{\epsilon}{\sqrt{\ell}(\log(1/\epsilon)+1)})$, as stated by Theorem~\ref{thm:main}.
Our analysis is inspired by \cite{LP20} which proved a similar result on the list-recoverability of randomly punctured
codes with a different setting of parameters.

\subsection{A Combinatorial Lemma}

In this subsection, we state a combinatorial lemma (Lemma~\ref{lem_subset}). It guarantees the existence of a subset $J\subseteq [t]$ and sets $I'_i\subseteq I_i$ for $i\in J$  that satisfy certain conditions, particularly the condition  $I'_{i}\cap I'_{i'}\cap I'_{i''}=\emptyset$ for distinct  $i,i',i''\in J$. We then use this lemma together with Theorem~\ref{theorem-weaker} to  prove Theorem~\ref{thm:main}.
The proof of this combinatorial lemma is postponed to Subsection~\ref{sec_comb}.

First, we need the following generalization of the weight function $\mathrm{wt}(\cdot)$.

\begin{definition}[Generalized weight function]
Let $n,t\in\mathbb{N}^+$ and $I_1,\dots, I_t\subseteq [n]$. Let $U_j=\{i\in [t]: j\in I_i\}$ for $j\in [n]$. For $J\subseteq [t]$ and $\ell\in\mathbb{N}^+$, define the \emph{$\ell$-th generalized weight} $\mathrm{wt}_{\ell}(I_J)$ of $I_J$ to be
\[
\mathrm{wt}_\ell(I_J):=\sum_{j=1}^n \max\{|U_j\cap J|-\ell, 0\}.
\]
\end{definition}
Note that $\mathrm{wt}(I_J)=\sum_{i\in J} |I_i|-|\bigcup_{i\in J} I_i|=\mathrm{wt}_1(I_J)$. Also note that
\begin{equation}\label{eq_weightell}
\mathrm{wt}_\ell(I_J)\geq\sum_{j=1}^n  \left(|U_j\cap J|-\ell\right) = \sum_{i\in J} |I_i| - \ell n.
\end{equation}

The proof of  Theorem~\ref{thm:main} uses the following combinatorial lemma, which we prove in the next subsection.

\begin{lemma}\label{lem_subset}
Let $k,n,t,\ell\in\mathbb{N}^+$, $\epsilon\in (0,1]$, $\delta>0$, and $I_1, \dots, I_t\subseteq [n]$. Let $S_1,\dots, S_n$ be sets of size $\ell$ with the following property: for each $j\in[n]$, $\{c_{i,j}:j\in I_i\}\subseteq S_j$, where the $c_{i,j}$'s are field elements.
%Let $I_i=\bigcup_{r=1}^\ell I_i^{(r)}$ for $i\in [t]$.
%Let $c_{i,j}\in S_j$ for $i\in [t]$ and $j\in I_i$.
Suppose $t\geq (1+\delta)\ell/\epsilon$,
$\left|I_i\right|\geq \epsilon n$ for $i\in [t]$, and
\[
  \mathrm{wt}_\ell(I_{[t]})\geq \left(c\sqrt{\ell}\left(\log\left(\frac{1}{\epsilon}\right)+\log\left( \frac{1+\delta}{\delta }\right)+1\right)\right)\cdot tk.
\]
where $c>0$ is a large enough absolute constant. Then there exist $J\subseteq [t]$ and a collection $(I_i')_{i\in J}$ of subsets of $[n]$ indexed by $J$
such that $|J|\geq 2$, $I_i'\subseteq I_i$ for $i\in J$, and the following conditions are satisfied:
\begin{enumerate}[(1)]
\item $I'_{i}\cap I'_{i'}\cap I'_{i''}=\emptyset$ for distinct  $i,i',i''\in J$.
\item $\mathrm{wt}(I'_{J'})\leq (|J'|-1)k$ for all nonempty $J'\subseteq J$.
\item $\mathrm{wt}(I'_J)=(|J|-1)k$.
\item $c_{i,j}=c_{i',j}$ for $i,i'\in [t]$ and $j\in I'_i\cap I'_{i'}$.
\end{enumerate}
\end{lemma}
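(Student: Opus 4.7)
The plan is to combine a dyadic bucketing of the coordinates with a random thinning of $[t]$ to $J$, and then clean up via the same minimal-subset / iterated-pruning argument used in Lemma~\ref{main-lemma}. Define $s_j := |\{i \in [t] : j \in I_i\}|$ and $S_j^{(r)} := \{i \in [t] : j \in I_i^{(r)}\}$ for $j \in [n]$, $r \in [\ell]$. Since only coordinates with $s_j > \ell$ contribute to $\mathrm{wt}_\ell(I_{[t]})$ and $s_j \le t$, the dyadic buckets $B_u := \{j : 2^{u-1} \le s_j < 2^u\}$ range over $O(\log(t/\ell)) = O(\log(1/\epsilon)+\log((1+\delta)/\delta)+1)$ nontrivial values. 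An averaging argument produces a heaviest bucket $B_{u^*}$ with $|B_{u^*}|\cdot s^* \gtrsim \mathrm{wt}_\ell(I_{[t]})/\log(t/\ell)$, where $s^* \simeq 2^{u^*} \ge \ell$.

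Next, include each $i \in [t]$ in $J$ independently with probability $p := \min\{1, \sqrt{\ell}/s^*\}$, and for each $j \in [n]$ set $r_j := \arg\max_r |S_j^{(r)} \cap J|$. If $m_j(J) := \max_r |S_j^{(r)} \cap J| \ge 2$, place $j$ into exactly two of the $I'_i$, choosing two $i \in J$ with $j \in I_i^{(r_j)}$; otherwise leave $j$ out of every $I'_i$. This enforces (1) and (4) by construction. The key estimate, via a Paley--Zygmund application to $X_j := \sum_r \binom{|S_j^{(r)} \cap J|}{2}$ together with the Cauchy--Schwarz bound $\sum_r |S_j^{(r)}|^2 \ge s_j^2/\ell$, is
\[
\Pr[m_j(J) \ge 2] \;\gtrsim\; \min\bigl\{1,\; p^2 s_j^2/\ell\bigr\},
\]
which is $\Omega(1)$ for every $j \in B_{u^*}$ by the choice of $p$. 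Summing over $B_{u^*}$,
\[
\mathbb{E}[\mathrm{wt}(I'_J)] \;\gtrsim\; |B_{u^*}| \;\gtrsim\; \frac{\mathrm{wt}_\ell(I_{[t]})}{s^*\,\log(t/\ell)} \;\gtrsim\; \frac{\sqrt{\ell}\,tk}{s^*} \;=\; ptk \;\simeq\; \mathbb{E}[|J|]\cdot k,
\]
with the hypothesized lower bound on $\mathrm{wt}_\ell(I_{[t]})$ supplying enough slack in the constants.

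Using Chernoff for $|J|$ and a Chebyshev / second-moment argument for $\mathrm{wt}(I'_J)$ with pair-covariance bookkeeping, fix a deterministic realization of $J$ with $|J| \ge 2$, tentative sets $(I'_i)_{i \in J}$, and assignments $r_j$ such that (1), (4) hold and $\mathrm{wt}(I'_J) \ge (|J|-1)k$. To secure (2) and (3), mirror the endgame of Lemma~\ref{main-lemma}: choose a subset $J^\sharp \subseteq J$ of minimum size at least $2$ with $\mathrm{wt}(I'_{J^\sharp}) \ge (|J^\sharp|-1)k$; by minimality $\mathrm{wt}(I'_{J'}) \le (|J'|-1)k$ for every $J' \subsetneq J^\sharp$, and then delete single coordinates from the $I'_i$'s one at a time to drive $\mathrm{wt}(I'_{J^\sharp})$ down to exactly $(|J^\sharp|-1)k$. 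Deletions are monotone under (1), (2), and (4), so all four conditions persist on $J^\sharp$.

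The main obstacle is the $\sqrt{\ell}$ improvement driving the rate of Theorem~\ref{list-recover-main}. A naive pigeonhole bound $\max_r |S_j^{(r)}| \ge s_j/\ell$ would cost a factor of $\ell$ in the rate; recovering the $\sqrt{\ell}$ dependence requires the Cauchy--Schwarz estimate on $\sum_r |S_j^{(r)}|^2$, together with the precise threshold $p = \Theta(\sqrt{\ell}/s^*)$ that simultaneously keeps $\mathbb{E}[|J|]$ under control, boosts $\Pr[m_j(J) \ge 2]$ on $B_{u^*}$ to a constant, and prevents the pruning step from shedding more than a bounded fraction of the weight. Pushing the concentration through in the presence of the correlations between the events $\{i \in J\}$ across different coordinates $j$ is the technically delicate point, and will likely require a moment computation in the spirit of \cite{LP20}.
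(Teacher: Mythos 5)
Your proposal has the same skeleton as the paper's proof: a dyadic bucketing of the coordinates (the paper's Lemma~\ref{lem_card}), a Bernoulli subsample $J$ with $p\sim\sqrt{\ell}/(\text{scale})$, placing each surviving coordinate into exactly two sets drawn from a single input list to force (1) and (4) (Lemma~\ref{lem_sampling} and Corollary~\ref{cor_existJ}), and a minimal-subset-plus-greedy-pruning endgame for (2) and (3). The endgame and the enforcement of (1),(4) are fine. But there is a genuine gap in the central probabilistic estimate: you calibrate everything to $s_j$ when the correct scale is the \emph{excess} $\max(s_j-\ell,0)$. Concretely, the claimed bound $\Pr[m_j(J)\ge 2]\gtrsim\min\{1,p^2s_j^2/\ell\}$ is false. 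Take $s_j=\ell+1$ with one $S_j^{(r)}$ of size $2$ and the remaining $S_j^{(r)}$ singletons (the sets may be assumed disjoint). Then $X_j=\sum_r\binom{|S_j^{(r)}\cap J|}{2}$ has $\E[X_j]=p^2$, so by Markov $\Pr[m_j(J)\ge 2]\le p^2$; with your $p=\sqrt{\ell}/s^*\approx 1/\sqrt{\ell}$ this is $1/\ell$, while your claimed lower bound is $\Omega(1)$. The Cauchy--Schwarz inequality does give $\sum_r|S_j^{(r)}|^2\ge s_j^2/\ell$, but the first moment involves $\sum_r\binom{|S_j^{(r)}|}{2}=\tfrac12\bigl(\sum_r|S_j^{(r)}|^2-s_j\bigr)\ge \tfrac{s_j(s_j-\ell)}{2\ell}$, and the subtracted $s_j$ wipes out almost everything precisely when $s_j-\ell\ll s_j$ --- which is a legitimate weight-carrying regime under the hypotheses. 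The paper's fix is to bucket by $|S_j|-\ell$ and set $p=\min\{\sqrt{\ell}/(2K),\tfrac12\}$ where $K$ is the excess scale of the heavy bucket; the cap at $\tfrac12$ (not $1$) makes $p$ a constant when the excess is small, which is exactly what rescues coordinates like the one above. Relatedly, your assertion that the number of relevant buckets is $O(\log(t/\ell))=O(\log(1/\epsilon)+\log(\tfrac{1+\delta}{\delta})+1)$ is backwards: the hypothesis only bounds $t$ from below, so $\log(t/\ell)$ can be arbitrarily larger than $\log(1/\epsilon)$. You must discard the low buckets using $\mathrm{wt}_\ell(I_{[t]})\ge\tfrac{\delta}{1+\delta}t\epsilon n$ (they carry at most half the weight), which is what caps the bucket count independently of $t$.

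A smaller point: the concentration step you flag as delicate is unnecessary. Since $|A_J|\le|A|$ deterministically, $\E[|A_J|]=\Omega(|A|)$ already gives $\Pr[|A_J|\ge c_2|A|]\ge c_3$ for constants by reverse Markov, and Markov's inequality on $|J|$ plus a union bound produces a single good realization; no second-moment or covariance bookkeeping over the correlated events $\{i\in J\}$ is needed. With the scale corrected to the excess and the bucket truncation added, your argument becomes the paper's.
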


\begin{remark}
 Condition~(4) is introduced for  list recovery.
The case $\ell=1$ corresponds to list decoding.
In this case, Condition~(4) is automatically satisfied.
%since there is only one ``color" $c_{i,j}$ for each $j\in [n]$.
\end{remark}

 We also need the following lemma that bounds the number of  pairs $(J, (I_i')_{i\in J})$.

\begin{lemma}\label{lem_counting}
The number of  $(J, (I_i')_{i\in J})$ satisfying Condition~(1) of Lemma~\ref{lem_subset} is at most $2^t(1+t+\binom{t}{2})^n$.
\end{lemma}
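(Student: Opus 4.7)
The plan is a direct counting argument by encoding each admissible collection element by element. First I would fix a subset $J \subseteq [t]$, which contributes a factor of at most $2^t$ choices. Given $J$, it remains to count collections $(I'_i)_{i\in J}$ of subsets of $[n]$ satisfying Condition~(1), namely that no element $j \in [n]$ lies in three or more of the sets $I'_i$.

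To count these, I would use the standard bijection between such a collection and its "membership pattern": for each $j \in [n]$, specify the set $S'_j := \{i \in J : j \in I'_i\} \subseteq J$. Reconstructing $I'_i = \{j \in [n] : i \in S'_j\}$ shows this is a bijection between collections $(I'_i)_{i \in J}$ and tuples $(S'_j)_{j \in [n]} \in (2^J)^n$. Condition~(1) translates exactly to the constraint $|S'_j| \leq 2$ for every $j \in [n]$. So each $S'_j$ is either empty, a singleton in $J$, or a $2$-subset of $J$, giving at most
\[
1 + |J| + \binom{|J|}{2} \le 1 + t + \binom{t}{2}
\]
choices per coordinate $j$, and hence at most $(1 + t + \binom{t}{2})^n$ collections for a fixed $J$.

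Summing over the at most $2^t$ choices of $J \subseteq [t]$ gives the claimed bound $2^t(1 + t + \binom{t}{2})^n$. There is no real obstacle here; the only thing to double-check is that Condition~(1) genuinely corresponds to the pointwise constraint $|S'_j| \leq 2$, which is immediate from the definition (an element $j$ lies in $I'_i \cap I'_{i'} \cap I'_{i''}$ for distinct $i, i', i'' \in J$ iff $\{i,i',i''\} \subseteq S'_j$, i.e., $|S'_j| \geq 3$).
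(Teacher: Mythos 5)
Your proof is correct and follows essentially the same route as the paper's: fix $J$ (at most $2^t$ choices), encode the collection by the membership patterns $S'_j=\{i\in J: j\in I'_i\}$ (the paper calls these $T_j$), observe that Condition~(1) forces $|S'_j|\le 2$ so each contributes at most $1+t+\binom{t}{2}$ choices, and recover the $I'_i$ from the patterns. Nothing to add.
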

\begin{proof}
There are at most $2^t$ choices of $J$.
Now fix $J\subseteq [t]$.
For $j\in [n]$, let $T_j=\{i\in J: j\in I'_i\}$.
Note that we have $|T_j|\leq 2$ for all $j\in [n]$ by Condition~(1) of Lemma~\ref{lem_subset}.
So for each $j\in [n]$, the number of choices of $T_j$ is at most $1+t+\binom{t}{2}$.
Also note that the sets $I_i'$ are determined by the sets $T_j$ by $I_i'=\{j\in [n]: i\in T_j\}$.
So the number of choices of $(J, (I_i')_{i\in J})$ is at most $2^t(1+t+\binom{t}{2})^n$.
\end{proof}

\subsection{Proof of Theorem~\ref{list-recover-main}}

Now we are ready to prove Theorem~\ref{list-recover-main}, the main theorem of Section~\ref{list-recovery}.
%For the reader's convenience, we restate it below.
%
%\begin{theorem*}[Theorem~\ref{list-recover-main}, restated]
%Let $k,n, L,\ell\in\N^+$, $\epsilon\in (0,1]$, and $\delta>0$ such that  $L\geq (1+\delta)\ell/\epsilon-1$  and
%\[
%k/n\leq \frac{\epsilon}{c \sqrt{\ell}( \frac{1+\delta}{\delta } )(\log(\frac{1}{\epsilon})+\log( \frac{1+\delta}{\delta })+1)},
%
%where $c>0$ is the constant in Lemma~\ref{lem_subset}.
%Consider the RS code
%\[
%\mathcal{C}=\{(f(\alpha_1),\dots,f(\alpha_n)) : f(x)\in \mathbb{F}_q[x], \deg(f)<k\}
%\]
%where $q\geq 2^{c'(L+n\log L)}$ for a large enough constant $c'>0$ and $\alpha_1,\dots,\alpha_n$ are chosen uniformly and independently from $\mathbb{F}_q$ at random.
%Then with high probability, the code $\mathcal{C}$ has rate $R=k/n$ and is list-recoverable up to relative distance $1-\epsilon$ with input list size $\ell$ and output list size $L$.
%In particular, by choosing $\delta$ to be any positive constant, we could achieve $L=O(\ell/\epsilon)$ and $R=\Omega\left(\frac{\epsilon}{\sqrt{\ell}(\log(1/\epsilon)+1)}\right)$.
%\end{theorem*}

\begin{proof}[Proof of Theorem~\ref{list-recover-main}]
At the high level, the proof works as follows: First, we show that with high probability, a certain square matrix $M_{k, (I_i')_{i\in J}}(\alpha_1,\dots,\alpha_n)$ has full rank. Next, we show that whenever the matrix has full rank,  $\mathsf{RS}_{n,k,q}(\alpha_1,\dots,\alpha_n)$ is list-recoverable up to relative distance $1-\epsilon$ as claimed by Theorem~\ref{list-recover-main}. More precisely, we prove the contrapositive, i.e., if $\mathsf{RS}_{n,k,q}(\alpha_1,\dots,\alpha_n)$ is not list-recoverable up to relative distance $1-\epsilon$, then there exists a nonzero vector $f$ in the kernel of $M_{k, (I_i')_{i\in J}}(\alpha_1,\dots,\alpha_n)$. The proof that such a vector $f$ exists uses Lemma~\ref{lem_subset}, Lemma \ref{lem:required}, and the same ideas in the proof of Lemma~\ref{main-lemma-2}.

Let $t=L+1$. Consider the following two conditions:
\begin{enumerate}[(1)]
\item $\alpha_i\neq \alpha_j$ for all distinct $i,j\in [n]$.
\item For all $J\subseteq [t]$ and $(I_i')_{i\in J}$ satisfying Conditions~(1)--(3) of Lemma~\ref{lem_subset}, we have
\[
\det(M_{k, (I_i')_{i\in J}}(\alpha_1,\dots,\alpha_n))\neq 0,
\] where $M_{k, (I_i')_{i\in J}}$ denotes the  $\binom{|J|}{2}k\times \binom{|J|}{2}k$ variable matrix\footnote{The number of rows of $M_{k, (I_i')_{i\in J}}$ is $\binom{|J|-1}{2}k + \sum_{\{i,j\}\in \binom{J}{2}} |I'_i\cap I'_j|$, which equals $\binom{|J|-1}{2}k+\sum_{i\in J} |I'_i|-|\bigcup_{i\in J} I'_i|=\binom{|J|-1}{2}k+\wt(I'_J)$ by Condition~(1) of Lemma~\ref{lem_subset}. This number further equals $\binom{|J|-1}{2}k+(|J|-1)k=\binom{|J|}{2}k$ by Condition~(3) of Lemma~\ref{lem_subset}.}
\[
M_{k, (I_i')_{i\in J}}=\left(
  \begin{array}{c}
    \mathcal{B}_{|J|}\otimes \mathcal{I}_k  \\\hline
  {\rm diag}\Big(V_k(I_i'\cap I_j'): \{i,j\}\in \binom{J}{2}\Big) \\
  \end{array}
\right).
\]
\end{enumerate}
The first condition is satisfied with probability at least $1-\binom{n}{2}/q$.
For the second condition, consider fixed $J\subseteq [t]$ and $(I_i')_{i\in J}$  satisfying Conditions~(1)--(3) of Lemma~\ref{lem_subset}.
We  know $\det(M_{k, (I_i')_{i\in J}})\neq 0$ by Theorem~\ref{theorem-weaker}.
 Also note that $\det(M_{k, (I_i')_{i\in J}})$ is a multivariate polynomial of total degree at most $(|J|-1)k(k-1)\leq Lk^2$.
So by Lemma \ref{zeros}, $\det(M_{k, (I_i')_{i\in J}})(\alpha_1,\dots,\alpha_n)\neq 0$ holds with probability at least
$1-Lk^2/q$ for  fixed $J$ and $(I_i')_{i\in J}$. The number of choices of $(J, (I_i')_{i\in J})$ is at most  $2^t(1+t+\binom{t}{2})^n$ by Lemma~\ref{lem_counting}. By the union bound, the two conditions are simultaneously satisfied with probability at least
\[
1-\binom{n}{2}/q-2^t\left(1+t+\binom{t}{2}\right)^n Lk^2/q=1-o(1)
\]
over the random choices of $\alpha_1,\dots,\alpha_n$, where we use the assumption that $q\geq 2^{c'(L+n\log L)}$ and $c'>0$ is a large enough constant.

Fix $\alpha_1,\dots,\alpha_n\in\F_q$ that satisfy the above two conditions.
By the first condition, the code $\mathsf{RS}_{n,k,q}(\alpha_1,\dots,\alpha_n)$ has rate exactly $k/n$.
It remains to show that $\mathsf{RS}_{n,k,q}(\alpha_1,\dots,\alpha_n)$ is  list-recoverable up to relative distance $1-\epsilon$ with input list size $\ell$ and output list size $L$.
Assume to the contrary that this does not hold. Then there exist $t$ distinct polynomials $f_1,\dots,f_t\in \F_q[x]$ of degree less than $k$ and sets $S_1,\dots,S_n$ of size $\ell$ such that for all $i\in [t]$, the cardinality of the set
\[
I_i:=\{j\in [n]:  f_i(\alpha_j)\in S_j\}
\]
is at least $\epsilon n$.

%Let $I_i^{(r)}=\{j\in [n]: f_i(\alpha_j)=y_j^{(r)}\}$ for $i\in [t]$ and $r\in [\ell]$, i.e., $I_i^{(r)}$ denotes the set of coordinates where $C(f_i):=(f_i(\alpha_1), \dots, f_i(\alpha_n))$ and $y^{(r)}$ agree.  So $I_i=\bigcup_{r=1}^\ell I_i^{(r)}$ for $i\in [t]$.
As $t=L+1\geq (1+\delta) \ell/\epsilon$ and
$k/n\leq \frac{\epsilon}{c \sqrt{\ell}( \frac{1+\delta}{\delta } )(\log(\frac{1}{\epsilon})+\log( \frac{1+\delta}{\delta })+1)}$,
we have
\[
\mathrm{wt}_\ell(I_{[t]})\stackrel{\eqref{eq_weightell}}{\geq } t \epsilon n-\ell n \geq \frac{\delta}{1+\delta} \cdot t\epsilon n\geq  \left(c\sqrt{\ell}\left(\log\left(\frac{1}{\epsilon}\right)+\log\left( \frac{1+\delta}{\delta }\right)+1\right)\right)\cdot tk.
\]
By Lemma~\ref{lem_subset},  there exist $J\subseteq [t]$ and   $(I_i')_{i\in J}$
such that $|J|\geq 2$, $I_i'\subseteq I_i$ for $i\in J$,  and Conditions~(1)--(4) of Lemma~\ref{lem_subset} are satisfied, where we let $c_{i,j}=f_i(\alpha_j)\in S_j$ for $i\in [t]$ and $j\in I_t$.

Let $f=(f_i-f_j: \{i,j\}\in \binom{J}{2}, i<j)\in \F_q^{\binom{|J|}{2}k}$, as defined in the statement of Lemma \ref{lem:required}. As $|J|\geq 2$ and $f_1,\dots, f_t$ are distinct, we have $f\neq 0$.

We claim that
\begin{equation}\label{eq_lineqns}
M_{k, (I_i')_{i\in J}}(\alpha_1,\dots,\alpha_n) \cdot f^T=0.
\end{equation}
To see this, first note that it follows by Lemma \ref{lem:required} that $(\mathcal{B}_{|J|}\otimes \mathcal{I}_k)\cdot f^T=0$.
%(cf. \eqref{first-half} in the proof of Lemma~\ref{main-lemma-2}).
Now consider a row $v$ of the submatrix
\[
{\rm diag}\Big(V_k(I_i'\cap I_j'): \{i,j\}\in \binom{J}{2}\Big)(\alpha_1,\dots,\alpha_n),
\]
 of $M_{k, (I_i')_{i\in J}}(\alpha_1,\dots,\alpha_n)$, which corresponds to some $\{i,j\}\in \binom{J}{2}$ with $i<j$ and $s\in I_i'\cap I_j'$. By definition, we have $v \cdot f^T=(f_i-f_j)(\alpha_s)=f_i(\alpha_s)-f_j(\alpha_s)$, i.e., the row $v$ represents the linear constraint $f_i(\alpha_s)-f_j(\alpha_s)=0$.
By Condition~(4) of Lemma~\ref{lem_subset}, we have $f_i(\alpha_s)=c_{i,s}=c_{j,s}=f_j(\alpha_s)$.
So $v\cdot f^T=f_i(\alpha_s)-f_j(\alpha_s)=0$.
This proves \eqref{eq_lineqns}.

Finally, by \eqref{eq_lineqns}, we have $\det(M_{k, (I_i')_{i\in J}}(\alpha_1,\dots,\alpha_n))=0$. But this contradicts the choice of $\alpha_1,\dots,\alpha_n$.
\end{proof}

\subsection{Proof of Lemma~\ref{lem_subset}}\label{sec_comb}

We present the proof of Lemma~\ref{lem_subset} in this subsection.

Let $k,n,t,\ell\in\N^+$, $\epsilon\in (0,1]$, $\delta>0$, $I_1,\dots,I_t$,
$S_1,\dots, S_n$,
and $c_{i,j}$ for $i\in [t]$ and $j\in I_i$
be as in Lemma~\ref{lem_subset}.
In particular, we have $t\geq (1+\delta)\ell/\epsilon$,
 $|I_i|\geq \epsilon n$ for $i\in [t]$, and
\begin{equation}\label{eq_weightbound}
  \mathrm{wt}_\ell(I_{[t]})\geq \left(c\sqrt{\ell}\left(\log\left(\frac{1}{\epsilon}\right)+\log\left( \frac{1+\delta}{\delta }\right)+1\right)\right)\cdot tk.
\end{equation}
where $c>0$ is a large enough absolute constant.
We will choose $J\subseteq [t]$ to be a random set and  carefully construct the sets $I'_i$, $i\in J$, such that the conditions in Lemma~\ref{lem_subset} are satisfied.

For $j\in [n]$, let $U_j:=\{i\in [t]: j\in I_i\}$. By definition, we have
\begin{equation}\label{eq_weight_ell}
\mathrm{wt}_\ell(I_{[t]})=\sum_{j=1}^n \max\{|U_j|-\ell, 0\}.
\end{equation}
Assume for a moment that there exists an integer $K\in\N^+$ such that $\max\{|U_j|-\ell, 0\}$ equals either $K$ or zero for all $j\in [n]$. Then by \eqref{eq_weight_ell}, the number of $j\in [n]$ for which $\max\{|U_j|-\ell, 0\}=K$ holds (or equivalently, $|U_j|=K+\ell$ holds) is precisely $\mathrm{wt}(I_{[t]})/K$.
The next lemma extends this fact to the general case with only logarithmic loss.

\begin{lemma}\label{lem_card}
There exists an integer $K>0$ such that
 the number of $j\in [n]$ satisfying $|U_j|\geq K+\ell$
  is at least $\frac{\mathrm{wt}_\ell(I_{[t]})}{d_0 K (\log(\frac{1}{\epsilon})+\log( \frac{1+\delta}{\delta })+1)}$,
 where $d_0>0$ is some absolute constant.
\end{lemma}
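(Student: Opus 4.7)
The plan is to prove Lemma~\ref{lem_card} via a dyadic bucketing argument on the values $|S_j|-\ell$, then use pigeonhole to produce the desired $K$.

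First, I would partition the indices $\{j\in[n]:|S_j|>\ell\}$ into dyadic buckets
\[
B_i \;:=\; \{j\in [n]: 2^i \leq |S_j|-\ell < 2^{i+1}\}, \qquad i=0,1,2,\dots
\]
Each $j\in B_i$ contributes a value in $[2^i,2^{i+1})$ to $\mathrm{wt}_\ell(I_{[t]})=\sum_j \max\{|S_j|-\ell,0\}$, so
\[
\mathrm{wt}_\ell(I_{[t]}) \;\leq\; 2\sum_{i\geq 0} 2^i |B_i|.
\]

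Next I would apply pigeonhole over a suitable collection of at most $N$ buckets to produce an $i^*$ with $2^{i^*}|B_{i^*}|\geq \mathrm{wt}_\ell(I_{[t]})/(2N)$. Setting $K:=2^{i^*}$, every $j\in B_{i^*}$ satisfies $|S_j|\geq K+\ell$, hence
\[
\bigl|\{j\in[n]:|S_j|\geq K+\ell\}\bigr| \;\geq\; |B_{i^*}| \;\geq\; \frac{\mathrm{wt}_\ell(I_{[t]})}{2K\cdot N},
\]
which is exactly the desired inequality provided one takes $N = O(\log(1/\epsilon)+\log((1+\delta)/\delta)+1)$ and $c_0 = 2N/(\log(1/\epsilon)+\log((1+\delta)/\delta)+1)$.

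The main obstacle, which I expect to be the delicate part, is bounding $N$ by $O(\log(1/\epsilon)+\log((1+\delta)/\delta)+1)$. The naive bound $N\leq \lceil\log(t-\ell)\rceil +1$ is too weak because $t$ is only lower-bounded by $(1+\delta)\ell/\epsilon$ and may be much larger. My plan is to fix a threshold $i_0 = \lceil c_1(\log(1/\epsilon)+\log((1+\delta)/\delta)+1)\rceil$ and split $\mathrm{wt}_\ell(I_{[t]}) = W_{\mathrm{low}}+W_{\mathrm{high}}$ where $W_{\mathrm{low}}$ collects buckets with $i\leq i_0$. To show $W_{\mathrm{high}}\leq \mathrm{wt}_\ell(I_{[t]})/2$, I would use the volume bound $\sum_j |S_j| = \sum_{i=1}^t |I_i| \leq tn$ combined with a Markov-type estimate: any $j$ contributing to $W_{\mathrm{high}}$ has $|S_j|> 2^{i_0}+\ell$, so the number of such $j$ is at most $tn/(2^{i_0}+\ell)$, and their total contribution to $\mathrm{wt}_\ell$ is at most $t\cdot tn/(2^{i_0}+\ell)=nt^2/(2^{i_0}+\ell)$. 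Comparing against the available lower bound $\mathrm{wt}_\ell(I_{[t]})\geq \tfrac{\delta}{1+\delta}\cdot t\epsilon n$ (the elementary bound already exploited in the proof of Theorem~\ref{list-recover-main}) then forces the inequality $2^{i_0}\gtrsim t(1+\delta)/(\delta\epsilon)$; a direct calculation shows this is achieved for an $i_0$ of the claimed size. Finally, pigeonhole is applied only over $B_0,\dots,B_{i_0}$, yielding $N=i_0+1$ and the claimed bound. Any residual $\log \ell$ overhead that might appear along the way should be absorbable into the $\sqrt{\ell}$ factor in the lemma's weight hypothesis when this lemma is invoked within the proof of Lemma~\ref{lem_subset}.
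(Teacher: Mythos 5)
Your overall framework (dyadic buckets $B_i=\{j: 2^i\le |S_j|-\ell<2^{i+1}\}$ plus pigeonhole over a window of $N$ buckets, then $K=2^{i^*}$) is the same as the paper's, but the step you yourself flag as delicate --- bounding $N$ --- is where your argument breaks. You propose to truncate the bucket range from \emph{above} at $i_0$ and show that the high buckets contribute at most half of $\mathrm{wt}_\ell(I_{[t]})$. By your own calculation this forces $2^{i_0}\gtrsim t(1+\delta)/(\delta\epsilon)$, i.e.\ $i_0\gtrsim \log t+\log(1/\epsilon)+\log\frac{1+\delta}{\delta}$. The claim that ``a direct calculation shows this is achieved for an $i_0$ of the claimed size'' is false: the claimed size $O(\log(1/\epsilon)+\log\frac{1+\delta}{\delta}+1)$ contains no $\log t$ term, and $t$ is only bounded \emph{below} in the lemma's hypotheses, so $\log t$ cannot be absorbed. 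Even at the minimal value $t=(1+\delta)\ell/\epsilon$ you would pick up an extra additive $\log\ell$ in $N$, which weakens the lemma's conclusion and hence the rate in Theorem~\ref{list-recover-main}; your suggestion of absorbing it into the $\sqrt{\ell}$ factor would require restating Lemma~\ref{lem_card}, Lemma~\ref{lem_subset}, and the main theorem, and in any case does not handle general $t$.

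The fix is to truncate from \emph{below} instead. The upper end of the bucket range is automatically $\lceil\log t\rceil-1$, since $|S_j|-\ell\le t$ forces $B_i=\emptyset$ for $i\ge\lceil\log t\rceil$. For the lower end, set $d=\lfloor\log(\frac{\delta}{1+\delta}\cdot t\epsilon/2)\rfloor$; every $j$ in a bucket $B_i$ with $i<d$ contributes $|S_j|-\ell<2^{i+1}\le 2^d$, and there are at most $n$ indices in total, so the low buckets contribute at most $n2^d\le \frac{\delta}{1+\delta}\cdot t\epsilon n/2\le \mathrm{wt}_\ell(I_{[t]})/2$, using the lower bound $\mathrm{wt}_\ell(I_{[t]})\ge\frac{\delta}{1+\delta}t\epsilon n$ that you already identified. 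Pigeonhole is then applied over the window $[\max\{d,0\},\lceil\log t\rceil-1]$, whose width is $\lceil\log t\rceil-d\approx\log t-\log t+\log(1/\epsilon)+\log\frac{1+\delta}{\delta}+1$ --- the $\log t$ terms cancel, which is exactly the point your version misses. The rest of your argument (defining $K=2^{i_0}$ and using $|S_j|-\ell<2K$ on $B_{i_0}$ to pass from weight to cardinality at the cost of a factor $2$) then goes through as in the paper.
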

\begin{proof}
By \eqref{eq_weightell} and the fact that $t\geq (1+\delta)\ell /\epsilon$,  we have
\begin{equation}\label{eq_weight_lb}
\mathrm{wt}_\ell(I_{[t]})\geq t \epsilon n-\ell n\geq \frac{\delta}{1+\delta} \cdot t\epsilon n.
\end{equation}
For $i=0,1,2,\dots$, let $B_i=\{j\in [n]: 2^i\leq |U_j|-\ell <2^{i+1}\}$.
Then
\[
\mathrm{wt}_\ell(I_{[t]})=\sum_{j=1}^n \max\{|U_j|-\ell, 0\}=\sum_{i=0}^{\lceil\log t\rceil-1} \sum_{j\in B_i}  (|U_j|-\ell).
\]
Let $d=\lfloor \log ( \frac{\delta}{1+\delta} \cdot t\epsilon/2) \rfloor$. Note that $d$ could be negative (possibly $\frac{\delta}{1+\delta} \cdot t\epsilon/2\in(0,1)$).
Then
\[
\sum_{0\leq i<d} \sum_{j\in B_i}  (|U_j|-\ell)\leq n 2^d \leq  \frac{\delta}{1+\delta} \cdot t\epsilon n/2  \stackrel{\eqref{eq_weight_lb}}{\leq} \mathrm{wt}_\ell(I_{[t]})/2.
\]
Therefore
\begin{equation}\label{eq_truncate}
\sum_{i=\max\{d,0\}}^{\lceil\log t\rceil-1} \sum_{j\in B_i} (|U_j|-\ell) \geq \mathrm{wt}_\ell(I_{[t]})/2.
\end{equation}

Let $\Delta=\lceil\log t\rceil-\max\{d,0\}= O(\log(\frac{1}{\epsilon})+\log( \frac{1+\delta}{\delta })+1)$.
By \eqref{eq_truncate}, there exists an integer $i_0$ such that $\max\{d,0\}\leq i_0\leq \lceil\log t\rceil-1$ and
\begin{equation}\label{eq_i0}
 \sum_{j\in B_{i_0}}  (|U_j|-\ell)\geq \frac{\mathrm{wt}_\ell(I_{[t]})}{2\Delta}.
\end{equation}
  Choose $K=2^{i_0}$. Then $K\leq |U_j|-\ell<2K$ for all $j\in B_{i_0}$.
The upper bound $  |U_j|-\ell <2K$  for  $j\in B_{i_0}$, together with \eqref{eq_i0},    implies $|B_{i_0}|\geq  \frac{\mathrm{wt}_\ell(I_{[t]})}{4K \Delta}$.
So the number of $j\in [n]$ satisfying  $|U_j|\geq K+\ell$ is at least $\frac{\mathrm{wt}_\ell(I_{[t]})}{4K\Delta}=\Omega\left(\frac{\mathrm{wt}_\ell(I_{[t]})}{K(\log(\frac{1}{\epsilon})+\log( \frac{1+\delta}{\delta })+1)}\right)$.
\end{proof}

Fix $K$ satisfying Lemma~\ref{lem_card}. Define
\[
A:=\{j\in [n] : |U_j|\geq K+\ell\}\subseteq [n].
\]
By the choice of $K$ and Lemma~\ref{lem_card}, we have
\begin{equation}\label{eq_boundA}
|A|\geq \frac{\mathrm{wt}_\ell(I_{[t]})}{d_0 K(\log(\frac{1}{\epsilon})+\log( \frac{1+\delta}{\delta })+1)}.
\end{equation}

%For $j\in [n]$ and $r\in [\ell]$,  let  $S_j^{(r)}:=\{i\in [t]: j\in I_i^{(r)}\}$. So $S_j=\bigcup_{r=1}^\ell S_j^{(r)}$ for $j\in [n]$.
%Note that $S_j\ind{1},\dots,S_j\ind{\ell}$ are pairwise disjoint for all $j$: if $i\in S_j\ind{r}\cap S_j\ind{r'}$, we must have $j\in I_i\ind{r}\cap I_i\ind{r'}$, but we have assumed that $I_i\ind{1},\dots,I_i\ind{\ell}$ are pairwise disjoint for all $i$.

We also need the following technical lemma.

  \begin{lemma}
    For real numbers $p\in(0,\frac{1}{2}]$ and $x\ge 0$, we have $(1-p)^x(1+px)\le 1-\frac{1}{8}p^2x^2$ if $x\le \frac{1}{p}$, and $(1-p)^x(1+px)\le \frac{2}{e}$ otherwise.
  \label{lem:calc}
  \end{lemma}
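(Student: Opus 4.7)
\medskip
\noindent\textbf{Proof plan for Lemma~\ref{lem:calc}.} The plan is to treat the two regimes by first establishing an appropriate monotonicity / convexity fact about $f(x) := (1-p)^x(1+px)$ and then reducing each bound to an elementary one-variable inequality.

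First I would show that $f$ is monotonically nonincreasing on $[0,\infty)$ for $p \in (0,\tfrac12]$. Indeed,
\[
\frac{f'(x)}{f(x)} = \log(1-p) + \frac{p}{1+px} \le \log(1-p) + p \le 0,
\]
since $\log(1-s) \le -s$ for $s \in [0,1)$. Monotonicity handles the second case essentially immediately: for $x \ge 1/p$,
\[
f(x) \le f(1/p) = 2(1-p)^{1/p} \le 2e^{-1},
\]
where the final step uses $(1-p)^{1/p} \le e^{-1}$ (equivalent to $\log(1-p) \le -p$).

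For the first case $x \le 1/p$, I would substitute $y := px \in [0,1]$ and use $1-p \le e^{-p}$ (so $(1-p)^x \le e^{-px}$) to obtain
\[
(1-p)^x(1+px) \le e^{-y}(1+y).
\]
It then suffices to show the single-variable inequality
\[
g(y) := e^{-y}(1+y) - 1 + \tfrac{1}{8}y^2 \le 0, \qquad y \in [0,1].
\]
A direct differentiation gives $g'(y) = y\bigl(\tfrac14 - e^{-y}\bigr)$, so $g$ is nonincreasing on $[0,\log 4]$ and nondecreasing on $[\log 4,1]$; combined with $g(0)=0$ and $g(1) = 2/e - 7/8 < 0$, this yields $g \le 0$ on $[0,1]$, hence $(1-p)^x(1+px) \le 1 - \tfrac18 p^2 x^2$.

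The only nontrivial step is the elementary inequality $g(y)\le 0$; everything else is bookkeeping. I do not anticipate any real obstacle, since both bounds reduce cleanly to well-known one-variable estimates after the substitution $y=px$.
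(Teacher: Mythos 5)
Your proposal is correct. For the second case ($x\ge 1/p$) you do essentially what the paper does: show $f$ is nonincreasing and evaluate at $x=1/p$, using $(1-p)^{1/p}\le e^{-1}$. For the first case your route differs from the paper's. The paper integrates the derivative bound $f'(y)\le -p^2y(1-p)^y\le -p^2y(1-p)^x$ over $[0,x]$ to get $f(x)\le 1-\tfrac12 p^2x^2(1-p)^x$, and then invokes $p\le\tfrac12$ through the lower bound $(1-p)^{1/p}\ge\tfrac14$ to convert the factor $(1-p)^x$ into the constant $\tfrac18$. You instead bound $(1-p)^x\le e^{-px}$ from above and reduce everything to the single-variable inequality $e^{-y}(1+y)\le 1-\tfrac18 y^2$ on $y=px\in[0,1]$, which you verify by elementary calculus (note that $\log 4>1$, so $g$ is in fact nonincreasing on all of $[0,1]$ and the endpoint check at $y=1$ is not even needed). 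Your version is arguably cleaner and slightly stronger, since it never uses $p\le\tfrac12$ in the first case; the paper's version has the advantage of producing the intermediate bound $1-\tfrac12 p^2x^2(1-p)^x$ directly, with the constant $\tfrac18$ emerging from a one-line estimate rather than a separate one-variable lemma. Both are complete and correct.
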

  \begin{proof}
    Fix $p$ and let $f(y) = (1-p)^y(1+py)$.
    For $y\geq 0$, the derivative $f'(y)$ satisfies
    \begin{equation} \label{eq:calc}
    \begin{aligned}
    f'(y) & =  (1-p)^y(\ln(1-p)\cdot (1+py) + p)\\ & \le (1-p)^y(-p(1+py)+p)\\
    & = -p^2y(1-p)^y.
    \end{aligned}
    \end{equation}
    So $f'(y) \le -p^2y(1-p)^x$ for $y\in [0,x]$.
    As $f(0)=1$, we have $f(x)\le 1+\int_{0}^{x} -p^2y(1-p)^x dy = 1-\frac{1}{2}p^2x^2(1-p)^x$.
    If $x\le \frac{1}{p}$, we have $(1-p)^x\ge (1-p)^{1/p}\ge 1/4$, as $(1-p)^{1/p}$ is decreasing with $p$, and thus is minimized at $p=\frac{1}{2}$.
    Hence, we have $f(x)\le 1-\frac{1}{8}p^2x^2$.
    By \eqref{eq:calc}, $f(y)$ is decreasing and thus maximized at $y=1/p$ on the interval $[1/p,\infty)$, so for $x\ge 1/p$, we have $f(x)\le f(1/p) =(1-p)^{1/p}(1+1) \le \frac{2}{e}$.
  \end{proof}

The above lemma is used to prove the following statement.

 \begin{lemma}\label{lem_sampling}
   Choose a random subset $J\subseteq [t]$ by independently including each $i\in[t]$ in $J$ with probability $p=\min\{\sqrt{\ell}/(2K),\frac{1}{2}\}$.
    Let
    \[
    A_J:=\{j\in A: \text{ there exist distinct } i, i'\in J \text{ such that } j\in I_i\cap I_{i'} \text{ and } c_{i,j}=c_{i',j}\}.
    \]
    Then $\ex[|A_J|]=\Omega(|A|)$.
 \label{}
 \end{lemma}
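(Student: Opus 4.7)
The plan is to prove the lemma by a first-moment calculation: by linearity of expectation, it suffices to show that for every fixed $j \in A$, we have $\Pr[j \in A_J] \geq c'$ for some absolute constant $c' > 0$. Fix such $j$, write $a_r := |S_j^{(r)}|$ for $r \in [\ell]$, and observe that the $S_j^{(r)}$ are pairwise disjoint (by the initial reduction) with $\bigcup_r S_j^{(r)} = S_j$ of size at least $K + \ell$, so $\sum_r a_r \geq K + \ell$. The random variables $X_r := |S_j^{(r)} \cap J|$ are then mutually independent with $X_r \sim \mathrm{Bin}(a_r, p)$, and the event $\{j \in A_J\}$ is exactly $\{\exists r,\ X_r \geq 2\}$.

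The heart of the argument is a per-bin lower bound on $\Pr[X_r \geq 2]$. I would write $\Pr[X_r \leq 1] = (1-p)^{a_r-1}\bigl(1 + (a_r - 1)p\bigr)$ and apply Lemma~\ref{lem:calc} with $x = a_r - 1$, which yields
\[ \Pr[X_r \geq 2] \geq \begin{cases} \tfrac{1}{8}\, p^2 (a_r - 1)^2 & \text{if } a_r \leq 1/p + 1, \\ 1 - 2/e & \text{otherwise.} \end{cases} \]
If some $a_r > 1/p + 1$, we are immediately done by picking that coordinate, so the real work is in the regime where $a_r \leq 1/p + 1$ for every $r$.

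In that regime, set $m := |\{r : a_r \geq 1\}| \leq \ell$. Then $\sum_{a_r \geq 1}(a_r - 1) = \sum_r a_r - m \geq K$, and Cauchy--Schwarz gives $\sum_r (a_r - 1)^2 \mathbf{1}[a_r \geq 1] \geq K^2/m \geq K^2/\ell$. Summing the per-bin bound yields $\sum_r \Pr[X_r \geq 2] \geq p^2 K^2 / (8\ell)$. When $p = \sqrt{\ell}/(2K)$, this evaluates to exactly $1/32$. When instead $p = 1/2$ (so $K \leq \sqrt{\ell}$ and $a_r \leq 3$ for every $r$), this particular sum bound becomes weak; however, $\sum_r a_r > \ell$ forces some $a_r \in \{2,3\}$, and for that single coordinate Lemma~\ref{lem:calc} already gives $\Pr[X_r \geq 2] \geq 1/32$. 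In either regime, $\sum_r \Pr[X_r \geq 2]$ is bounded below by an absolute constant. Independence of the $X_r$ together with the standard inequality $1 - \prod_r (1 - q_r) \geq 1 - \exp(-\sum_r q_r)$ then converts this into $\Pr[j \in A_J] \geq 1 - e^{-1/32} = \Omega(1)$, completing the first-moment bound.

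The main obstacle is that the proof must smoothly cover both regimes of $p$; the specific choice $p = \min\{\sqrt{\ell}/(2K),\, 1/2\}$ is calibrated so that the expected number of elements of $S_j$ landing in $J$ is $p \cdot |S_j| = \Omega(\sqrt{\ell})$, which is precisely the birthday-paradox threshold for producing a double hit in one of the $\ell$ bins $S_j^{(r)}$. The cleanest bookkeeping, I believe, is the case split above, with Lemma~\ref{lem:calc} and Cauchy--Schwarz handling the quantitative estimates.
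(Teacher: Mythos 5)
Your proof is correct, and it follows the same overall strategy as the paper's: linearity of expectation reduces the lemma to showing $\Pr[j\in A_J]=\Omega(1)$ for each fixed $j\in A$, the pairwise disjointness of the bins $S_j^{(r)}$ gives independence of the $X_r$, and Lemma~\ref{lem:calc} supplies the key estimate on $(1-p)^x(1+px)$. The one place you genuinely diverge is in how the $\ell$ bins are aggregated. The paper keeps the exact expression $1-(1-p)^{K'}\prod_r(1+t_rp)$ and bounds the product by identifying the extremal configuration of $(t_1,\dots,t_\ell)$ with fixed sum $K'$ (an integer rearrangement argument when $K'\le\ell$, concavity of $\log(1+xp)$ when $K'>\ell$), invoking Lemma~\ref{lem:calc} only on the resulting single term. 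You instead apply Lemma~\ref{lem:calc} bin-by-bin to get $\Pr[X_r\ge 2]\ge\tfrac18 p^2(a_r-1)^2$, aggregate with Cauchy--Schwarz to obtain $\sum_r\Pr[X_r\ge2]\ge p^2K^2/(8\ell)=1/32$, and finish with $1-\prod_r(1-q_r)\ge 1-e^{-\sum_r q_r}$. Your case split (on whether some $a_r>1/p+1$, and on which branch of the $\min$ defines $p$) covers the same territory as the paper's split on $K'\le\ell$ versus $K'>\ell$, and all the quantitative steps check out (in particular the handling of $p=1/2$, where $\sum_r a_r> \ell$ forces a bin with $a_r\in\{2,3\}$). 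Neither route is stronger than the other; yours trades the extremal rearrangement for a slightly more mechanical Cauchy--Schwarz computation.
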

 \begin{proof}
  Fix $j\in A$.
  It suffices to prove that $\Pr[j\in A_J]\ge c$ for some constant $c$. Suppose $S_j=\{c_1,\dots,c_\ell\}$. For $r=1,\dots,\ell$, let
  \[
   U_j^{(r)}:=\{i\in U_j: c_{i,j}=c_r\}
   \quad\text{and}\quad t_r\coloneqq \max\{|U_j\ind{r}|-1,0\}.
   \]
   Then $U_j=\bigcup_{r=1}^\ell U_j^{(r)}$.
   Let $K'\coloneqq \sum_{r=1}^{\ell} t_j$.
   Since $j\in A$ and $U_j=\bigcup_{r=1}^\ell U_j^{(r)}$, we have
   \[
   K'=\sum_{r=1}^{\ell} t_j \ge |U_j|-\ell \ge K.
   \]
   For all $r=1,\dots,\ell$, we have
   \begin{align}
    \Pr[|U_j\ind{r}\cap J|\le 1]
    &= (1-p)^{t_r}(1+t_rp)
   \label{}
   \end{align}
   This is because the probability is exactly $(1-p)^{t_r+1} + (t_r+1)p(1-p)^{t_r}$ when $t_r\ge 1$, and is exactly 1 when $t_r=0$.

   By definition, $U_j\ind{1},\dots,U_j\ind{\ell}$ are disjoint. So the events that $|U_j\ind{r}\cap J|\ge 2$ are independent.
   Thus, the probability that $j\in A_J$ is
   \begin{align}
    \Pr[j\in A_J]
    =\Pr\left[\bigvee_{r=1}^\ell |U_j\ind{r}\cap J|\geq 2\right]
    = 1 - \prod_{r=1}^{\ell} \Pr[|U_j\ind{r}\cap J|\le 1]
    &= 1 - \prod_{r=1}^{\ell} (1-p)^{t_r}(1+t_rp)\\
    &= 1 - (1-p)^{K'}\prod_{r=1}^{\ell} (1+t_rp).
   \end{align}

   We now bound this below by a constant.
   First consider the case $K'\le \ell$.
   Then $ \ell \ge K$ and $p\ge \frac{1}{2\sqrt{K}}\ge \frac{1}{2\sqrt{K'}}$.
   When $x_1,\dots,x_\ell$ are constrained to be nonnegative integers with a fixed sum, if there exists $x_i \le x_j-2$, we can strictly increase the product $f(x_1,\dots,x_\ell)\coloneqq \prod_{r=1}^{\ell} (1+x_rp)$ by replacing $x_i$ with $x_i+1$ and $x_j$ with $x_j-1$.
   Thus, the maximum value of $f(x_1,\dots,x_\ell)$ occurs when $K'$ of the $x_i$ are 1 and the rest are zero.
   Hence, we have
   \begin{align}
    \Pr[j\in A_J]
    \ge 1 - (1-p)^{K'}(1+p)^{K'}
    = 1-(1-p^2)^{K'}
    \ge 1 - (1-\frac{1}{4K'})^{K'}
    \ge 1 - e^{-1/4},
   \end{align}
   as desired.

   Now suppose $K' > \ell$.
   Note that $p = \min\{\sqrt{\ell}/(2K),\frac{1}{2}\} \ge \sqrt{\ell}/(2K')$.
   As $\log (1+xp)$ is concave for nonnegative real numbers $x$, we have that $f(x_1,\dots,x_\ell)=\prod_{r=1}^{\ell} (1+x_rp)$ subject to $x_1+\cdots+x_\ell=K'$ is maximized when all the $x_i$'s are equal.
   Hence,
   \begin{align}
    \Pr[j\in A_J]
    = 1 - (1-p)^{K'}\prod_{r=1}^{\ell} (1+t_rp)
    &\ge 1 - (1-p)^{K'}\left( 1 + \frac{K'}{\ell}p \right)^{\ell}  \nonumber\\
    &= 1 - \left(\left((1-p)^{K'/\ell}\right)\left( 1 + \frac{K'}{\ell}p \right)\right)^{\ell}
   \end{align}
   as desired.
   If $x\coloneqq K'/\ell\le 1/p$, then, by Lemma~\ref{lem:calc}, we have
   \begin{align}
    \Pr[j\in A_J]
    \ge 1 - \left(1-p^2\frac{(K')^2}{8\ell^2}\right)^\ell
    \ge 1-\left( 1-\frac{1}{32\ell} \right)^{\ell} \ge 1 - e^{-1/32}
   \end{align}
   where the second inequality uses the fact $p\ge \sqrt{\ell}/(2K')$.
   If $x\ge 1/p$, then by Lemma~\ref{lem:calc}, we have $\Pr[j\in A_J] \ge 1-(2/e)^\ell \ge 1-2/e$.
   In all cases,  $\Pr[j\in A_J]$ is bounded below by a constant, as desired.
 \end{proof}

\begin{corollary}\label{cor_existJ}
There exists $J\subseteq [t]$ of cardinality at most $d_1 \sqrt{\ell}t/K$ such that the cardinality of the set $A_J$ as defined in Lemma~\ref{lem_sampling} is at least $d_2 |A|$, where $d_1,d_2>0$ are absolute constants.
\end{corollary}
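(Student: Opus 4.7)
The plan is to derive Corollary~\ref{cor_existJ} from Lemma~\ref{lem_sampling} by a standard probabilistic / reverse Markov argument, applied to the same random subset $J\subseteq[t]$ that Lemma~\ref{lem_sampling} considers. Since Lemma~\ref{lem_sampling} already gives $\mathbb{E}[|A_J|]=\Omega(|A|)$, the only additional ingredient we need is to control the size of $J$ and argue that both a small $|J|$ and a large $|A_J|$ can be simultaneously achieved with positive probability.

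First, I would note that $\mathbb{E}[|J|]=pt\le \sqrt{\ell}\,t/(2K)$, by the definition $p=\min\{\sqrt{\ell}/(2K),1/2\}$. For any constant $C>1$, Markov's inequality then gives $\Pr[|J|>C\cdot pt]\le 1/C$, so $|J|\le C\sqrt{\ell}\,t/(2K)$ holds with probability at least $1-1/C$.

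Next, I would apply a reverse Markov argument to $|A_J|$: write $c=\Omega(1)$ for the constant from Lemma~\ref{lem_sampling}, so that $\mathbb{E}[|A_J|]\ge c|A|$. Since $0\le |A_J|\le |A|$ deterministically, writing $\mathbb{E}[|A_J|]=\mathbb{E}[|A_J|\,\mathbf{1}_{|A_J|<c|A|/2}]+\mathbb{E}[|A_J|\,\mathbf{1}_{|A_J|\ge c|A|/2}]\le (c|A|/2)+|A|\cdot\Pr[|A_J|\ge c|A|/2]$ yields $\Pr[|A_J|\ge c|A|/2]\ge c/2$.

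Finally, I would choose the Markov threshold $C=4/c$ (so $1/C=c/4$) and take a union bound: the event $|J|\le (2/c)\sqrt{\ell}\,t/K$ fails with probability at most $c/4$, while the event $|A_J|\ge c|A|/2$ holds with probability at least $c/2$, so both events occur simultaneously with probability at least $c/2-c/4=c/4>0$. Consequently some fixed $J\subseteq[t]$ witnesses both, and setting $c_1=2/c$ and $c_2=c/2$ gives the corollary. There is no serious obstacle here; the only thing one has to be careful about is that the Markov constant for $|J|$ is large enough that $1/C$ does not swamp the constant lower bound on $\Pr[|A_J|\ge c|A|/2]$, which is why $C$ is chosen as a function of $c$.
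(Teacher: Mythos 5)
Your proposal is correct and follows essentially the same route as the paper: the paper also applies reverse Markov to $|A_J|$ (using $|A_J|\le|A|$ and the expectation bound from Lemma~\ref{lem_sampling}), Markov's inequality to $|J|$ with $\ex[|J|]=pt$, and a union bound with the Markov threshold tuned so the failure probability does not swamp the constant success probability. Nothing further is needed.
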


\begin{proof}
Choose a random set $J\subseteq [t]$ as in  Lemma~\ref{lem_sampling}. Then $\ex[|A_J|]=\Omega(|A|)$ by Lemma~\ref{lem_sampling}.
As $|A_J|\leq |A|$, we have $\Pr[|A_J|\geq d_2|A|]\geq d_3$ for some absolute constants $d_2,d_3>0$.

Observe that by Lemma~\ref{lem_sampling} and the linearity of expectation we have $\ex[|J|]=pt=O(\sqrt{\ell}t/K)$. Moreover, by Markov's inequality we have $\Pr[|J|>d_1 \sqrt{\ell}t/K]\leq d_3/2$ for some sufficiently large constant $d_1>0$. By the union bound, we know the conditions $|J|\leq d_1 \sqrt{\ell}t/K$ and $|A_J|\geq d_2|A|$ are simultaneously satisfied with probability at least $d_3/2>0$, so there exists $J\subseteq [t]$ that satisfies these two conditions.
\end{proof}

Fix $J\subseteq [t]$ as in Corollary~\ref{cor_existJ}, so that $|J|\leq d_1\sqrt{\ell}t/K$ and $|A_J|\geq d_2 |A|$.
As the constant $c$ in \eqref{eq_weightbound} is large enough, we may assume $c\geq d_0 d_1/d_2$, where $d_0$ is as in \eqref{eq_boundA}.
Then we have
\begin{equation}\label{eq_AJ_bound}
|A_J|\geq d_2|A|\stackrel{\eqref{eq_boundA}}{\geq} d_2  \cdot \frac{\mathrm{wt}_\ell(I_{[t]})}{d_0 K (\log(\frac{1}{\epsilon})+\log( \frac{1+\delta}{\delta })+1)}  \stackrel{\eqref{eq_weightbound}}{\geq} (d_1\sqrt{\ell}t/K)k > (|J|-1)k.
\end{equation}

For each $j\in A_J$, choose $c_j\in S_j$ and $T_j\subseteq J$ such that $|T_j|=2$ and $c_{i,j}=c_j$ for $i\in T_j$. This is possible by the definition of $A_J$ in Lemma~\ref{lem_sampling}. For $j\in [n]\setminus A_J$, let $T_j=\emptyset$.  So for $j\in [n]$, we have
\[
|T_j|=\begin{cases} 2 & j\in A_J,\\
0 & j\not\in A_J.
\end{cases}
\]
Let $I_i'=\{j\in [n]: i\in T_j\}\subseteq I_i$ for $i\in J$. We have
\[
\mathrm{wt}(I_J')=\sum_{j=1}^n \max\{|T_j|-1, 0\}=|A_J|\stackrel{\eqref{eq_AJ_bound}}{\geq} (|J|-1)k.
\]
Moreover, the fact $|T_j|\leq 2$ for $j\in [n]$ implies that $I'_{i}\cap I'_{i'}\cap I'_{i''}=\emptyset$ for distinct  $i,i',i''\in J$, by noting that $T_j=\{i\in J: j\in I'_i\}$.

For $i\in [t]$ and $j\in I'_i$, we have $c_{i,j}=c_j$ by the definition of $I'_i$. In particular, $c_{i,j}=c_{i',j}$ for $i,i'\in [t]$ and $j\in I'_i\cap I'_{i'}$.

Finally, we have $|J|\geq 2$ as $|A_J|\geq d_2 |A|>0$. To summarize, we have proved the following weaker version of Lemma~\ref{lem_subset}.
\begin{lemma}\label{lem_weak_subset}
Under the assumption and notations of Lemma~\ref{lem_subset}, there exist $J\subseteq [t]$ and a collection $(I_i')_{i\in J}$ of subsets of $[n]$ such that $|J|\geq 2$, $I_i'\subseteq I_i$ for $i\in J$, and the following conditions are satisfied:
\begin{enumerate}[(1)]
\item $I'_{i}\cap I'_{i'}\cap I'_{i''}=\emptyset$ for distinct  $i,i',i''\in J$.
\item $\mathrm{wt}(I'_J)\geq (|J|-1)k$.
\item $c_{i,j}=c_{i',j}$ for $i,i'\in [t]$ and $j\in I'_i\cap I'_{i'}$.
\end{enumerate}
\end{lemma}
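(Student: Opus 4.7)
The plan is to derive Lemma~\ref{lem_weak_subset} through a two-stage probabilistic argument: first identify a large set of ``heavy'' coordinates via dyadic bucketing, then use random sampling of $[t]$ to reduce high-multiplicity coordinates to multiplicity exactly $2$ without losing too much in the weight budget. The starting point is the identity
\[
\mathrm{wt}_\ell(I_{[t]}) = \sum_{j=1}^n \max\{|S_j|-\ell,0\},
\]
where $S_j = \{i \in [t] : j \in I_i\}$. The lower bound on $\mathrm{wt}_\ell(I_{[t]})$ and the fact that $t \geq (1+\delta)\ell/\epsilon$ together give $\mathrm{wt}_\ell(I_{[t]}) \geq \frac{\delta}{1+\delta} t \epsilon n$, which lets me truncate the small buckets $\{j : |S_j|-\ell < 2^d\}$ for $d \approx \log(t\epsilon)$ while still retaining half the total weight. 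Averaging over the remaining $\Theta(\log(1/\epsilon) + \log((1+\delta)/\delta) + 1)$ dyadic bands $B_i = \{j : 2^i \leq |S_j|-\ell < 2^{i+1}\}$ yields a single level $K = 2^{i_0}$ and a set $A \subseteq [n]$ with $|S_j| \geq K+\ell$ for $j \in A$ and $|A| = \Omega\!\left(\mathrm{wt}_\ell(I_{[t]}) / (K(\log(1/\epsilon)+\log((1+\delta)/\delta)+1))\right)$.

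Next I would sample $J \subseteq [t]$ by including each index independently with probability $p = \min\{\sqrt{\ell}/(2K), 1/2\}$. Fixing $j \in A$, I want to lower-bound the probability of the event ``there exists $r \in [\ell]$ with $|S_j^{(r)} \cap J| \geq 2$,'' since this is what enables picking a pair $T_j$ lying inside a single $S_j^{(r)}$ (and hence satisfying Condition~(3)). Since I may assume the $I_i^{(r)}$ are disjoint across $r$ for fixed $i$, the sets $S_j^{(1)},\dots,S_j^{(\ell)}$ partition $S_j$ and the events are independent across $r$. Writing $t_r = \max\{|S_j^{(r)}|-1, 0\}$, the relevant probability factors as $1 - \prod_r (1-p)^{t_r}(1+t_r p)$. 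Then an elementary calculus lemma---roughly, $(1-p)^x(1+px) \leq 1 - \tfrac{1}{8}p^2 x^2$ for $x \leq 1/p$ and $(1-p)^x(1+px) \leq 2/e$ otherwise---combined with concavity of $\log(1+xp)$ in $x$ (to maximize the product over distributions of $t_r$ with fixed sum $K' = \sum_r t_r \geq K$) gives a universal constant lower bound on $\Pr[j \in A_J]$.

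From here, linearity of expectation yields $\mathbb{E}[|A_J|] = \Omega(|A|)$ and $\mathbb{E}[|J|] = pt = O(\sqrt{\ell}\, t/K)$, so by Markov and a union bound there is a realization with $|J| \leq c_1 \sqrt{\ell}\, t/K$ and $|A_J| \geq c_2 |A|$. Plugging the dyadic estimate on $|A|$ back in, and using that the constant $c$ in the weight hypothesis is chosen large enough, gives $|A_J| \geq (|J|-1)k$. Finally, for each $j \in A_J$ I pick any two-element set $T_j \subseteq S_j^{(r_j)} \cap J$ for some $r_j \in [\ell]$, set $T_j = \emptyset$ otherwise, and define $I_i' = \{j : i \in T_j\}$. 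Condition~(1) is immediate from $|T_j| \leq 2$; Condition~(2) follows from $\mathrm{wt}(I_J') = \sum_j \max\{|T_j|-1,0\} = |A_J|$; and Condition~(3) is built in by the choice of $r_j$.

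The main technical obstacle is the second step: showing $\Pr[j \in A_J] = \Omega(1)$ uniformly over the possible partitions $(t_1,\dots,t_\ell)$ of $|S_j|$. Naively one might want a single $p$-threshold, but the optimal behavior flips depending on whether $t_r p$ is below or above $1$ (i.e., whether each $S_j^{(r)}$ is ``undersampled'' or ``oversampled''), and the calculation must succeed in both regimes and in the mixed regime. This is precisely where the $\sqrt{\ell}$ factor in the rate (rather than $\ell$) is paid, and where the Lund--Potukuchi style convexity analysis is needed rather than a blunt application of Chernoff. Once this constant lower bound is in hand, the remainder of the argument is a routine two-line union bound plus bookkeeping.
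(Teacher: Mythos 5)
Your proposal is correct and follows essentially the same route as the paper's proof: the dyadic bucketing to extract $K$ and $A$ (the paper's Lemma~\ref{lem_card}), the independent sampling with $p=\min\{\sqrt{\ell}/(2K),1/2\}$ analyzed via the same elementary inequality on $(1-p)^x(1+px)$ and concavity of $\log(1+xp)$ (Lemmas~\ref{lem:calc} and~\ref{lem_sampling}), the Markov-plus-union-bound extraction of a good $J$ (Corollary~\ref{cor_existJ}), and the final construction of $T_j$ and $I_i'$. No substantive differences from the paper's argument.
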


Now we are ready to prove Lemma~\ref{lem_subset},
which has already been partially proven as Lemma~\ref{lem_weak_subset}.
The only part left is to strengthen  $\mathrm{wt}(I'_J)\geq (|J|-1)k$ to $\mathrm{wt}(I'_J)=(|J|-1)k$ and to meet the condition that $\mathrm{wt}(I'_{J'})\leq (|J'|-1)k$ for all nonempty $J'\subseteq J$.

\begin{proof}[Proof of Lemma~\ref{lem_subset}]
Choose the sets $J$ and $(I_i')_{i\in J}$ satisfying Lemma~\ref{lem_weak_subset} such that $|J|\geq 2$ is minimized.
Note that removing one element from $I'_i$ for some $i\in J$ preserves (1) and (3) of Lemma~\ref{lem_weak_subset} and reduces $\mathrm{wt}(I'_J)$ by at most one.
Removing elements from the sets in $(I_i')_{i\in J}$ one by one until $\mathrm{wt}(I'_J)=(|J|-1)k$ holds. Then $J$ and $(I_i')_{i\in J}$ satisfy (1), (3), and (4) of Lemma~\ref{lem_subset}.

The minimality of $|J|$ guarantees that $\mathrm{wt}(I'_{J'})\leq (|J'|-1)k$ for all nonempty $J'\subseteq J$. (When $|J'|=1$, this holds since $\mathrm{wt}(I'_{J'})=0$.) So  $J$ and $(I_i')_{i\in J}$  satisfy (2) of Lemma~\ref{lem_subset} as well.
\end{proof}

\section{Towards Conjecture~\ref{conjecture-0}: A Hypergraph Nash-Williams--Tutte Conjecture}
\label{sec:conjectures}

Recall that Conjecture~\ref{conjecture-0} states that RS codes of rate $R$ are list-decodable from radius $1 - R - \eps$ with list size at most $\lceil\frac{1 - R - \eps}{\eps}\rceil$.
As discussed in the introduction, it was shown in \cite[Theorem 5.8]{shangguan2019combinatorial0} that resolving Conjecture~\ref{conjecture} (about the non-singularity of intersection matrices) would resolve Conjecture~\ref{conjecture-0} (about list-decoding).

Our approach above was to show in Theorem~\ref{theorem-weaker} that intersection matrices are nonsingular under the additional assumption that $I_i \cap I_j \cap I_\ell = \emptyset$ for all $1\le i < j < \ell\le t$, and then use that to conclude our main result about list-recovery.
However, to prove Conjecture~\ref{conjecture-0} in full, we need to remove the additional three-wise intersection assumption.
In this section, we describe an approach that could potentially resolve Conjecture~\ref{conjecture-0} in full.
To do so, we conjecture a hypergraph generalization of the Nash-Williams--Tutte theorem~\cite{nash1961edge,tutte1961problem} that may be of independent interest, and prove that this conjecture implies Conjecture~\ref{conjecture-0}.
Indeed, one might hope that such a generalization would be useful for Conjecture~\ref{conjecture-0}, because the Nash-Williams--Tutte theorem has been instrumental in proving several of the results in this paper, including Theorem~\ref{thm:main-LD}, Theorem~\ref{thm:main}, Theorem~\ref{theorem-weaker}, and Theorem~\ref{perfect-hash-matrix}.
Along the way, we give a second proof of our main list decoding theorem, Theorem~\ref{thm:main-LD}, by combining this approach with known results on hypergraph packings \cite{CS07, CCV09}
In Section~\ref{subsec:intro-hyp}, we describe our conjecture, Conjecture~\ref{conj:weak}, and in Section~\ref{sec:complete}, we prove that it implies the optimal list decoding conjecture, Conjecture~\ref{conjecture-0}.

\subsection{A Hypergraph Nash-Williams--Tutte Conjecture}
\label{subsec:intro-hyp}
In this section we state a conjecture (Conjecture~\ref{conj:weak} below), and prove that it implies our main goal Conjecture~\ref{conjecture-0}.
We are not able to prove Conjecture~\ref{conj:weak}, but give some evidence for it, pointing out that special cases and relaxations are known to be true.

Throughout, we use $t$ as the number of vertices in a (hyper)graph.
This variable corresponds to the same $t$ used in $t$-wise intersection matrices.
A (multi)graph $G$ is called {\it $k$-partition-connected} if every partition $\mathcal{P}$ of the vertex set has at least $k(|\mathcal{P}|-1)$ edges crossing the partition.
By the Nash-Williams--Tutte theorem, this is equivalent to the graph having $k$ edge-disjoint spanning trees.
The parameter $k$ here is the same $k$ used as the dimension of the Reed--Solomon code and the same $k$ used for the Vandermonde matrix degrees in the intersection matrices.

We say a hypergraph $H$ is \emph{$k$-weakly-partition-connected}\footnote{There is also a notion of ``$k$-partition-connected'' for hypergraphs which uses $\min\{\mathcal{P}(e)-1,1\}$ in the sum. In other words, a hypergraph is $k$-partition-connected if any partition $\mathcal{P}$ has at least $k(|\mathcal{P}|-1)$ crossing edges. This notion admits a Nash-Williams--Tutte type theorem: any $k$-partition-connected hypergraph can be decomposed into $k$ 1-partition-connected hypergraphs \cite{frank2003decomposing}} if, for every partition $\mathcal{P}$ of the vertices of $H$, we have
  \begin{align}
    \label{eq:weak-part}
    \sum_{e\in E(H)}^{} (\mathcal{P}(e)-1)\ge k(|\mathcal{P}|-1),
  \end{align}
  where $\mathcal{P}(e)$ is the
number of parts of $\mathcal{P}$ that $e$ intersects.
For example, any $k$-partition-connected graph is $k$-weakly-partition-connected as a hypergraph: \black{for graph edges $e$, the summand $\mathcal{P}(e)-1$ is 1 if edge $e$ crosses the partition and 0 otherwise}. As another example, $k$ copies of a hyperedge covering all $t$ vertices of $H$ is also $k$-weakly partition-connected.

An \emph{edge-labeled} graph is a graph $G$ where each edge is assigned a label from some set $E$.
Let $H$ be a hypergraph.
A \emph{tree-assignment} of $H$ is an edge-labeled graph $G$ obtained by replacing each edge $e$ of $H$ with a tree $F_e$ of $|e|-1$ edges on the vertices of $e$.
\black{Note that, in general, a hypergraph has many possible tree assignments.}
Furthermore, each edge of the graph $F_e$ is labeled with $e$.
The graph $G$ is thus the union of the graphs $F_e$ for $e\in H$.

  A \emph{$k$-tree-decomposition} of a graph on $k(t-1)$ edges is a partition of its edges into $k$ edge-disjoint spanning trees $T_0,\dots,T_{k-1}$.
  We say \emph{tree-decomposition} when $k$ is understood.
  \black{In an edge-labeled graph $F$ with edge-labels from some set $E$, let $v^F\in\mathbb{N}^{E}$ be the vector counting the edge-labels in $T$.
  Specifically, $v^F_e$ is the number of edges of label $e$ in $F$.}
  For a tree-decomposition $(T_0,\cdots,T_{k-1})$ of an edge-labeled graph, define its \emph{signature} $v^{(T_0,\dots,T_{k-1})}$ by
  \begin{align}
    \label{eq:signature}
    v^{(T_0,\dots,T_{k-1})}\coloneqq \sum_{i=0}^{k-1} i\cdot v^{T_i}.
  \end{align}
  An edge-labeled graph $G$ on $t$ vertices is called \emph{$k$-distinguishable} if $G$ has $k(t-1)$ edges and there exists a tree-decomposition $T_0,\dots,T_{k-1}$ of $G$ with a unique signature.
  That is, for any tree-decomposition $T_0',\dots,T_{k-1}'$ with the same signature $v^{(T_0',\dots,T_{k-1}')}= v^{(T_0,\dots,T_{k-1})}$, we have $T_i'=T_i$ for $i=0,\dots,k-1$.

\usetikzlibrary{calc, positioning, fit}
\tikzstyle{vertex} = [fill,shape=circle,node distance=50pt,scale=0.7,font=\tiny,label={[font=\tiny]}]
\tikzstyle{edge} = [opacity=1,fill opacity=0,line cap=round, line width=2pt]
\tikzstyle{hyperedge} = [fill,opacity=0.5,fill opacity=0,line cap=round, line join=round, line width=17pt]
\tikzstyle{he}=[draw, rounded corners, line width=4pt, inner sep=0pt]        % he = hyper edge

\pgfdeclarelayer{background}
\pgfsetlayers{background,main}

With these definitions, we can now conjecture a hypergraph version of the Nash-Williams--Tutte theorem.

\begin{conjecture}\label{conj:weak}
  Let $t$ and $k$ be positive integers.
  Every $k$-weakly-partition-connected hypergraph $H$ on $t$ vertices has a $k$-distinguishable tree-assignment.
\end{conjecture}

The key result of this section is that our optimal list decoding of Reed--Solomon codes conjecture follows from our hypergraph Nash-Williams--Tutte conjecture.
This connection is proved in Section~\ref{sec:complete} below.
\begin{theorem}
  \label{thm:complete-2}
  Conjecture~\ref{conj:weak} implies Conjecture~\ref{conjecture} and thus Conjecture~\ref{conjecture-0}.
\end{theorem}

One should convince themselves that Conjecture~\ref{conj:weak} (if true) is a generalization of the Nash-Williams--Tutte theorem.
Indeed, when $H$ is a (non-hyper) graph, the conjecture boils down to the Nash-Williams--Tutte theorem.
If $H$ is a graph on $k(t-1)$ edges, then there is only one tree-assignment $G$ of $H$, namely $H$ itself with each edge labeled by itself.
All edges have distinct edge-labels, so for any tree-decomposition $T_0,\dots,T_{k-1}$, the signature $v^{(T_0,\dots,T_{k-1})}$ is unique.
Thus, showing $G$ is distinguishable, is equivalent to showing $G$ has $k$ edge-disjoint spanning trees, which follows from the Nash-Williams--Tutte theorem.
In the correspondence between hypergraph partitions and intersection matrices, the special case when $H$ is a graph corresponds to Theorem~\ref{theorem-weaker}.

To give more intuition when $H$ is not a graph, we give the following example.
\begin{example}\label{ex:1}
  Let $t=4$, and $k=2$.
  Below, $H$ is a $2$-weakly-partition-connected hypergraph.
  We take a tree assignment of $H$ to obtain an edge-labeled graph $G$ on 6 edges, where each edge is labeled by its color.
  The tree-decomposition $T_0\cup T_1$ demonstrates that $G$ is 2-distinguishable:
  We have $v^{T_0} = (\green{2},\orange{1},\magenta{0})$ and $v^{T_1}=(\green{0},\orange{1},\magenta{2})$ so the signature is $v^{(T_0,T_1)} = 0\cdot v^{T_0}+1\cdot v^{T_1} = (\green{0},\orange{1},\magenta{2})$.
  One can check that any other tree decomposition $(T_0',T_1')$ of $G$ has a different signature  $v^{(T_0',T_1')} \neq (\green{0},\orange{1},\magenta{2})$.
  Thus, $G$ is $2$-distinguishable.
  Hence, $H$ is a 2-weakly partition-connected hypergraph with a 2-distinguishable tree-assignment, as predicted by Conjecture~\ref{conj:weak}.
  \begin{center}
    \begin{tikzpicture}[scale=0.4]
      \begin{scope}
        \node[] at (0,5) {$H$};
        \node[vertex,label=above:\(\)] (1) at (0,0) {};
        \node[vertex,label=above:\(\)] (2) at (90:3) {};
        \node[vertex,label=above:\(\)] (3) at (210:3) {};
        \node[vertex,label=above:\(\)] (4) at (330:3) {};

        \begin{pgfonlayer}{background}
        \draw[hyperedge,color=green] (1.center) -- (2.center) -- (3.center)--(1.center);
        \draw[hyperedge,color=orange] (1.center) -- (4.center) -- (3.center)--(1.center);
        \draw[hyperedge,color=magenta] (1.center) -- (2.center) -- (4.center)--(1.center);
        \end{pgfonlayer}
      \end{scope}
      \begin{scope}[xshift=10cm]
        \node[] at (0,5) {$G$};
        \coordinate (1) at (0,0) {};
        \coordinate (2) at (90:3) {};
        \coordinate (3) at (210:3) {};
        \coordinate (4) at (330:3) {};
        \node[vertex,label=above:\(\)] at (1) {};
        \node[vertex,label=above:\(\)] at (2) {};
        \node[vertex,label=above:\(\)] at (3) {};
        \node[vertex,label=above:\(\)] at (4) {};

        \begin{pgfonlayer}{background}
        \draw[edge,color=green] ($ (2) + (180:0.15)$) -- ($ (1) + (150:0.15)$) -- ($ (3) + (90:0.15)$) ;
        \draw[edge,color=orange] ($ (3) + (0:0.15)$) -- ($ (1) + (270:0.15)$) -- ($ (4) + (180:0.15)$) ;
        \draw[edge,color=magenta] ($ (2) + (0:0.15)$) -- ($ (1) + (30:0.15)$) -- ($ (4) + (90:0.15)$) ;
        \end{pgfonlayer}
      \end{scope}
      \begin{scope}[xshift=5cm]
        \node[] at (0,1) {$\longrightarrow$};
      \end{scope}
      \begin{scope}[xshift=15cm]
        \node[] at (0,1) {$=$};
      \end{scope}
      \begin{scope}[xshift=24cm]
        \node[] at (0,1) {$+$};
      \end{scope}
      \begin{scope}[xshift=20cm]
        \node[] at (0,5) {$T_0$};
        \coordinate (1) at (0,0) {};
        \coordinate (2) at (90:3) {};
        \coordinate (3) at (210:3) {};
        \coordinate (4) at (330:3) {};
        \node[vertex,label=above:\(\)] at (1) {};
        \node[vertex,label=above:\(\)] at (2) {};
        \node[vertex,label=above:\(\)] at (3) {};
        \node[vertex,label=above:\(\)] at (4) {};

        \begin{pgfonlayer}{background}
        \draw[edge,color=green] (2)--(1)--(3);
        \draw[edge,color=orange] (1)--(4);
        \end{pgfonlayer}
      \end{scope}
      \begin{scope}[xshift=28cm]
        \node[] at (0,5) {$T_1$};
        \coordinate (1) at (0,0) {};
        \coordinate (2) at (90:3) {};
        \coordinate (3) at (210:3) {};
        \coordinate (4) at (330:3) {};
        \node[vertex,label=above:\(\)] at (1) {};
        \node[vertex,label=above:\(\)] at (2) {};
        \node[vertex,label=above:\(\)] at (3) {};
        \node[vertex,label=above:\(\)] at (4) {};

        \begin{pgfonlayer}{background}
        \draw[edge,color=magenta] (2)--(1)--(4);
        \draw[edge,color=orange] (1)--(3);
        \end{pgfonlayer}
      \end{scope}
    \end{tikzpicture}
  \end{center}
\end{example}

As evidence towards Conjecture~\ref{conj:weak}, we point out that the ``easy part'' of the conjecture follows from the Nash-Williams--Tutte theorem.
One can check that, even in general hypergraphs, every tree-assignment of a $k$-weakly-partition-connected hypergraph gives a $k$-partition connected graph.
Thus, any tree-assignment graph can be partitioned into $k$ spanning trees by the Nash-Williams--Tutte theorem, establishing the ``existence'' part of the graph $G$ being $k$-distinguishable.
Thus, the hard part of Conjecture~\ref{conj:weak} is the ``uniqueness'' part, finding a spanning tree partition with a unique signature.

As further evidence towards Conjecture~\ref{conj:weak}, we point out that a relaxation of Conjecture~\ref{conj:weak} is true by assuming a larger weak-partition-connectivity \cite{CS07, CCV09}.
\begin{theorem}[Follows from \cite{CS07,CCV09}]
  \label{thm:conj-weak}
  There exists an absolute constant $C$ such that the following holds.
  Let $t$ and $k$ be positive integers.
  Every $C(\log t)k$-weakly-partition-connected hypergraph $H$ on $t$ vertices has a tree-assignment with a $k$-distinguishable subgraph.
\end{theorem}

The results in \cite{CS07,CCV09} look slightly different from Theorem~\ref{thm:conj-weak} and we briefly describe the connection here.
Both of the works prove that every $C(\log t)k$-weakly partition-connected hypergraph has $k$ hyperedge-disjoint connected subhypergraphs:\footnote{A hypergraph is connected if, for every two vertices $v$ and $v'$, there is a path $v=v_0,v_1,\dots,v_\ell=v'$ such that, for all $i$, $v_{i-1},v_i$ share a hyperedge.}
the first \cite{CS07} proves an equivalent statement about bipartite Steiner tree packings, and the second \cite{CCV09} proves a more general result about disjoint bases of polymatroids and also improves the constant $C$ in front of the log factor over \cite{CS07}.
Furthermore, it is not difficult to show that, for any hypergraph $H$ with $k$ hyperedge-disjoint connected subhypergraphs $H_0,\dots,H_{k-1}$, any tree assignment of $H$ has a $k$-distinguishable subgraph: briefly, the $k$-distinguishable subgraph $G$ is the union of spanning trees $T_0,\dots,T_{k-1}$ of the tree assignments of $H_0,\dots,H_{k-1}$ that are implied by the tree assignments of $H$, and the spanning trees $T_0,\dots,T_{k-1}$ give the $k$ edge-disjoint spanning trees of $G$ that certify $k$-distinguishability.
Hence, any $C(\log t)k$-weakly-partition-connected hypergraph has $k$ hyperedge-disjoint connected subhypergraphs and thus is $k$-distinguishable, which is Theorem~\ref{thm:conj-weak}.

While it is known that the log factor in the results of \cite{CS07,CCV09} cannot be removed \cite{BT03}\footnote{\cite{BT03} shows there exist $\Omega(\log n)$-weakly-partition-connected hypergraphs without two edge-disjoint connected subhypergraphs.}, one can still hope that Conjecture~\ref{conj:weak} is true.
The results in \cite{CS07, CCV09} are used to obtain $k$-distinguishability by taking a tree packing where edges of the same label are always in the same tree.
Keeping the same edge labels in the same trees is not necessary in general to obtain distinguishability.
Indeed Example~\ref{ex:1} illustrates that a tree-assignment can be $k$-distinguishable even when the original hypergraph does not have a partition into $k$ connected subhypergraphs.
Thus, we hope that the log factor can still be saved by splitting edge labels of the tree-assignment graph across the $k$ spanning trees.
As we pointed out earlier, the existence of a tree-packing in the tree-assignment graph follows from the Nash-Williams--Tutte theorem, so the hard part of the conjecture is not finding the tree-packing, but finding a signature-unique one.

We point out that Theorem~\ref{thm:conj-weak} can be used to obtain a second proof of Theorem~\ref{thm:main-LD} by following the proof of Theorem~\ref{thm:complete-2}, which connects hypergraph partitions to intersection matrices and then applies a polynomial method to obtain list decodable Reed--Solomon codes.
The extra $O(\log t)$ factor in the weak-partition-connectivity of Theorem~\ref{thm:conj-weak} appears as an $O(\log\frac{1}{\varepsilon})$ factor loss in the rate of the Reed--Solomon code.
Since one proof of Theorem~\ref{thm:main-LD} is already given, and the key ideas for this second proof are covered throughout the rest of the paper, we omit the details of this second proof of Theorem~\ref{thm:main-LD}.

In addition to implying optimal list decoding of Reed-Solomon codes (Theorem~\ref{thm:conj-weak}), Conjecture~\ref{conj:weak} may be of independent interest as a candidate hypergraph Nash-Williams--Tutte generalization.
On one hand, a hypergraph Nash-Williams--Tutte generalization is known for \emph{partition-connectivity} \cite{frank2003decomposing}.
A hypergraph is $k$-partition-connected if any partition $\mathcal{P}$ has at least $k(|\mathcal{P}|-1)$ crossing edges.
Frank, Kir\'{a}ly, and Kriesell \cite{frank2003decomposing} showed that every $k$-partition-connected hypergraph has $k$ edge-disjoint 1-partition-connected subhypergraphs.
On the other hand, Nash-Williams--Tutte generalizations for \emph{weak-partition-connectivity} seem to be less studied, and Conjecture~\ref{conj:weak} provides a plausible generalization for weak-partition-connectivity.
%\ryl{Added this paragraph. Hopefully ``seemed to be less studied'' is soft/vague enough, since I don't actually know what is known for weak-partition connectivity}

\subsection{Proof of Theorem~\ref{thm:complete-2}}\label{sec:complete}

In this section we prove Theorem~\ref{thm:complete-2}.
To do so, we show the first implication, establishing the connection between hypergraph partitions and intersection matrices outlined in Section~\ref{subsec:intro-hyp}.
The second implication was proved in \cite{shangguan2019combinatorial0}.
At a high level, the first implication of Theorem~\ref{thm:complete-2} holds because the uniqueness of the signature of $k$ edge-disjoint spanning trees implies the uniqueness of a monomial in the determinant expansion of an intersection matrix.
Because such a monomial is unique, it does not cancel with any other terms in the determinant expansion, implying that the determinant is nonzero.
We now give the details.

We first derive a sufficient condition for a hypergraph being $k$-weakly-partition-connected.

\begin{lemma}
  \label{lem:rand-2}
  Let $H$ be hypergraph on the vertex set $[t]$ where for all $J\subsetneq [t]$,
  \begin{align}
    \label{eq:weak-part-2}
    \sum_{e\in E(H)}^{} \max(0,|e\cap J|-1) \ &\le \   k(|J|-1) \nonumber\\
    \text{~and~} \sum_{e\in E(H)}^{} (|e|-1) \ &\ge \   k(t-1).
  \end{align}
  Then $H$ is $k$-weakly-partition-connected.
\end{lemma}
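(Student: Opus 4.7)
The plan is to derive Lemma \ref{lem:rand-2} from a single combinatorial identity relating $\mathcal{P}(e)-1$ to $|e|-1$ and the within-part contributions $\max(0,|e\cap V_i|-1)$, and then simply plug in the two hypotheses.

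First I would fix an arbitrary partition $\mathcal{P}=\{V_1,\ldots,V_s\}$ of $[t]$ and, for each hyperedge $e\in E(H)$, establish the pointwise identity
\begin{equation*}
\mathcal{P}(e)-1 \;=\; (|e|-1) \;-\; \sum_{i=1}^{s}\max\bigl(0,|e\cap V_i|-1\bigr).
\end{equation*}
This is immediate: if $e$ meets exactly the parts $V_{i_1},\ldots,V_{i_{\mathcal{P}(e)}}$, then $|e|=\sum_{j}|e\cap V_{i_j}|$, and the parts it misses contribute $0$ to the sum on the right. The identity is really a book-keeping statement of the form ``total size minus one = (parts met minus one) + (within-part excess)''.

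Next, I would sum over $e\in E(H)$ and swap the order of summation, obtaining
\begin{equation*}
\sum_{e\in E(H)}(\mathcal{P}(e)-1) \;=\; \sum_{e\in E(H)}(|e|-1) \;-\; \sum_{i=1}^{s}\ \sum_{e\in E(H)}\max\bigl(0,|e\cap V_i|-1\bigr).
\end{equation*}
The case $|\mathcal{P}|=1$ is trivial (the right side of \eqref{eq:weak-part} is $0$), so I may assume $s\ge 2$, in which case every $V_i$ is a proper subset of $[t]$. Applying the lower bound $\sum_e(|e|-1)\ge Ck(t-1)$ to the first term and the upper bound $\sum_e \max(0,|e\cap V_i|-1)\le Ck(|V_i|-1)$ to each summand of the second term, I get
\begin{equation*}
\sum_{e\in E(H)}(\mathcal{P}(e)-1) \;\ge\; Ck(t-1) - \sum_{i=1}^{s}Ck(|V_i|-1) \;=\; Ck(t-1) - Ck(t-s) \;=\; Ck(s-1),
\end{equation*}
which is precisely the required partition inequality \eqref{eq:weak-part}.

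There is no real obstacle here; the lemma is essentially a reformulation. The only thing worth checking carefully is that the hypothesis on proper subsets $J\subsetneq[t]$ is enough to handle every part $V_i$ individually (it is, since $s\ge 2$ forces $V_i\subsetneq[t]$), and that the boundary case $|V_i|=1$ is consistent with the upper bound $Ck(|V_i|-1)=0$ (it is, since then $|e\cap V_i|\le 1$ makes every summand zero). So the proof reduces to the identity plus a direct substitution.
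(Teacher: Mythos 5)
Your proof is correct and follows essentially the same route as the paper: the pointwise identity $|e|=\mathcal{P}(e)+\sum_{i}\max(0,|e\cap V_i|-1)$, summed over edges and combined with the two hypotheses. Your explicit handling of the trivial case $|\mathcal{P}|=1$ (so that each $V_i$ is a proper subset and the hypothesis applies) is a small point the paper glosses over, but otherwise the arguments coincide.
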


\begin{proof}
According to \eqref{eq:weak-part} it suffices to show that for any partition $\mathcal{P}$ of the vertices of $H$, $\sum_{e\in E(H)}(\mathcal{P}(e)-1)\ge k(|\mathcal{P}|-1)$.
To see this, assume that $\mathcal{P}=\{V_1,\ldots,V_s\}$. Then $\sum_{i=1}^s |V_i|=t$, and for each $e\in E(H)$, $|e|=\sum_{i=1}^s |e\cap V_i|$. By the last equality, it is not hard to check that
\begin{equation*}
    |e|=\mathcal{P}(e)+\sum_{i=1}^s \max\{0,|e\cap V_i|-1\}.
\end{equation*}
It follows that
\begin{equation*}
\begin{aligned}
    \sum_{e\in E(H)} (\mathcal{P}(e)-1)=&\sum_{e\in E(H)} \Big(|e|-\sum_{i=1}^s \max\{0,|e\cap V_i|-1\}-1\Big)\\
    &= \sum_{e\in E(H)} (|e|-1)-\sum_{i=1}^s\sum_{e\in E(H)} \max\{0,|e\cap V_i|-1\}\\
    &\ge k(t-1)-\sum_{i=1}^s k(|V_i|-1)=k(s-1),
\end{aligned}
\end{equation*}
where the last inequality follows from \eqref{eq:weak-part-2}.
\end{proof}

\begin{comment}
\begin{proof}
  Suppose \eqref{eq:weak-part-2} holds.
  Let $V_1,\dots,V_s$ be the vertex subsets of some partition $\mathcal{P}$, where $s=|\mathcal{P}|$.
  For all $e$, we have
  \begin{align}
    \mathcal{P}(e) - 1
    \ &= \  \mathcal{P}(e) - 1 + |e| - \sum_{i=1}^{s} |e\cap V_i| \nonumber\\
    \ &= \   |e| - 1 - \sum_{i=1}^{s} \left(|e\cap V_i| - \one[e\cap V_i\neq \emptyset]\right) \nonumber\\
    \ &= \  |e| - 1 - \sum_{i=1}^{s} \max(|e\cap V_i|-1,0).
  \label{}
  \end{align}
  The first equality used that $V_1,\dots,V_s$ partition the vertices of $H$, and the second equality used the definition of $\mathcal{P}(e)$.
  Thus,
  \begin{align}
    \sum_{e\in E(H)}^{} (\mathcal{P}(e)-1)
    \ &= \ \sum_{e\in E(H)} \left( |e| - 1 -  \sum_{i=1}^{s} \max(|e\cap V_i|-1,0)\right) \nonumber\\
    \ &= \ k(t-1) - \sum_{e\in E(H)}  \sum_{i=1}^{s} \max(|e\cap V_i|-1,0) \nonumber\\
    \ &= \ k(t-1) -   \sum_{i=1}^{s} \sum_{e\in E(H)} \max(|e\cap V_i|-1,0) \nonumber\\
    \ &\ge \ k(t-1) -   \sum_{i=1}^{s} k(|V_i|-1) \nonumber\\
    \ &= \   k(|\mathcal{P}|-1).
  \label{}
  \end{align}
  In the inequality, we applied the condition for $J=V_i$.
\end{proof}
\end{comment}

We now present the proof of Theorem~\ref{thm:complete-2}.

\begin{proof}[Proof of Theorem~\ref{thm:complete-2}]
Assume Conjecture~\ref{conj:weak} is true.
Let $I_1,\dots,I_t\subseteq[n]$ be subsets satisfying the conditions of Conjecture~\ref{conjecture}.
For all $i\in [n]$, let $e_i=\{j\in [t]: i\in I_j\}$.
Let $H$ be a (multi)hypergraph with vertex set $[t]$ and edge set $E(H)=\{e_i:i\in[n]\}$.

\begin{claim}\label{lem:new-1}
For all subsets $J\subseteq[t]$, we have $\sum_{e\in E(H)}^{} \max(0,|e\cap J|-1)=\wt(I_j:j\in J)$.
\end{claim}

\begin{proof}
  We have
  \begin{equation*}
      \begin{aligned}
          \sum_{i\in [n]} \max\{0,|e_i\cap J|-1\}=\sum_{i\in [n]} |e_i\cap J|-\abs{\bigcup_{j\in J} I_j}=\sum_{j\in J} |I_j|-\abs{\bigcup_{j\in J} I_j}=\wt(I_J),
      \end{aligned}
  \end{equation*}
  where the first equality follows from the fact $\cup_{j\in J} I_j=\{i\in [n]: |e_i\cap J|\ge 1\}$, the second equality follows from easy double-counting, and the last equality follows from \eqref{definition-weight}.
\end{proof}

The following is an easy consequence of Lemmas~\ref{lem:rand-2} and Claim~\ref{lem:new-1}.

\begin{claim}\label{lem:new-2}
  $H$ is $k$-weakly-partitioned-connected.
\end{claim}
\begin{proof}
  By Claim~\ref{lem:new-1} and the setup of Conjecture~\ref{conjecture}, it is clear that for each $J\subsetneq [t]$
\begin{align}
  \sum_{e\in E(H)}^{} \max\{0,|e\cap J|-1\}  \ = \  \wt(I_J) \  &\le \   k(|J|-1) \nonumber\\
  \text{~and~} \sum_{e\in E(H)}^{} \max\{0,|e|-1\} \ = \ \wt(I_{[t]}) \ &\ge \   k(t-1).
\end{align}
It follows by Lemma~\ref{lem:rand-2} that $H$ is $k$-weakly-partition-connected.
\end{proof}

By our assumption that Conjecture~\ref{conj:weak} is true, there exists a tree-assignment $G$ of $H$ that is $k$-distinguishable. Note that by definition $G$ has $k(t-1)$ edges, which are labeled by the hyperedges of $H$. Let $S\subseteq [n]$ be the subset so that $\{e_s:s\in S\}$ forms the set of those labels. Then, an edge $\{j,j'\}$ of $G$ has label $e_s$ for some $s\in S$ if and only if $s\in I_j\cap I_{j'}$.

Recall that
\begin{align}
  M_{k,(I_1,\dots,I_t)} =
  \left(
  \begin{array}{c}
    \mathcal{B}_t\otimes \mathcal{I}_k  \\\hline
  {\rm diag}\Big(V_k(I_j\cap I_{j'}): \{j,j'\}\in \binom{[t]}{2}\Big) \\
  \end{array}
\right),
\label{}
\end{align}
and that the $\binom{t}{2}k$ columns are labeled by the pairs $\{j,j'\}\in\binom{[t]}{2}$, according to the $\binom{t}{2}$ Vandermonde matrices in the bottom diagonal.
Our goal is to show that $M_{k,(I_1,\dots,I_t)}$ is nonsingular. As in the proof of Theorem~\ref{theorem-weaker}, it suffices to show the nonsingularity of

\begin{align}
M'_{k,(I_1,\ldots,I_t)}
\coloneqq
\left(
  \begin{array}{c}
  \mathcal{I}_k\otimes \mathcal{B}_t  \\\hline
  \big(\ma{C}_i:0\le i\le k-1\big) \\
  \end{array}
\right),
\end{align}
where $\ma{C}_i={\rm diag}\Big(V_k^{(i)}(I_j\cap I_{j'}): \{j,j'\}\in \binom{[t]}{2}\Big)$ and $V_k^{(i)}(I_j\cap I_{j'})$ is the $(i+1)$-th column of $V_k(I_j\cap I_{j'})$. Note that $M'_{k,(I_1,\ldots,I_t)}$ is obtained by permuting the columns of $M_{k,(I_1,\ldots,I_t)}$, with the column labels remaining unchanged.

The following fact is easy to verify by definition.

\begin{fact}\label{fact:M'-1}
Each row of $\big(\ma{C}_i:0\le i\le k-1\big)$ has exactly $k$ nonzero entries, which has the form $x_s^0, x_s^1,\dots,x_s^{k-1}$ for some $s\in S$. Moreover, there is $\{j,j'\}\in\binom{[t]}{2}$ so that $s\in I_j\cap I_{j'}$, and those $k$ nonzero entries are all contained in $\{j,j'\}$-labeled columns.
\end{fact}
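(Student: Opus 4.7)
The plan is a direct unpacking of the definitions of $\mathcal{C}_i$ and $V_k^{(i)}$, with no substantive calculation required. First I will fix indexing conventions: rows of the horizontally concatenated matrix $\big(\mathcal{C}_i:0\le i\le k-1\big)$ are indexed by pairs $(\{j,j'\},s)$ with $\{j,j'\}\in\binom{[t]}{2}$ and $s\in I_j\cap I_{j'}$, ordered lexicographically on $\{j,j'\}$ as already fixed in Definition~\ref{Def-matrix-general}; columns are indexed by pairs $(i,\{j'',j'''\})$ with $0\le i\le k-1$ and $\{j'',j'''\}\in\binom{[t]}{2}$, and the column $(i,\{j'',j'''\})$ inherits the label $\{j'',j'''\}$ from the common labelling of columns of $M_{k,(I_1,\ldots,I_t)}$ and its column permutation $M'_{k,(I_1,\ldots,I_t)}$.

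Next I will examine a single row $(\{j,j'\},s)$ block by block. Because each $\mathcal{C}_i$ is block diagonal, the row is identically zero outside the block associated with $\{j,j'\}$; that block consists of the single column $V_k^{(i)}(I_j\cap I_{j'})$, whose $s$-th entry is $x_s^i$ by the definition of the Vandermonde matrix in \eqref{vandermonde}. Hence within each $\mathcal{C}_i$ the row has a unique nonzero entry, namely $x_s^i$, sitting in the column labelled $\{j,j'\}$.

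Concatenating across $i=0,1,\ldots,k-1$ then yields exactly $k$ nonzero entries in the row, taking the claimed form $x_s^0,x_s^1,\ldots,x_s^{k-1}$, and all lying in the $k$ columns labelled $\{j,j'\}$. The relation $s\in I_j\cap I_{j'}$ is built into the row index, and $s\in S$ follows because $S$ is defined as the set of indices of variables occurring in $M'_{k,(I_1,\ldots,I_t)}$. I do not anticipate any real obstacle here: the argument is pure bookkeeping against Definition~\ref{Def-matrix-general}, and the only care needed is to keep the tensor convention $\mathcal{I}_k\otimes \mathcal{B}_t$ (as opposed to $\mathcal{B}_t\otimes \mathcal{I}_k$) straight when matching columns with their labels.
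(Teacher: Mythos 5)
Your proposal is correct and matches the paper's (implicit) argument: the paper simply asserts that Fact~\ref{fact:M'-1} is ``easy to verify by definition,'' and your block-by-block bookkeeping---one nonzero entry $x_s^i$ per $\mathcal{C}_i$ in the column labelled $\{j,j'\}$, concatenated over $i=0,\dots,k-1$---is exactly that verification. The only caveat is notational: in Section~\ref{sec:complete} the set $S$ is redefined as the index set of edge-labels of the distinguishable subgraph $G$ rather than of all variables appearing in the matrix, but this does not affect the substance of the argument.
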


Let us consider $\binom{t}{2}k\times \binom{t}{2}k$ submatrix $M'$ of $M'_{k,(I_1,\dots,I_t)}$ obtained as follows:

\begin{enumerate}
\item Keep the top $\binom{t-1}{2}k\times \binom{t}{2}k$ submatrix $\mathcal{I}_k\otimes B_t$.
\item For every edge $\{j,j'\}$ in $G$ of label $e_s$, keep the row in $\big(\ma{C}_i:0\le i\le k-1\big)$ with nonzero entries $x_s^0,\dots,x_s^{k-1}$ in $\{j,j'\}$-labeled columns (this is well-defined according to Fact~\ref{fact:M'-1}).
\item Remove all other rows.
\end{enumerate}

\noindent As $G$ has $k(t-1)$ edges, precisely $k(t-1)$ rows are kept in step $2$. Therefore, $M'$ has $\binom{t-1}{2}k + k(t-1) = \binom{t}{2}k$ rows and is thus square.

Below we show that $M'$ has a nonzero determinant, thereby implying that $M'_{k,(I_1,\ldots,I_t)}$ and hence $M_{k,(I_1,\ldots,I_t)}$ are nonsingular, which is the claim of Conjecture~\ref{conjecture}, and thus establishing Theorem~\ref{thm:complete-2}. For that purpose, it is enough to show that there is a monomial that appears as a nonvanishing term in the determinant expansion of $M'$. To find such a monomial, we use the fact that $G$ is $k$-distinguishable.

Recall that for a subgraph $F$ of $G$, we use $v^F\in\mathbb{N}^S$ to denote the vector that counts the  edge labels in $F$, where for $s\in S$, $v_s^F$ is the number of edges with label $e_s$. Note that $v^F$ is a vector of length $|S|$ whose coordinates are indexed by elements in $S$. For spanning trees $T,T_0,\ldots,T_{k-1}$ of $G$, define

\begin{align}
x^T := \prod_{s\in S}^{} x_s^{v^T_s}\text{\qquad and\qquad }x^{(T_0,\dots,T_{k-1})}:=\prod_{i=0}^{k-1} (x^{T_i})^i.
\label{eq:complete-1}
\end{align}

\noindent Observe that, by the definition in \eqref{eq:signature}, we have $v^{(T_0,\ldots,T_{k-1})}_s=\sum_{i=0}^{k-1} i\cdot v_s^{T_i}$ for all $s\in S$. Hence, it follows from \eqref{eq:complete-1} that
\begin{align}\label{equ:new-1}
  x^{(T_0,\dots,T_{k-1})}=  \prod_{s\in S}^{} x_s^{v^{(T_0,\dots,T_{k-1})}_s}.
\end{align}

Since $G$ is $k$-distinguishable, there exists a tree decomposition $T_0,\dots,T_{k-1}$ such that for any other tree-assignment $T_0',\dots,T_{k-1}'$, we have that the signatures $v^{(T_0,\dots,T_{k-1})}\neq v^{(T_0',\dots,T_{k-1}')}$.
Thus, it follows by \eqref{equ:new-1} that the monomials $x^{(T_0,\dots,T_{k-1})}\neq x^{(T_0',\dots,T_{k-1}')}$.

\begin{claim}\label{clm:new-1}
Let $T_0,\ldots,T_{k-1}$ be spanning trees as defined above. Then, $x^{(T_0,\dots,T_{k-1})}$ appears as a nonvanishing term in the determinant expansion of $M'$.
\end{claim}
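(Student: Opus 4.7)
The plan is to expand $\det(M')$ via the Leibniz formula and identify the contributing monomials with tree-decompositions of $G$. The sparsity pattern of $M'$ is the driving force: the $i$-th copy of $\mathcal{B}_t$ sitting inside the top block $\mathcal{I}_k\otimes\mathcal{B}_t$ has nonzero entries only in the ``level-$i$'' columns $(i,\{j,j'\})$, whereas each bottom row (indexed by an edge $e=\{j,j'\}$ of $G$ labelled by some $e_s$) has exactly $k$ nonzero entries $x_s^{0},x_s^{1},\ldots,x_s^{k-1}$ in the $k$ columns $(0,\{j,j'\}),\ldots,(k-1,\{j,j'\})$. Consequently, any permutation $\sigma$ yielding a nonzero term in the Leibniz expansion must assign each edge $e$ of $G$ to a level $\ell(e)\in\{0,\ldots,k-1\}$, with $\sigma$ sending the $e$-row to the column $(\ell(e),\{j,j'\})$.

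Given such a level assignment, at each level $i$ the bottom rows occupy $|\{e:\ell(e)=i\}|$ level-$i$ columns, and the remaining level-$i$ columns must be matched to the $i$-th copy of $\mathcal{B}_t$. By Claim~\ref{claim-property-of-B_t}, the resulting $\binom{t-1}{2}\times\binom{t-1}{2}$ submatrix of $\mathcal{B}_t$ is nonsingular precisely when the removed column labels form an acyclic subgraph of $K_t$, equivalently when the level-$i$ edges form a spanning tree of $K_t$. Since $G$ has exactly $k(t-1)$ edges, each level must in fact receive exactly $t-1$ edges, so the nonzero permutations are in bijection with tree-decompositions $(T_0',\ldots,T_{k-1}')$ of $G$. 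A direct computation then shows that the product of the bottom entries equals $\prod_{i=0}^{k-1}\prod_{\text{edge of }T_i'\text{ labelled }e_s}x_s^{i}=\prod_{s\in S}x_s^{\sum_{i}i\cdot v^{T_i'}_s}=x^{(T_0',\ldots,T_{k-1}')}$, while the top-row contribution, after summing over all matchings inside the block-diagonal $\mathcal{I}_k\otimes\mathcal{B}_t$, factorizes as $\pm\prod_{i=0}^{k-1}\det(\mathcal{B}'_t(i))$, which is a nonzero scalar by Claim~\ref{claim-property-of-B_t}.

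Collecting like monomials, the coefficient of $x^{(T_0,\ldots,T_{k-1})}$ in $\det(M')$ is a signed sum of the above nonvanishing scalar factors, taken over all tree-decompositions $(T_0',\ldots,T_{k-1}')$ of $G$ whose signature coincides with $v^{(T_0,\ldots,T_{k-1})}$. The one subtlety, and the place where we expect the proof to be tightest, is possible cancellation between contributions from distinct tree-decompositions that share a signature; but this is exactly what $k$-distinguishability of $G$ rules out: by hypothesis $(T_0,\ldots,T_{k-1})$ is the unique tree-decomposition with its signature, so only one term survives and the coefficient is manifestly nonzero. Therefore $x^{(T_0,\ldots,T_{k-1})}$ appears as a nonvanishing term in the determinant expansion of $M'$, which is the statement of the claim.
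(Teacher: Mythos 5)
Your proposal is correct and follows essentially the same route as the paper: both arguments classify the nonzero Leibniz/transversal terms as tree-decompositions of $G$ via Claim~\ref{claim-property-of-B_t} and the edge count $k(t-1)$, identify the resulting monomial with the signature, and invoke $k$-distinguishability to rule out cancellation. (The paper organizes the expansion by "partial transversals" $Q$ rather than level assignments, but this is only a cosmetic difference.)
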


Proving Claim~\ref{clm:new-1} establishes the nonsingularity of $M'$ and thus, as discussed above, Theorem~\ref{thm:complete-2}. For that purpose, we identify the nonzero entries in the bottom $(t-1)k$ rows of $M'$ by tuples $(s,\{j,j'\},i)$, where $s\in S\cap (I_j\cap I_{j'})$, $\{j,j'\}\in\binom{[t]}{2}$, and $0\le i\le k-1$. Indeed, such a tuple corresponds to the entry $x_s^i$ in the $\{j,j'\}$-labeled column of $\mathcal{C}_i$.
It is worth mentioning that we used two types of labeling here: first, each column of $\mathcal{I}_k\otimes \mathcal{B}_t$ and $(\mathcal{C}_i:~0\le i\le k-1)$ is labeled by some edge $\{j,j'\}\in\binom{[t]}{2}$; second, each edge of $G$ is labeled by some variable $x_s,~s\in S$.

Let $U$ denote the set of all $(t-1)k^2$ nonzero entries in the bottom $(t-1)k$ rows of $M'$. For $Q\subset U$, let $M_Q$ denote the submatrix of $M'$ obtained by removing all of the rows and columns that contain some entry in $Q$. We say $Q$ is a \emph{partial transversal} if it contains exactly one element in each of the bottom $(t-1)k$ rows of $M'$, and no two share a column. By the definition of determinant,
\begin{align}\label{eq:new-2}
  \det(M') = \sum_{Q\text{ partial transversal}}^{} \pm \det(M_Q) \cdot \prod_{(s,\{j,j'\},i)\in Q}^{} x_s^i.
\end{align}

For a partial transversal $Q$ and $0\le i\le k-1$, let $Q_i$ be the subgraph of $G$ that corresponds to the tuples $(s,\{j,j'\},i)\in Q$, namely, $$Q_i=\left\{\{j,j'\}\in \binom{[t]}{2}:~\text{$(s,\{j,j'\},i)\in Q$ for some $s\in I_j\cap I_{j'}$}\right\}.$$ Note that we view each $Q_i$ as a labeled subgraph that preserves the labeling of $G$. Moreover, as $Q$ forms a partial transversal, each $Q_i$ is a simple graph with no multiple edges,  while $G$ could be a multigraph.

We have the following claim.

\begin{claim}\label{clm:new-2}
Let $Q\subseteq U$ be a partial transversal. Then, $\det(M_Q)\neq 0$ if and only if $Q_0,\ldots,Q_{k-1}$ form pairwise edge-disjoint spanning trees of $G$.
\end{claim}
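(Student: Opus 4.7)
The plan is to directly analyze the block structure of $M_Q$, reduce the determinant question to a per-block rank statement, and then invoke Claim~\ref{claim-property-of-B_t} to translate from linear algebra back to graph theory.

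First, I would observe that $M_Q$ is block diagonal. Since $Q$ is a partial transversal of the bottom $(t-1)k$ rows of $M'$, removing the rows of $Q$ deletes every bottom row and leaves only the top block $\mathcal{I}_k\otimes \mathcal{B}_t$; removing the columns of $Q$ then deletes, from the $i$-th column group (the $\binom{t}{2}$ columns of $\mathcal{C}_i$), exactly the $|Q_i|$ columns labelled by the edges of $Q_i$. Because $\mathcal{I}_k\otimes \mathcal{B}_t$ is block diagonal with $k$ copies of $\mathcal{B}_t$, letting $B_i$ denote the surviving $\binom{t-1}{2}\times(\binom{t}{2}-|Q_i|)$ submatrix of the $i$-th copy, we get $M_Q=\mathrm{diag}(B_0,\ldots,B_{k-1})$. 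Thus $\det(M_Q)\neq 0$ iff every $B_i$ is square and nonsingular, since a square block-diagonal matrix with any non-square block is automatically singular (the columns of a strictly wider block are linearly dependent, forcing the full matrix to drop rank).

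Next, I would apply Claim~\ref{claim-property-of-B_t} to each block. The block $B_i$ is square precisely when $|Q_i|=t-1$, and when square, Claim~\ref{claim-property-of-B_t} states that deleting the $Q_i$-labelled columns preserves the row rank of $\mathcal{B}_t$ iff $Q_i$ is acyclic as a subgraph of $K_t$. Since a $(t-1)$-edge acyclic subgraph of $K_t$ is exactly a spanning tree, each $B_i$ is square and nonsingular iff each $Q_i$ is a spanning tree of $K_t$.

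Finally, the edge-disjointness comes essentially for free from the partial-transversal structure. Each bottom row of $M'$ corresponds to a distinct edge of $G$, and $Q$ selects exactly one entry per bottom row, so lifting each $Q_i$ back through this correspondence partitions $E(G)$ among $Q_0,\ldots,Q_{k-1}$. When every $Q_i$ is a spanning tree we have $\sum_i|Q_i|=k(t-1)=|E(G)|$, and the partition exhibits $Q_0,\ldots,Q_{k-1}$ as pairwise edge-disjoint spanning trees of $G$; conversely, if they are pairwise edge-disjoint spanning trees then $|Q_i|=t-1$ and $Q_i$ is acyclic for every $i$, so by the previous step $\det(M_Q)\neq 0$. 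I do not anticipate a real obstacle: the argument is structural bookkeeping, and the only subtlety to handle carefully is the translation between $Q_i$ as a set of endpoint pairs in $K_t$ and its realisation as a sub-multiset of edges of the multigraph $G$ via the partial transversal, which is precisely what makes ``pairwise edge-disjoint spanning trees of $G$'' the correct multigraph notion.
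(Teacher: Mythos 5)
Your proof is correct and follows essentially the same route as the paper's: reduce $M_Q$ to the block-diagonal matrix $\mathrm{diag}(B_0,\ldots,B_{k-1})$ with $B_i$ obtained by deleting the $Q_i$-labelled columns of $\mathcal{B}_t$, invoke Claim~\ref{claim-property-of-B_t} to characterize nonsingularity by acyclicity, and get edge-disjointness from the one-entry-per-row property of the partial transversal. The only (harmless) differences are that you force each block to be square directly from the block-diagonal structure, whereas the paper deduces $|Q_i|=t-1$ from acyclicity plus the count of $k(t-1)$ removed columns, and that you spell out the ``if'' direction the paper omits.
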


\vspace{5pt}

\noindent {\it Proof of Claim~\ref{clm:new-2}.} By the definition of a partial transversal, $M_Q$ is a $\binom{t-1}{2}k\times \binom{t-1}{2}k$ square matrix with $k$ diagonal blocks, where for each $0\le i\le k-1$, the $(i+1)$-th diagonal block is obtained by removing from $\mathcal{B}_t$ all of the columns that are labeled by the edges of $Q_i$. As $\mathcal{B}_t$ has full row rank, $\det(M_Q)\neq 0$ if and only if each of the $k$ diagonal blocks also has full row rank.
By Claim~\ref{claim-property-of-B_t}, the $(i+1)$-th block has full row rank if and only if the labels of the removed columns form an acyclic graph on the vertices $[t]$, namely, $Q_i$ is acyclic.

Since $k(t-1)$ columns are removed in total, and an acyclic subgraph on $t$ vertices can have at most $t-1$ edges,  $\det(M_Q)\neq 0$ \black{happens if and only if} $Q_i$ is a spanning tree. Moreover, as elements in $Q$ form a partial transversal, we never have $(s,\{j,j,'\},i)$ and $(s,\{j,j'\},i')$ both in $Q$, for $i\neq i'$. It follows that the $Q_i$'s are also pairwise edge-disjoint.
\black{Hence, we have $\det(M_Q)\neq 0$ if and only if the $Q_i$'s form pairwise edge-disjoint spanning trees, as desired.}

\vspace{5pt}

We are now in a position to present the proof of Claim~\ref{clm:new-1}.

\vspace{5pt}

\noindent {\it Proof of Claim~\ref{clm:new-1}.} With the notation above, it is not hard to check by definition  that for a partial transversal $Q\subseteq U$ with $\det(M_Q)\neq 0$, \begin{align}\label{eq:new-3}
    \prod_{(s,\{j,j'\},i)\in Q}^{} x_s^i=\prod_{s\in S}\prod_{i=0}^{k-1} (x_s^{v_s^{Q_i}})^i=\prod_{s\in S} x_s^{v_s^{(Q_0,\ldots,Q_{k-1})}}=x^{(Q_0,\ldots,Q_{k-1})}.
\end{align}

It thus follows from \eqref{eq:new-2}, \eqref{eq:new-3}, and Claim~\ref{clm:new-2} that
\begin{align}
  \det(M')
  = \sum_{Q_0,\dots,Q_{k-1}\text{ edge-disj. spanning trees of $G$}}^{} \pm \det(M_{Q})\cdot x^{(Q_0,\dots,Q_{k-1})}.
\end{align}
It is clear that by the definition of $T_0,\ldots,T_{k-1}$,
in the above summation the monomial $x^{(T_0,\dots,T_{k-1})}$ appears exactly once, and hence appears as a nonvanishing term in the determinant expansion, completing the proof of Claim~\ref{clm:new-1} and thus the proof of Theorem~\ref{thm:complete-2}.
\end{proof}

\section{Future Directions and Open Questions}
In this work, we have shown the existence of near-optimally list-decodable RS codes in the large-radius parameter regime.  To do this, we have established a connection between the intersection matrix approach of \cite{shangguan2019combinatorial0} and tree packings.  Along the way, we also developed applications to the construction of strongly perfect hash matrices, and we have introduced a new hypergraph version of the Nash-Williams--Tutte theorem.
We highlight a few questions that remain open.

\paragraph{Can RS codes \em exactly \em achieve list-decoding capacity, \black{and if so, how large does the field size need to be?}}
%\mkw{Updated this question since it used to ask if we could remove the log factor, which FKS did already.}
In spite of the results and tools developed in this paper, we were not able to prove Conjecture \ref{conjecture-0}.
We hope that the avenue of attack discussed in Section~\ref{sec:conjectures} will be able to finish the job.
We note that the analogous question regarding the limits of  list-recoverability of RS codes also remains open.

\paragraph{Efficient list-decoding of RS codes?} We remark that, using a simple idea from \cite{shangguan2019combinatorial0} one can convert each of the existence results of RS codes reported in this paper into  an explicit code construction, although over a much larger field size. Hence, given such an explicit code construction, is it possible to decode it efficiently up to its guaranteed list-decoding radius? A similar question can be asked for list-recoverability.
We note that \cite{list-dec-discrete-log}, which shows that decoding RS codes much beyond the Johnson bound is likely hard in certain parameter regimes, does not apply to our parameter regime when the field size is large.

%\mkw{removed the question ``Are large finite fields needed'' since FKS showed that they are not.}

\paragraph{Generalizing the Nash-Williams--Tutte theorem to hypergraphs.} In an attempt to resolve Conjecture \ref{conjecture-0}, we present Conjecture~\ref{conj:weak}, a new graph-theoretic conjecture,  which can be viewed as a generalization of the  Nash-Williams--Tutte theorem to hypergraphs. In addition to being interesting on its own, resolving this conjecture would imply the existence of optimally list-decodable RS codes.

%\ryl{Deleted: In our work we were were able to prove this conjecture for some parameter settings ($C=O(\log t)$, in the language of the conjecture), and it would be very interesting to prove it for constant $C$.       }

\section*{Acknowledgments}

\noindent Zeyu Guo was supported by  NSF-BSF grant CCF-1814629 and 2017732 and the Milgrom family grant for Collaboration between the Technion and the University of Haifa. This work was done while he was at the University of Haifa. He wants to thank Noga Ron-Zewi for helpful discussions.

Ray Li is supported by NSF GRFP grant DGE-1656518 and by Jacob Fox's Packard Fellowship. He thanks Bruce Spang for helpful discussions.

Chong Shangguan is partially supported by the National Key Research and Development Program of China under Grant No. 2021YFA1001000,
the National Natural Science Foundation of China under Grant Nos. 12101364 and 12231014, and the Natural Science Foundation of Shandong Province under Grant
No. ZR2021QA005. Part of the work was done while he was a postdoc at Tel Aviv University and was supported by the Israel Science Foundation (ISF grant number 1030/15).

Itzhak Tamo is partially supported by the European Research Council (ERC grant number 852953), and by the Israel Science Foundation (ISF grant number 1030/15).

Mary Wootters is supported by NSF grant CCF-1844628 and  NSF-BSF grant CCF-1814629, and by a Sloan Research Fellowship.  She thanks Noga Ron-Zewi for helpful discussions.

We thank Karthik Chandrasekaran for helpful discussions about hypergraph packing theorems and for the reference \cite{CCV09}.

%\ryl{Hopefully this acknowledgement is okay.}

\newpage
\appendix

\black{\section{Perfect Hash Matrices and List-recoverable Codes}}

\black{\begin{claim}\label{app:matrices-codes}
    Let $M$ be an $n\times m$ $q$-ary matrix defined on the alphabet set $[q]$. Let $\mathcal{C}$ be a $q$-ary code of length $n$ formed by the columns of $M$. Then, $M$ is $t$-perfect hashing if and only if $\mathcal{C}$ is $(0,t-1,t-1)$-list-recoverable.
\end{claim}
\begin{proof}
    To prove the ``only if'' part, assume for contradiction that $\mathcal{C}$ is not $(0,t-1,t-1)$-list-recoverable. Then, by definition, there exist subsets $S_1,\ldots,S_n\subseteq [q]$, each of size $t-1$, such that
    \begin{align*}
        |\{c\in\mathcal{C}:c_i\in S_i,~1\le i\le n\}|\ge t.
    \end{align*}
    Let $c^1,\ldots,c^t$ be $t$ distinct codewords of $\mathcal{C}$ (or equivalently, columns of $M$) satisfying the condition above. As for each $1\le i\le n$, $\{c^1_i,\ldots,c^t_i\}$ consists of at most $t-1$ distinct symbols of $[q]$, the $i$-th row of $M$ can not separate $c^1,\ldots,c^t$. Hence no row of $M$ can separate $c^1,\ldots,c^t$, a contradiction.

    To prove the ``if'' part, assume for contradiction that $M$ is not $t$-perfect hashing. Then, there exist $t$ distinct columns of $M$ (or equivalently, codewords of $\mathcal{C}$), say $c^1,\ldots,c^t$, that cannot be separated by any row of $M$. In other words, for each $1\le i\le n$, $\{c^1_i,\ldots,c^t_i\}$ consists of at most $t-1$ distinct symbols of $[q]$. It is routine to check by definition that $c^1,\ldots,c^t$ violate the list-recoverability of $\mathcal{C}$, and we have arrived at the desired contradiction.
\end{proof}}

\section{Examples of Intersection Matrices}
\label{app:1}

\begin{example}[$4$-wise intersection matrices]\label{example-1}
Given four subsets $I_1,I_2,I_3,I_4\subseteq [n]$, the 4-wise intersection matrix $M_{k,(I_1,I_2,I_3,I_4)}$ is the  $(3k+\sum_{
1\le i<j\le 4}|I_i\cap I_j|)\times 6k$ variable matrix

\begin{equation*}\label{4-int-matrix-rep-1}
{\small\left(
  \begin{array}{cccccc}
    \mathcal{I}_k &      &     & -\mathcal{I}_k   & \mathcal{I}_k &  \\
        &  \mathcal{I}_k &     & -\mathcal{I}_k   &     & \mathcal{I}_k \\
        &      & \mathcal{I}_k &       & -\mathcal{I}_k & \mathcal{I}_k \\\hline
    V_k(I_1\cap I_2) &  &  &  &  &  \\
     & V_k(I_1\cap I_3) &  &  &  &  \\
     &  & V_k(I_2\cap I_3) &  &  &  \\
     &  &  & V_k(I_1\cap I_4) &  &  \\
     &  &  &  & V_k(I_2\cap I_4) &  \\
     &  &  &  &  & V_k(I_3\cap I_4) \\
  \end{array}
\right).}
\end{equation*}
\end{example}

\begin{example}\label{example-2} \black{Recall that the matrix $\mathcal{B}_t$ is defined below the proof of Lemma \ref{cycle-space-basis}.} For $k=2$, instead of considering the following $4$-wise intersection matrix
  \begin{equation*}
  \left(
  \begin{array}{c}
    \mathcal{B}_4\otimes \mathcal{I}_2  \\\hline

  {\rm diag}(V_k(I_i\cap I_j): \{i,j\}\in \binom{[4]}{2} \\
  \end{array}
\right)=
  \left(\begin{array}{cc|cc|cc|cc|cc|cc}
  1 &  &  &  &  &  & -1 &  & 1 &  &  &  \\
   & 1 &  &  &  &  &  & -1 &  & 1 &  &  \\\hline
   &  & 1 &  &  &  & -1 &  &  &  & 1 &  \\
   &  &  & 1 &  &  &  & -1 &  &  &  & 1 \\\hline
   &  &  &  & 1 &  &  &  & -1 &  & 1 &  \\
   &  &  &  &  & 1 &  &  &  & -1 &  & 1 \\\hline
  1 & x_1 &  &  &  &  &  &  &  &  &  &  \\
   &  & 1 & x_2 &  &  &  &  &  &  &  &  \\
   &  &  &  & 1 & x_3 &  &  &  &  &  &  \\
   &  &  &  &  &  & 1 & x_4 &  &  &  &  \\
   &  &  &  &  &  &  &  & 1 & x_5 &  &  \\
   &  &  &  &  &  &  &  &  &  & 1 & x_6
\end{array}\right),
\end{equation*}
\noindent we turn to prove the nonsingularity of
\begin{equation*}\label{hey}
\left(
  \begin{array}{c}
    \mathcal{I}_2\otimes \mathcal{B}_4  \\\hline

  (\mathcal{C}_i:0\le i\le 1) \\
  \end{array}
\right)=
\left(\begin{array}{cccccc|cccccc}
  1 &  &  & -1 & 1 &  &  &  &  &  &  &  \\
   & 1 &  & -1 &  & 1 &  &  &  &  &  &  \\
   &  & 1 &  & -1 & 1 &  &  &  &  &  &  \\\hline
   &  &  &  &  &  & 1 &  &  & -1 & 1 &  \\
   &  &  &  &  &  &  & 1 &  & -1 &  & 1 \\
   &  &  &  &  &  &  &  & 1 &  & -1 & 1 \\\hline
  1 &  &  &  &  &  & x_1 &  &  &  &  &  \\
   & 1 &  &  &  &  &  & x_2 &  &  &  &  \\
   &  & 1 &  &  &  &  &  & x_3 &  &  &  \\
   &  &  & 1 &  &  &  &  &  & x_4 &  &  \\
   &  &  &  & 1 &  &  &  &  &  & x_5 &  \\
   &  &  &  &  & 1 &  &  &  &  &  & x_6
\end{array}\right).
\end{equation*}
\end{example}

\newpage

{\small\bibliographystyle{alpha}
\bibliography{nash-williams-tutte}}

\end{document}